\theoremstyle{plain}
\newtheorem{theorem}{Theorem}
\theoremstyle{definition}
\newtheorem{defn}[theorem]{Definition} 
\newtheorem{exmp}[theorem]{Example}
\newtheorem{lem}[theorem]{Lemma}
\newtheorem{cor}[theorem]{Corollary}
\newtheorem{rmk}[theorem]{Remark}
\newtheorem{prop}[theorem]{Proposition}
\newcommand{\Ext}{\mathrm{Ext}}
\newcommand{\Hom}{\mathrm{Hom}}
\newcommand{\im}{\mathop{\mathrm{im}}}
\newcommand{\coker}{\mathop{\mathrm{coker}}}
\newcommand{\ann}{\mathop{\mathrm{Ann}}}
\newcommand{\Tor}{{\mathrm{Tor}}}
\newcommand{\ZZ}{\mathbb{Z}}
\newcommand{\id}{\mathrm{id}}
\newcommand{\Cech}{\check{\mathrm{C}}}
\newcommand{\CH}{\check{\mathrm{H}}}
\renewcommand{\H}{\mathrm{H}}
\newcommand{\E}{\mathrm{E}}
\newcommand{\Res}{\mathrm{Res}}
\newcommand{\e}{\mathrm{e}}
\renewcommand{\i}{\mathrm{i}}
\title{Homological invariants of\\Pauli stabilizer codes}
\author[1]{Blazej Ruba}
\author[2]{Bowen Yang}
\affil[1]{\textit{Institute of Theoretical Physics, Jagiellonian University, \protect\\
prof. Łojasiewicza 11, 30-348 Kraków, Poland, \protect\\
email: blazej.ruba@gmail.com}}
\affil[2]{\textit{California Institute of Technology, Pasadena, CA 91125, \protect\\
email: byyang@caltech.edu}}
\date{\today}
\begin{document}
\maketitle
\begin{abstract}
We study translationally invariant Pauli stabilizer codes with qudits of arbitrary, not necessarily uniform, dimensions. Using homological methods, we define a series of invariants called charge modules. We describe their properties and physical meaning. The most complete results are obtained for codes whose charge modules have Krull dimension zero. This condition is interpreted as mobility of excitations. We show that it is always satisfied for translation invariant 2D codes with unique ground state in infinite volume, which was previously known only in the case of uniform, prime qudit dimension. For codes all of whose excitations are mobile we construct a~$p$-dimensional excitation and a~$(D-p-1)$-form symmetry for every element of the $p$-th charge module. Moreover, we define a braiding pairing between charge modules in complementary degrees. We discuss examples which illustrate how charge modules and braiding can be computed in practice. 
\end{abstract}

\newpage

\tableofcontents

\newpage

\section{Introduction}

Pauli stabilizer codes are spin systems whose ground state (and excitations) are described by eigenequations for a set of mutually commuting operators, each of which is a tensor product of finitely many Pauli matrices, or generalizations thereof called clock and shift matrices. Initially these models were studied as a~class of quantum error-correcting codes \cite{gottesman1996class, calderbank1997quantum}. Due to their mathematical tractability and nontrivial properties, they have become popular also as exactly solvable models of exotic phases of quantum matter. Qubits (or qudits) are typically placed on sites of a $D$-dimensional square lattice. Perhaps the most famous example is the toric code \cite{Fault}.

One may ask which quantum phases can be realized as Pauli stabilizer codes. It has been shown \cite{bombin2014structure,haah2021classification} that for codes on $\ZZ^2$ lattice with prime-dimensional qudits, stacks of toric codes are the only nontrivial phases with a~unique ground state in infinite volume. The story is richer for qudits of composite dimension. Namely, it was shown \cite{Z_4} that every abelian anyon model which admits a~gapped boundary \cite{kapustin2011topological} may be represented by a~Pauli stabilizer code\footnote{During final revisions of this manuscript we have learned about \cite{ellison2023pauli}, where it is claimed that abelian anyon models without gapped boundaries can also be realized if one uses Pauli subsystem codes.}. It was conjectured that the list of models constructed therein is exhaustive (up to finite depth quantum circuits and stabilization). There was even work on algorithmic determination of the corresponding abelian anyon model given a stabilizer code \cite{liang2023extracting}. The proposed classification depends on several assumptions, one of which is that all local excitations in Pauli stabilizer codes are mobile and hence can be created at endpoints of string operators. In~this paper we prove this, extending earlier results for prime-dimensional qudits. Stabilizer codes are even richer for $D>2$ \cite{haah2011local} due to the existence of so-called fractons: local excitations which can not be moved in any direction by acting with local operators. All this shows that mathematical study of stabilizer codes is an interesting and nontrivial problem. It~is also closely related to classification of Clifford Quantum Cellular Automata \cite{schlingemann2008structure,haah2021clifford,shirley2022three}.











Let us recall how similar classification problems were handled in other areas, e.g.\ algebraic topology. Historically, researchers first discovered some basic invariants, such as Euler characteristic or fundamental group. Later they developed more systematic methods, e.g. axiomatic ~(co)homology and homotopy theory. In our situation, the module of topological point excitations \cite{haah2013commuting} and (for~the case $D=2$) topological spin and braiding \cite{haah2021classification} are the known invariants. It is natural to look for machinery that produces their generalizations. Hopefully it will allow researchers to make progress in this classification problem.

In this article, we develop such tools for translationally invariant Pauli stabilizer codes with qudits of arbitrary (perhap not even uniform) dimension placed on a~lattice described by a finitely generated abelian group $\Lambda$. This setup incorporates infinitely extended as well as finite spatial directions. Of course physics crucially depends on $D = \mathrm{rk}(\Lambda)$ (the number of independent infinite directions). We describe stabilizer codes by symplectic modules over a~group ring $R$ of $\Lambda$ and their Lagrangian (or more generally, isotropic) submodules. This is closely related to the approach developed in \cite{haah2013commuting}. In contrast to treatment therein, the emphasis is on modules with direct physical interpretation, rather than their presentations with maps from free modules\footnote{The latter approach is very useful in concrete computations. We prefer ours in general considerations.}. 

We propose a definition of modules $Q^p$ of charges of $p$-dimensional excitations (anyons, fractons, strings etc.) for every non-negative integer $p$. The construction of $Q^p$ uses standard homological invariants of modules. In the case of local excitations ($p=0$), our definition agrees with the known one. For general $p$, the physical interpretation of mathematically defined $Q^p$ is most justified under the assumption that all charge modules have zero Krull dimension (which we interpret as the requirement that the excitations are mobile). In this case, we define for every element of $Q^p$ an operator with $(p+1)$-dimensional support which creates an excitation on its boundary. This excitation is uniquely defined modulo excitations which can be created by $p$-dimensional operators. Its mobility (moving around with $p$-dimensional operators) is established. We show also that every element of $Q^p$ gives rise to a~$(D-p-1)$-form symmetry \cite{gaiotto2015generalized}. Furthermore, a~braiding pairing between $Q^p$ and $Q^q$ (with $p+q=D-2$) is defined and its basic properties (such as symmetry) are established. 

It is natural to expect that for codes with only mobile excitations, the underlying abelian groups of $Q^p$ and pairings between them described above are (a part of) data of some Topological Quantum Field Theory (TQFT), e.g.\ an abelian higher gauge theory\footnote{Say, with action $\frac{1}{4 \pi} \sum_{p=1}^{D-1} \sum_{i,j} K_p^{ij} \int A_p^i \mathrm{d} A_{D-p}^j$, where $A_p^i$ are $p$-form $\mathrm{U}(1)$ gauge fields and $K$ matrices are non-degenerate and satisfy $K_p^{ij} = (-1)^{p+1} K^{ji}_{D-p}$.}. Such correspondence exists in every example known to authors. Modules $Q^p$ have more structure, which does not seem to be captured by a TQFT: they are acted upon by the group of translations. In some cases this allows to distinguish models with the same topological order which are distinct as Symmetry Enriched Topological (SET) phases with translational symmetry.

Section 2 details the mathematical set-up of translationally invariant stabilizer codes in terms of commutative algebra. Rudiments of symplectic geometry over group rings of $\Lambda$ are laid out here. Section 3 makes the connection between topological excitations and the functor $\Ext$. Section 4 discusses operations on stabilizer codes, e.g.\ coarse-graining and stacking. In particular we prove that charge modules are invariant to coarse-graining and that they provide obstructions to obtaining a system from a lower dimensional one by stacking. Section 5 ventures a definition of mobility for excitations in any dimension. We also include a proof for the conjecture that in any 2D code with unique ground state, all excitations are mobile and can be created with string operators. In Section 6, we specialize to codes with only mobile excitations. It is shown that in this case charges may be described by cohomology classes of a certain \v Cech complex. We show how to obtain interesting operators and physical excitations from \v Cech cocycles. Moreover, we define braiding in terms of a cup product in the \v Cech complex and show that our proposal reduces to what is expected for $D=2$. Several examples are worked out in Section 7. Some known mathematical definitions and facts used in the main text are reviewed in appendices: Gorenstein rings in Appendix A, local cohomology in Appendix B and \v Cech cohomology in Appendix C.

Let us mention some problems which are left unsolved in this work. Firstly, results of Section 6 are restricted to so-called Lagrangian stabilizer codes such that charge modules have Krull dimension zero. We would like to remove some of these assumptions in the future, for example to treat models with spontaneous symmetry breaking or fractons. Secondly, we did not prove that braiding is non-degenerate. We expect that this can be done by relating braiding to Grothendieck's local duality, in which we were so far unsuccessful. We expect also that the middle-dimensional braiding admits a distinguished quadratic refinement for $D=4k+2$ (which is already known to be true for $D=2$ from previous treatments) and that it is alternating (rather than merely skew-symmetric) for $D=4k$. Thirdly, it is not known in general to what extent invariants we defined determine a~stabilizer code, presumably up to symplectic transformations (corresponding to Clifford Quantum Cellular Automata), coarse graining and stabilization. We hope that in the future a one-to-one correspondence between equivalence classes of stabilizer codes with only mobile excitations and some (abelian) TQFTs will be established.

\section{Stabilizer codes and symplectic modules}

In order to obtain homological invariants of a stabilizer code, we need to translate it to the language of modules. In this section we generalize \cite{haah2013commuting} to codes with arbitrary (prime or composite) qudit dimensions. Multiple qudits are placed on each lattice site. A~$d$-dimensional qubit is acted upon by shift and clock matrices $X,Z$, which satisfy
\begin{equation}
    XZ = \e^{\frac{2 \pi \i}{d} } ZX, \qquad X^d = Z^d = 1.
\end{equation}
For brevity, products of $Z$ and $X$ (possibly acting on finitely many different qudits) and phase factors will be called Pauli operators. Unlike \cite{haah2013commuting}, our framework does not require qudits in a model to have a uniform dimension. Instead, an array of qudits with various dimensions populates each lattice site. We let $n$ be a common multiple of dimensions of all qudits in a model.  

All rings are commutative with unity and $\ZZ_n$ is the ring $\ZZ / n \ZZ$.

\begin{defn} \label{def:grp_rings}
Let $n$ be a positive integer and $\Lambda$ a finitely generated abelian group. $\ZZ_n[ \Lambda]$ is the group ring of $\Lambda$ over $\ZZ_n$. When $n$ and $\Lambda$ are clear from the context, we denote $R = \ZZ_n [\Lambda]$. For $\lambda \in \Lambda$, we denote the corresponding element of $R$ by $x^\lambda$. If $r = \sum_{\lambda \in \Lambda} r_\lambda x^{\lambda}$ (with all but finitely many $r_\lambda \in \ZZ_n$ equal to zero), we~call $r_0$ the scalar part of $r$. Moreover, we let $\overline r = \sum_{\lambda \in \Lambda} r_\lambda x^{- \lambda}$. Operation $r \mapsto \overline r$ is called the antipode.
\end{defn}

\begin{exmp}
Suppose that $\Lambda = \ZZ^D$. Then $R$ is the ring of Laurent polynomials in $D$ variables $x_1, \dots, x_D$, corresponding to $D$ elements of a basis of $\ZZ^D$. A~general element of $R$ is a sum of finitely many monomials $x_1^{\lambda_1} \cdots x_D^{\lambda_D}$ with $\ZZ_n$ coefficients; exponents $\lambda_i$ are in $\ZZ$. Here we use the more economical notation in which such monomial is simply denoted $x^\lambda$. One may think of $\lambda$ as a multi-index.
\end{exmp}

For a lattice $\Lambda$ with the same array of qudits on each site, ring $R=\ZZ_n[\Lambda]$ describes certain basic operations on Pauli operators. An element $x^\lambda$ translates a Pauli operator on the lattice by $\lambda \in \Lambda$, while a scalar $m\in \ZZ_n$ raises a Pauli operator to $m$-th power. As $n$ is a common multiple of qudit dimensions, taking the $n$-th power of any Pauli operator gives a scalar. This action endows the collection $P$ of all local Pauli operators modulo overall phases with an $R$-module structure. We will sometimes call elements of $P$ operators for conciseness.  Addition in $P$ corresponds to composition of operators, which is commutative because we are disregarding phases. Specifically, if qudits on each site have respective dimensions $n_1, \dots, n_q$, then $P$ is isomorphic to the module $\bigoplus_{j=1}^q \ZZ_{n_j}[\Lambda]^{\oplus 2}$. It is not a free module unless $n=n_1=n_2=\dots=n_q$. We will see that it nevertheless shares some homological properties of free modules, which is important in the study of invariants. In most cases, understanding of proofs is not necessary to read the remainder of the paper.

We will also define an antipode-sesquilinear symplectic form $\omega: P \times P \rightarrow R$ on $P$, which captures commutation relations satisfied by Pauli operators. More precisely, if $T,T'$ are Pauli operators corresponding to elements $p,p' \in P$, then
\begin{equation}
    TT' = \exp \left( \frac{2 \pi \i}{n} \omega(p,p')_0 \right) T'T.
\end{equation}
Thus it is the scalar part of $\omega$ which has most direct physical interpretation, whereas $\omega(p,p')$ encodes also commutation rules of all translates of $T,T'$. Algebraically $\omega$ is much more convenient to work with, essentially because the scalar part map $R \to \ZZ_n$ is not a homomorphism of $R$-modules. Sesquilinearity of $\omega$ implies that $\omega(x^\lambda p, x^\lambda p') = \omega(p,p')$,\ which is the statement that commutation relations of Pauli operators are translationally invariant.

Stabilizer code is a collection of eigenequations for a state\footnote{Here we regard $\Psi$ as a vector in some Hilbert space on which Pauli operators act. This Hilbert space is not specified a priori. However, one can reinterpret the eigenequations as $\Psi^{\mathrm{pre}}(T)=1$, where $\Psi^{\mathrm{pre}}$ is a state on the algebra of local operators. The Hilbert space and $\Psi$ may be then constructed from $\Psi^{\mathrm{pre}}$ using the GNS construction.} $\Psi$ of the form
\begin{equation}
    T \Psi = \Psi,
    \label{eq:stabilizer}
\end{equation}
where $T$ are Pauli operators (with phase factors chosen so that $1$ is in the spectrum of $T$). If such equations are imposed for two operators $T,T'$, then existence of solutions requires that $p,p' \in P$ satisfy $\omega(p,p')_0 =0$. In a translationally invariant code, the same condition has to be satisfied for all translates of $T,T'$, i.e. $\omega(p,p')=0$. It follows that the images in $P$ of operators defining the code generate a submodule $L$ with $\omega|_L\equiv 0$. Such submodules of $(P,\omega)$ are called isotropic. The stabilizer code determines a unique state if $L$ is Lagrangian, i.e. it is isotropic and every $p \in P$ such that $\omega(p,p')=0$ for every $p' \in L$ is in $L$. Throughout the article, we refer to codes with this property as \textit{Lagrangian codes}.

In quantum computation, one wishes to use spaces of states satisfying \eqref{eq:stabilizer} to store and protect information. When error occurs, there are violations of eigenequations called syndromes. On the other hand, one may also think of solutions of \eqref{eq:stabilizer} as ground states of a certain Hamiltonian. Then syndromes are also regarded as energetic excitations. Excited states are described by
\begin{equation} \label{excitation}
    T \Psi = \e^{\frac{2 \pi \i}{n} \varphi(p)} \Psi,
\end{equation}
where $p \in L$ corresponds to $T$ and $\varphi$ is a $\ZZ_n$-linear functional. The excitation is local (supported in a finite region) if $\varphi(x^{\lambda} p)$ vanishes for all but finitely many $\lambda \in \Lambda$.

The discussion above establishes a correspondence between a translationally invariant Pauli stabilizer code and an isotropic submodule of $(P,\omega)$. This correspondence will allow us to tap into the power of homological algebra. For the rest of the section, we develop the right hand side of the correspondence with additional generality.

\begin{defn}
For a $\ZZ_n$-module $M$, let $M^{\#} = \Hom_{\ZZ_n}(M, \ZZ_n)$. If $M$ is an $R$-module, then $M^{\#}$ is made an $R$-module as follows: $r \varphi(m) = \varphi(rm)$ for $r \in R$, $\varphi \in M^{\#}$ and $m \in M$. Moreover, we can define
\begin{equation}
 M_\Lambda^{\#} = \{ \varphi \in M^{\#} \, | \, \forall m \in M \ \  \varphi(x^\lambda m) \neq 0 \text{ for finitely many } \lambda \in \Lambda  \}.
\end{equation}
\end{defn}

\begin{defn}
Let $M$ be an $R$-module. We define $\overline M$ to be the $R$-module which coincides with $M$ as an abelian group, but with antipode $R$-action. In other words, if $m \in M$, we~denote the corresponding element of $\overline M$ by $\overline m$ and put
    $x^\lambda \overline m = \overline{x^{- \lambda} m}$.
Furthermore, we let $M^* = \Hom_R (\overline M ,R)$. $M^*$ is identified with the module of $\ZZ_n$-linear maps $f : M \to R$ such that $f (rm) = \overline r f(m)$ for $r \in R$ and $m \in M$.
\end{defn}

The following Lemma provides a useful description of $M^*$.

\begin{lem} \label{lem:dual_description}
Let $M$ be an $R$-module. The map taking $\varphi \in M^*$ to its scalar part $\varphi_0 \in \overline M^{\#}_{\Lambda}$ (i.e. $\varphi_0(m) = \varphi(m)_0$ for $m \in M$) is an $R$-module isomorphism $M^* \cong \overline M^{\#}_{\Lambda}$ with inverse given by the formula
\begin{equation}
    \varphi(m) = \sum_{\lambda \in \Lambda} \varphi_0(x^\lambda m) x^\lambda.
\end{equation}
\end{lem}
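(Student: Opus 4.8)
The plan is to exhibit the stated formula as a two-sided inverse of the scalar-part map, checking along the way that both maps are well defined and $R$-linear. Essentially everything follows from one computation, so I would organize the proof around it.

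First I would record the key coefficient identity: for $\varphi \in M^*$, $m \in M$ and $\lambda \in \Lambda$ one has $\varphi(m)_\lambda = \varphi(x^\lambda m)_0$. This is immediate from antipode-sesquilinearity, since $\varphi(x^\lambda m) = \overline{x^\lambda}\,\varphi(m) = x^{-\lambda}\varphi(m)$, and the coefficient of $x^0$ on the right-hand side is precisely the coefficient of $x^\lambda$ in $\varphi(m)$. Summing over $\lambda$ (a finite sum, as $\varphi(m) \in R$) gives $\varphi(m) = \sum_{\lambda}\varphi(x^\lambda m)_0\, x^\lambda = \sum_\lambda \varphi_0(x^\lambda m)\, x^\lambda$, which is exactly the claimed inverse formula; in particular $\varphi \mapsto \varphi_0$ is injective. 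The same identity shows that for fixed $m$ the scalars $\varphi_0(x^\lambda m)$ vanish for all but finitely many $\lambda$, so that $\varphi_0$ genuinely lies in $\overline M^{\#}_\Lambda$; and $\varphi_0$ is $\ZZ_n$-linear because $\varphi$ is and the scalar-part map $R \to \ZZ_n$ is.

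Next I would verify surjectivity by checking that the formula produces an element of $M^*$. Given $\psi \in \overline M^{\#}_\Lambda$, set $\varphi(m) = \sum_\lambda \psi(x^\lambda m)\, x^\lambda$; the local finiteness built into the definition of $\overline M^{\#}_\Lambda$ makes this a finite sum, so $\varphi(m) \in R$, and $\varphi$ is clearly $\ZZ_n$-linear. Re-indexing, $\varphi(x^\mu m) = \sum_\lambda \psi(x^{\lambda+\mu}m)\, x^\lambda = x^{-\mu}\varphi(m) = \overline{x^\mu}\,\varphi(m)$, so $\varphi \in M^*$, and taking the coefficient of $x^0$ gives $\varphi_0 = \psi$. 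Combined with the previous paragraph, the two maps are mutually inverse bijections.

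Finally I would check that the scalar-part map is a homomorphism of $R$-modules. By $\ZZ_n$-linearity it suffices to treat $r = x^\lambda$, and then both $(x^\lambda\varphi)_0(m)$ and $(x^\lambda\cdot\varphi_0)(m)$ unwind --- using the $R$-action on $M^*$ and on $\overline M^{\#}$ together with $\varphi(x^{-\lambda}m) = x^\lambda\varphi(m)$ --- to $\bigl(x^\lambda\varphi(m)\bigr)_0$. The computations are all elementary; the only thing requiring care is bookkeeping of the antipode twist, namely keeping straight the $R$-actions on $M$, on $\overline M$, and on the dual modules, and matching the local finiteness condition in the definition of $\overline M^{\#}_\Lambda$ with the finite support of elements of $R$. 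I do not anticipate a genuine obstacle.
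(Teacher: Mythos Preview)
The paper states this lemma without proof, so there is no argument to compare against. Your proposal is correct and complete: the coefficient identity $\varphi(m)_\lambda = \varphi(x^\lambda m)_0$ is exactly the right organizing principle, and your checks of well-definedness, surjectivity via the explicit inverse, and $R$-linearity (with the appropriate antipode bookkeeping on $\overline M^{\#}$) are all accurate.
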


\begin{defn}
We denote the total ring of fractions of $R$ by $K$.
\end{defn}

Please see Appendix \ref{sec:Gorenstein} for some definitions referred to below.

\begin{lem} \label{lem:RGor}
$R$ is a Gorenstein ring of dimension $\mathrm{rk}(\Lambda)$, the free rank of $\Lambda$. Its total ring of fractions $K$~is a QF ring.
\end{lem}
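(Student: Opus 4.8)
The plan is to present $R$ as a complete intersection over a regular ring, which yields the Gorenstein property at once, and then to deduce the statement about $K$ from the standard properties of Gorenstein rings recalled in Appendix~A. First I would reduce to a prime-power coefficient ring: writing $n = \prod_i p_i^{e_i}$, the Chinese remainder theorem gives $\ZZ_n \cong \prod_i \ZZ/p_i^{e_i}$ and hence $R \cong \prod_i (\ZZ/p_i^{e_i})[\Lambda]$. A finite product of commutative rings is Gorenstein if and only if every factor is (the local rings of a product are the local rings of the factors), and the Krull dimension of a product is the maximum of the dimensions of the factors; so it suffices to treat $R = (\ZZ/p^e)[\Lambda]$ for a single prime $p$.

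Next I would use the structure theorem $\Lambda \cong \ZZ^D \oplus \ZZ/m_1 \oplus \cdots \oplus \ZZ/m_k$ with $D = \mathrm{rk}(\Lambda)$, which presents $R$ as
\[
R \;\cong\; Q\big/\big(p^e,\ y_1^{m_1}-1,\ \ldots,\ y_k^{m_k}-1\big), \qquad Q := \ZZ[x_1^{\pm 1},\ldots,x_D^{\pm 1},\,y_1,\ldots,y_k].
\]
Here $Q$ is regular, being a localization of a polynomial ring over $\ZZ$. I would then check that the displayed generators form a $Q$-regular sequence: $p^e$ is a nonzerodivisor in $Q$, and modulo $p^e, y_1^{m_1}-1,\ldots,y_{j-1}^{m_{j-1}}-1$ the polynomial $y_j^{m_j}-1$ is monic in $y_j$ over the coefficient ring and hence again a nonzerodivisor. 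Therefore $R$ is a complete intersection: for every prime $\mathfrak q$ of $Q$ containing the sequence, $Q_{\mathfrak q}$ is regular local and the sequence stays regular in it, so $R_{\mathfrak q}$ is a local complete intersection and thus Gorenstein; since this holds at every prime of $R$, $R$ is Gorenstein. For the dimension it is cleaner not to track the regular sequence but to observe that $p$ is nilpotent in $R$, so $\dim R = \dim R_{\mathrm{red}} = \dim \mathbb{F}_p[\Lambda]$, and $\mathbb{F}_p[\Lambda] = \mathbb{F}_p[x_1^{\pm 1},\ldots,x_D^{\pm 1}]\big[\ZZ/m_1\oplus\cdots\oplus\ZZ/m_k\big]$ is a finite free module over the Laurent polynomial ring $\mathbb{F}_p[x_1^{\pm 1},\ldots,x_D^{\pm 1}]$, which has Krull dimension $D$; hence $\dim R = D = \mathrm{rk}(\Lambda)$.

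For the last assertion, recall that a Gorenstein ring is Cohen–Macaulay and hence has no embedded associated primes, i.e.\ $\mathrm{Ass}(R) = \mathrm{Min}(R)$. The total ring of fractions $K$ is the localization of $R$ at the complement of $\bigcup_{\mathfrak p \in \mathrm{Ass}(R)} \mathfrak p$, so the primes of $K$ are exactly the primes of $R$ contained in some associated prime; as the associated primes are minimal, these are precisely the finitely many, pairwise incomparable minimal primes of $R$. Thus $\dim K = 0$, and $K$, being Noetherian of Krull dimension $0$, is Artinian. Moreover $K$ is a localization of the Gorenstein ring $R$, hence itself Gorenstein. An Artinian Gorenstein commutative ring is the same thing as a QF ring, so $K$ is QF.

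I expect the only subtleties here to be bookkeeping rather than conceptual: since $\ZZ/p^e$ is neither reduced nor a domain for $e \geq 2$, one must phrase everything through localizations and the regular-sequence/complete-intersection machinery rather than through integral domains, and one should check carefully that the reduction to $R_{\mathrm{red}}$ used in the dimension count is legitimate (it is, because the kernel of $R \to \mathbb{F}_p[\Lambda]$ is $pR$, which is nilpotent). Everything else is a direct application of the standard facts about Gorenstein rings listed in Appendix~A (passing to polynomial/Laurent extensions, quotients by nonzerodivisors, localizations, and the equivalence of zero-dimensional Gorenstein with QF).
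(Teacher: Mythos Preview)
Your proof is correct but follows a genuinely different route from the paper's. The paper first shows directly that $\ZZ_n[\Lambda_1]$ is QF for finite $\Lambda_1$ by exhibiting the Frobenius pairing $(r,r')\mapsto (rr')_0$ and the resulting isomorphism $R\cong R^{\#}$, and then obtains the Gorenstein property for general $\Lambda$ by invoking Lemma~\ref{lem:Gor_pol} (polynomial/Laurent extensions of Gorenstein rings are Gorenstein). For $\dim(K)=0$ the paper argues concretely: it decomposes $\ZZ_n[\Lambda_1]$ into Artin local factors $A_i$, identifies the unique minimal prime of each $A_i[\Lambda_2]$, and uses McCoy's theorem to show that this minimal prime is exactly the set of zerodivisors. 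Your argument instead presents $R$ as a complete intersection over a regular ring and, for $K$, appeals to the general fact that Gorenstein (hence Cohen--Macaulay) rings have no embedded primes, so the total ring of fractions is zero-dimensional. Your approach is cleaner and more uniform, relying only on standard structural theorems (complete intersections are Gorenstein; CM rings have $\mathrm{Ass}=\mathrm{Min}$), at the cost of importing facts not recorded in Appendix~\ref{sec:Gorenstein}. The paper's approach stays closer to the lemmas actually stated in the appendix and, as a bonus, records the explicit self-duality $R\cong R^{\#}$ for finite $\Lambda$, which resurfaces later (e.g.\ in Proposition~\ref{prop:tors_Rhash}). One small remark on your dimension computation: your chain $\dim R=\dim R_{\mathrm{red}}=\dim \mathbb{F}_p[\Lambda]$ is fine numerically, but note that $\mathbb{F}_p[\Lambda]$ need not itself be reduced when $\Lambda$ has $p$-torsion; the equality you actually use, and correctly justify in your last paragraph, is $\dim R=\dim(R/pR)$ with $pR$ nilpotent.
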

\begin{proof}
$\ZZ_n$ is a QF ring by Baer's test. Thus $(-)^{\#}$ is an exact functor and $R^{\#}$ is an injective $R$-module, as $\Hom_R(-,R^{\#}) = (-)^{\#}$. Now suppose that $\mathrm{rk}(\Lambda)=0$. Then $R$ is finite, so $\dim(R)=0$. We have a~bilinear form
\begin{equation}
    R \times R \ni (r,r') \mapsto (rr')_0 \in \ZZ_n
\end{equation}
which yields an isomorphism $R \cong R^{\#}$. Hence $R$ is a QF ring. 

Next, let $\Lambda$ be arbitrary. We can split $\Lambda = \Lambda_1 \oplus \Lambda_2$, where $\Lambda_1$ is finite and $\Lambda_2$ free. We have $R = \ZZ_n[\Lambda_1][\Lambda_2]$, which is a Laurent polynomial ring in $D = \mathrm{rk}(\Lambda)$ variables over the QF ring $\ZZ_n[\Lambda_1]$. By Lemmas \ref{lem:Gor_loc_dim}, \ref{lem:Gor_pol}, $R$ is a~Gorenstein ring. Standard dimension theory shows that $\dim(R)=D$.

Invoking Lemma \ref{lem:Gor_loc_dim}, $K$ is also Goreinstein. It remains to show that $\dim(K)=0$. As $\ZZ_n[\Lambda_1]$ is Artinian, it is the product $\prod_{i=1}^s A_i$ of some Artin local rings $A_i$. Thus $R = \prod_{i=1}^s A_i [\Lambda_2]$. An element of $R$ is a zero-divisor if and only if its component in some $A_i[\Lambda_2]$ is a zero divisor, so $K = \prod_{i=1}^s K_i$, where $K_i$ is the total ring of fractions of $A_i[\Lambda_2]$. We will show that each $K_i$ is Artinian. 

Let $\mathfrak m_i$ be the maximal ideal of $A_i$. Then $\mathfrak m_i$ is nilpotent and every element of $A_i \setminus \mathfrak m_i$ is a unit. Clearly $\mathfrak m_i[\Lambda_2]$ is a prime ideal in $A_i[\Lambda_2]$. We claim that it is the unique minimal prime. Indeed, if $\mathfrak q \subset A_i[\Lambda_2]$ is a prime ideal, then $\mathfrak q \cap A_i$ is prime in $A_i$, hence equal to $\mathfrak m_i$. Thus $\mathfrak m_i[\Lambda] \subset \mathfrak q$ and the claim is established. Next, McCoy theorem \cite{McCoy} and nilpotence of $\mathfrak m_i$ imply that $\mathfrak m_i[\Lambda_2]$ is the set of zero divisors of $A_i[\Lambda_2]$, so every non-minimal prime ideal of $A_i[\Lambda_2]$ is killed in $K_i$. 
\end{proof}

Recall that an element of $R$ is said to be regular if it is not a zero divisor. The torsion submodule of an $R$-module M is the set of all elements of $M$ annihilated by a regular element of $R$. Equivalently, it is the kernel of the natural map $M \to M \otimes_R K$. If $M$ coincides with its torsion submodule, it is called a torsion module. If the torsion submodule of $M$ is $0$, then $M$ is said to be torsion-free. Quotient of any module by its torsion submodule is torsion-free.

\begin{lem} \label{lem:torsion_criteria}
Let $M$ be an $R$-module.
\begin{enumerate}
    \item $M$ is torsion if and only if $M^* = 0$.
    \item If $M$ is finitely generated, then $M$ is torsion-free if and only if it can be embedded in some free module
    $R^t$.
\end{enumerate}
\end{lem}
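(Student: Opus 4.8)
The plan is to make the nontrivial direction of the second statement the heart of the argument and to bootstrap everything else from it together with Lemma \ref{lem:RGor}, which tells us that $K$ is a QF ring. Two of the four implications are immediate. For ``$M$ torsion $\Rightarrow M^* = 0$'': if $\varphi \in M^* = \Hom_R(\overline M, R)$ and $\overline m \in \overline M$, choose a regular $r \in R$ with $r\overline m = 0$ (possible since $\overline M$ is torsion whenever $M$ is); as the antipode is a ring automorphism, $\overline r$ is regular, and $\overline r\,\varphi(\overline m) = \varphi(r\overline m) = 0$ forces $\varphi(\overline m) = 0$, so $\varphi = 0$. For ``$M \hookrightarrow R^t \Rightarrow M$ torsion-free'': the map $R \to K$ is injective, hence so is $R^t \to K^t$, so $R^t$ and all its submodules are torsion-free.

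For the crux, ``$M$ finitely generated and torsion-free $\Rightarrow M$ embeds in some $R^t$'': since the torsion submodule of $M$ is the kernel of $M \to M \otimes_R K$, torsion-freeness gives an embedding $M \hookrightarrow V := M \otimes_R K$, and $V$ is finitely generated over $K$. Here I would use the standard fact that finitely generated modules over a QF ring embed into free modules: $V^* = \Hom_K(V, K)$ is finitely generated over the Noetherian ring $K$, a surjection $K^t \twoheadrightarrow V^*$ dualizes to an embedding $V^{**} \hookrightarrow K^t$, and $V \hookrightarrow V^{**}$ because $K$ is an injective cogenerator — any nonzero $v \in V$ admits a functional not killing it, obtained by extending (by injectivity of $K$) a nonzero map out of the cyclic submodule $Kv$. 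This yields $M \hookrightarrow K^t$, and I would conclude by clearing denominators: the images of a finite generating set of $M$ involve only finitely many elements of $K$ as coordinates, so some regular $s_0$ satisfies $s_0 M \subseteq R^t$, and multiplication by $s_0$, being injective on $M \subseteq K^t$ since $s_0$ is a unit in $K$, identifies $M$ with a submodule of $R^t$.

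The last implication, ``$M^* = 0 \Rightarrow M$ torsion'' (for $M$ finitely generated), I would prove contrapositively by localizing $\Hom$: since $\overline M$ is finitely generated, hence finitely presented, over the Noetherian ring $R$, localization commutes with $\Hom$, giving $M^* \otimes_R K \cong \Hom_K(\overline M \otimes_R K, K)$; if $M$ is not torsion then $\overline M \otimes_R K \neq 0$ (equivalently $M \otimes_R K \neq 0$, since twisting by the antipode is an additive equivalence), so its $K$-dual is nonzero as $K$ is a cogenerator, and therefore $M^* \neq 0$.

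The only real obstacle is the structural input cited above — that finitely generated modules over the QF ring $K$ are torsionless — which is precisely where one needs $K$ to be QF (self-injective and a cogenerator), not merely Gorenstein of dimension zero; the remaining steps are routine bookkeeping with the antipode, localization, and denominators. One caveat to flag: the backward direction of the first statement genuinely requires $M$ finitely generated — for example $M = K$ over $R = \ZZ_n[\ZZ]$ with $n$ prime is torsion-free but not torsion while $M^* = 0$ — so this hypothesis should be read into that statement.
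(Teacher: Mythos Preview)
Your argument is correct and follows essentially the same route as the paper: for part~2 you embed $M$ into $M \otimes_R K$, then into a free $K$-module via the QF property of $K$ (the paper simply cites Lemma~\ref{lem:QF_mods}, whereas you spell out the torsionless-via-double-dual argument), and finally clear denominators; for the nontrivial direction of part~1 you localize $\Hom$ at $K$ and use that a nonzero $K$-module has nonzero $K$-dual, exactly as the paper does. Your caveat about part~1 is well taken and worth recording: the paper's proof silently uses finite generation of $M$ when asserting that $\Hom$ commutes with localization, and your counterexample $M = K$ over $R = \ZZ_p[\ZZ]$ shows that without this hypothesis the implication $M^* = 0 \Rightarrow M$ torsion genuinely fails.
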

\begin{proof}
1. $\impliedby$: Let $M^* =0$. Then $ \Hom_K(\overline M \otimes_R K, K) = M^* \otimes_R K = 0$ (since $\Hom$ commutes with localization), so $\overline M \otimes_R K=0$ by Lemmas \ref{lem:RGor}, \ref{lem:QF_mods}. Thus $\overline M$, and hence $M$, is a torsion module.

2. $\implies$: As $M$ is torsion-free, it embeds in $M \otimes_R K$, which in turn embeds in $K^t$ by Lemmas \ref{lem:RGor} and \ref{lem:QF_mods}. Let $e_1, \dots, e_t$ be a basis of $K^t$. Since $M$ is finitely generated, there exists a regular element $d \in R$ such that the image of $M$ in $K^t$ is contained in the $R$-linear span of $d^{-1} e_1, \dots, d^{-1} e_t$, which is $R$-free.
\end{proof}

\begin{defn} \label{def:presymp}
Quasi-symplectic module is a finitely generated $R$-module $M$ equipped with a $\ZZ_n$-bilinear pairing $\omega : M \times M \to R$ satisfying
\begin{enumerate}
    \item $\omega(m',rm) = r  \omega(m',m)=\omega(\overline{r}m',m)$ for $r \in R$ and $m,m' \in M$,
    \item $\omega(m,m)_0 = 0$ for every $m \in M$,
    \item the map $\flat :  M \ni m \mapsto \omega( \cdot , m ) \in M^*$ is injective.
\end{enumerate}
We write $M^* / M $ for the quotient of $M^*$ by the image of $\flat$. If $M^* / M =0$, i.e.\ $\flat$ is an isomorphism, $(M , \omega)$ is called a symplectic module. If $N$ is another quasi-symplectic module, an isomorphism $f : M \to N$ is said to be symplectic if $\omega(f(m),f(m')) = \omega(m,m')$ for every $m,m' \in M$.
\end{defn}

\begin{prop} \label{prop:quasi_symp}
Let $M$ be a quasi-symplectic module. Then
\begin{enumerate}
    \item $M$ is torsion-free.
    \item For every $m,m' \in M$ we have $\omega(m,m') = - \overline{\omega(m',m)}$.
    \item $M^* / M$ is a torsion module. More generally, if~$N \subset M$ is a submodule, the cokernel of $M \ni m \mapsto \left. \omega( \cdot , m ) \right|_N \in N^*$ is a torsion module. 
\end{enumerate}
\end{prop}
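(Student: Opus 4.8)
The plan is to dispatch the three parts in order; parts~(1) and~(2) are immediate from the axioms of Definition~\ref{def:presymp} together with Lemma~\ref{lem:dual_description}, while part~(3) requires localizing at $K$. \textbf{For~(1)}, first note that $M^* = \Hom_R(\overline M, R)$ is torsion-free: if $r \in R$ is regular and $rf = 0$ for some $f \in M^*$, then $r\,f(m) = 0$ in $R$ for every $m$, so $f(m) = 0$ since $r$ is not a zero divisor, hence $f = 0$. By axiom~(3) the map $\flat$ embeds $M$ into the torsion-free module $M^*$, and a submodule of a torsion-free module is torsion-free, which gives~(1).

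\textbf{For~(2)}, applying axiom~(2) to $m$, $m'$ and $m+m'$ and expanding bilinearly gives $\omega(m,m')_0 = -\,\omega(m',m)_0$. Fix $m' \in M$ and set $g(m) = -\,\overline{\omega(m',m)}$. From axiom~(1) and the fact that the antipode is a ring homomorphism, $g(rm) = -\,\overline{r\,\omega(m',m)} = \overline r\, g(m)$, so $g \in M^*$; its scalar part equals $-\,\omega(m',m)_0 = \omega(m,m')_0$, which is exactly the scalar part of $\flat(m') = \omega(\cdot,m')$. By Lemma~\ref{lem:dual_description} an element of $M^*$ is determined by its scalar part, so $g = \flat(m')$, i.e.\ $\omega(m,m') = -\,\overline{\omega(m',m)}$.

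\textbf{For~(3)}, I would pass to $K$. Since $K$ is $R$-flat, a module is torsion precisely when it vanishes after $\otimes_R K$, so it suffices to show $\rho\otimes_R K$ is surjective, where $\rho : M \to N^*$, $\rho(m) = \omega(\cdot,m)|_N$. Factor $\rho = \mathrm{res}_N \circ \flat$ with $\mathrm{res}_N : M^* = \Hom_R(\overline M,R) \to \Hom_R(\overline N,R) = N^*$ the restriction along $\overline N \hookrightarrow \overline M$. Applying $\Hom_R(-,R)$ to $0 \to \overline N \to \overline M \to \overline M/\overline N \to 0$ realizes $\coker(\mathrm{res}_N)$ as a submodule of $\Ext^1_R(\overline M/\overline N,R)$, and tensoring with $K$ kills this group because $K$ is self-injective by Lemma~\ref{lem:RGor}; hence $\mathrm{res}_N\otimes_R K$ is onto. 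For $\flat\otimes_R K$: it is injective since $\flat$ is and $K$ is flat, and it is an injection of finitely generated modules over the Artinian ring $K$,
\[
M\otimes_R K \hookrightarrow \Hom_R(\overline M,R)\otimes_R K \cong \Hom_K(\overline M\otimes_R K,\,K).
\]
Now $\ell_K(\overline M\otimes_R K) = \ell_K(M\otimes_R K)$, because the antipode is a ring automorphism of $R$ inducing an automorphism of $K$ and localization identifies $M\mapsto\overline M$ with the corresponding, length-preserving, operation on $K$-modules; moreover $\Hom_K(-,K)$ preserves length over the QF ring $K$, so source and target of the displayed injection have equal finite length, forcing $\flat\otimes_R K$ to be an isomorphism. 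Composing, $\rho\otimes_R K$ is surjective, so $\coker(\rho)$ is torsion; the special case $N = M$ yields that $M^*/M$ is torsion.

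The main obstacle is surjectivity of $\flat\otimes_R K$: injectivity of $\flat$ only encodes non-degeneracy of $\omega$ on one side, and over a ring with zero divisors this does not formally produce the other side, so one genuinely has to localize at the Artinian ring $K$ and argue by equality of composition lengths, checking that the antipode twist and the duality $\Hom_K(-,K)$ are both length-preserving. Everything else reduces to direct manipulation of the axioms together with Lemmas~\ref{lem:dual_description} and~\ref{lem:RGor}.
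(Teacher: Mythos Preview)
Your proof is correct. Parts~(1) and~(2) match the paper's argument essentially verbatim (the paper does~(2) by evaluating at $m'' = x^\lambda m$ instead of invoking Lemma~\ref{lem:dual_description}, but this is the same computation).

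For part~(3) you take a genuinely different route. Both you and the paper factor $\rho$ through $\flat$ and $\mathrm{res}_N$, and both handle $\mathrm{res}_N$ via the vanishing of $\Ext^1_K(\overline{M/N}\otimes_R K,K)$. The difference is in showing that $\flat' := \flat\otimes_R K$ is surjective. The paper applies $\Hom_K(-,K)$ to the injection $\flat'$ to obtain a surjection $\flat'' : \overline M\otimes_R K \to (M\otimes_R K)^\vee$, and then uses the skew-symmetry of part~(2) to identify $\flat''$ with $\flat'$ up to antipode and sign; surjectivity of $\flat'$ follows. You instead argue by composition length over the Artinian ring $K$: the antipode twist and the duality $\Hom_K(-,K)$ both preserve length, so the injection $\flat'$ is between $K$-modules of equal finite length and hence is an isomorphism. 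Your argument is slightly more general in that it never uses the alternating condition~(2) (it would go through for any sesquilinear form with injective $\flat$), while the paper's argument is more intrinsic to the symplectic structure and avoids appealing to length-preservation facts about $\Hom_K(-,K)$ over QF rings.
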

\begin{proof}
1. If $m \in M$ is a torsion element, then $m \in \ker(\flat) = 0$.

2. For $r \in R$, let $r_\lambda$ be the coefficient of $x^\lambda \in R$. One has $r_\lambda = (r x^{- \lambda})_0$.
Plugging into $\omega(m,m)_0=0$ an element $m=m'+m''$ gives
\begin{equation}
\omega(m',m'')_0 =- \omega(m'',m')_0.
\end{equation}
Taking $m'' = x^\lambda m$ yields
    $\omega(m',m)_{-\lambda} = - \omega(m,m')_\lambda$,
establishing the claim.

3. For this part, we denote the functor $\Hom_K(-,K)$ by $(-)^\vee$. We have a~short exact sequence
\begin{equation}
    0 \to M \otimes_R K \xrightarrow{\flat'} M^* \otimes_R K \to (M^*/M) \otimes_R K \to 0,
    \label{eq:MstarM_quotient_frac}
\end{equation}
where $\flat' = \flat \otimes_R \id_K$. We may identify $M^* \otimes_R K$ with $(\overline M \otimes_R K)^\vee$, since Hom commutes with localization. As $\flat'$ is injective, the homomorphism
\begin{equation}
\flat'' = \Hom_K (\flat',K) : (\overline M \otimes_R K)^{\vee \vee}   \to (M \otimes_R K)^\vee
\end{equation}
is surjective by Lemma \ref{lem:RGor}. We identify $(\overline M \otimes_R K)^{\vee \vee} = \overline M \otimes_R K$, by Lemma~\ref{lem:QF_mods}. Using 2. we find that for any $m,m' \in M$ and $k,k' \in K$:
\begin{equation}
    \flat''(\overline m \otimes k ) (m' \otimes k') = - \overline{\flat'(m \otimes k) (\overline{m'} \otimes k') }.
\end{equation}
It follows at once that also $\flat'$ is surjective. Thus the short exact sequence \eqref{eq:MstarM_quotient_frac} yields $(M^* / M) \otimes_R K =0 $, i.e. $M^*/M$ is a torsion module.

Now let $N \subset M$ be a submodule. We have a short exact sequence
\begin{equation}
    0 \to N \to M \to M/N \to 0.
\end{equation}
Applying $*$ gives
\begin{equation}
    0 \to (M/N)^* \to M^* \to N^* \to \Ext^1_R(\overline{M/N},R).
\end{equation}
We have $\Ext^1_R(\overline{M/N},R) \otimes_R K = \Ext^1_K(\overline{M/N} \otimes_R K,K) =0$, since $\Ext$ commutes with localization. Hence $\Ext^1_R(\overline{M/N},R)$ is a torsion module. As both homomorphisms $M \to M^*$ and $M^* \to N^*$ have torsion cokernel, so does their composition. 
\end{proof}

\begin{cor} \label{cor:dim0_qsymp}
Suppose that $\Lambda$ is finite and let $M$ be a quasi-symplectic $R$-module. Then $M$ is symplectic. More generally, if $N \subset M$ is a submodule, then the map $M \ni m \mapsto \left. \omega(\cdot , m ) \right|_N \in N^*$ is surjective.
\end{cor}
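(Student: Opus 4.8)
The plan is to deduce this directly from Proposition \ref{prop:quasi_symp}, part 3, once we observe that over $R$ the only torsion module is the zero module when $\Lambda$ is finite. First I would recall that $\Lambda$ finite means $\mathrm{rk}(\Lambda)=0$, so $R = \ZZ_n[\Lambda]$ is a finite ring, in particular Artinian. In an Artinian ring every regular element is a unit: multiplication by a non-zero-divisor is an injective $R$-linear endomorphism of the finite-length module $R$, hence surjective. Consequently $K = R$, and an $R$-module is torsion precisely when every element is annihilated by a unit, i.e.\ precisely when it is zero. (Alternatively one can read this off Lemma \ref{lem:RGor}, which already notes that when $\mathrm{rk}(\Lambda)=0$ the ring $R$ is itself QF, forcing $K=R$.)

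With this in hand the general statement is immediate. Proposition \ref{prop:quasi_symp}, part 3, asserts that for any submodule $N \subset M$ the cokernel of the map $M \ni m \mapsto \omega(\cdot,m)|_N \in N^*$ is a torsion $R$-module; by the previous paragraph this cokernel vanishes, so the map is surjective. Specializing to $N = M$ recovers $\flat : M \to M^*$, which is therefore surjective; it is injective by axiom 3 of Definition \ref{def:presymp}, hence an isomorphism, so $(M,\omega)$ is symplectic.

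I do not expect a genuine obstacle here: all the content is carried by Proposition \ref{prop:quasi_symp} and Lemma \ref{lem:RGor}. The only step worth stating explicitly is the elementary observation that a finite (Artinian) ring equals its total ring of fractions, so that ``torsion'' collapses to ``zero''; everything else is a two-line bookkeeping argument.
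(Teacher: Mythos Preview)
Your argument is correct and is essentially the paper's own proof: the paper also observes that when $\Lambda$ is finite $R$ is a finite ring, so every regular element is a unit and hence torsion modules are zero, then applies Proposition \ref{prop:quasi_symp}, part 3. Your write-up just spells out the Artinian/finite-length reasoning a bit more explicitly.
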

\begin{proof}
The assumption guarantees that $R$ is a finite ring, so every element is either a zero-divisor or invertible. Hence torsion modules vanish.
\end{proof}

\begin{defn}
Let $M$ be a quasi-symplectic module and $N \subset M$ a submodule. Set $N^\omega = \{ m \in M \, | \, \left. \omega(\cdot , m) \right|_N =0 \}$. $N$ is called isotropic (resp. Lagrangian) if $N \subset N^\omega$ (resp. $N = N^\omega$).
\end{defn}

Recall that the saturation $\mathrm{sat}_M(N)$ of a submodule $N \subset M$ is defined to be the module of all $m \in M$ such that $rm \in N$ for some regular element $r \in R$. If $N = \mathrm{sat}_M(N)$, then $N$ is said to be saturated (in $M$). This is equivalent to $M/N$ being torsion-free. 

\begin{prop} \label{prop:perp_properties}
Let $N$ be a submodule of a quasi-symplectic module $M$.
\begin{enumerate}
    \item If $L \subset N$, then $N^\omega \subset L^\omega$.
    \item $N^\omega = N^{\omega \omega \omega}$.
    \item If $N$ is isotropic, $N \subset N^{\omega \omega} \subset N^\omega$, with equalities if $N$ is Lagrangian.
    \item $N^{\omega \omega} = \mathrm{sat}_M(N)$.
    \item $N^{\omega \omega} / N $ is a torsion module.
\end{enumerate}
\end{prop}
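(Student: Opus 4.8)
The plan is to dispatch parts 1--3 by formal diagram-chasing and to obtain parts 4 and 5 together by base change to the total ring of fractions $K$.

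Part 1 is immediate: a functional vanishing on $N$ vanishes on every submodule of $N$. For part 3 I would first observe that $N\subseteq N^{\omega\omega}$ holds with no hypothesis on $N$: if $n\in N$ and $m'\in N^\omega$ then by skew-symmetry (Proposition \ref{prop:quasi_symp}(2)) $\omega(m',n)=-\overline{\omega(n,m')}=0$, so $n\in N^{\omega\omega}$. Applying part 1 to the inclusion $N\subseteq N^\omega$ (valid when $N$ is isotropic) gives $N^{\omega\omega}=(N^\omega)^\omega\subseteq N^\omega$, and when $N$ is Lagrangian all inclusions in $N\subseteq N^{\omega\omega}\subseteq N^\omega=N$ collapse to equalities. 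Part 2 then follows formally: the general inclusion $N\subseteq N^{\omega\omega}$ gives, via part 1, $N^{\omega\omega\omega}\subseteq N^\omega$, while substituting $N^\omega$ for $N$ in $N\subseteq N^{\omega\omega}$ gives $N^\omega\subseteq N^{\omega\omega\omega}$.

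For part 4 the strategy is to reduce the identity $N^{\omega\omega}=\mathrm{sat}_M(N)$ to a statement about $K$-modules via two observations. First, any submodule of the form $L^\omega$ is saturated in $M$: by Proposition \ref{prop:quasi_symp}(3) the map $M\to L^*$, $m\mapsto\omega(\cdot,m)|_L$, has kernel $L^\omega$ and torsion cokernel, so $M/L^\omega$ embeds in $L^*=\Hom_R(\overline L,R)$, which is torsion-free because no regular element of $R$ annihilates a nonzero element of $R$; in particular $N^{\omega\omega}$ is saturated. Second, since $M$ is torsion-free (Proposition \ref{prop:quasi_symp}(1)) it embeds in $M_K:=M\otimes_R K$, and unwinding definitions shows $\mathrm{sat}_M(N)=M\cap N_K$ inside $M_K$, so any saturated submodule $P\subseteq M$ equals $M\cap P_K$. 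Combining the two observations, it suffices to prove $(N^{\omega\omega})_K=N_K$. To compute inside $M_K$, extend $\omega$ to a sesquilinear pairing $\omega_K$; by flatness of $K$ over $R$ together with Proposition \ref{prop:quasi_symp} the pairing $\omega_K$ is skew and $\flat_K\colon M_K\to (M_K)^*=\Hom_K(\overline{M_K},K)$ is an isomorphism (using that $\Hom$ commutes with localization and that $M^*/M$ is torsion), i.e. $\omega_K$ is non-degenerate, and $K$ is Artinian QF by Lemma \ref{lem:RGor}. Tensoring $0\to N^\omega\to M\to N^*$ with $K$ identifies $(N^\omega)_K$ with the $\omega_K$-perp of $N_K$ inside $M_K$, so $(N^{\omega\omega})_K$ is the double $\omega_K$-perp of $N_K$. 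It remains to show double $\omega_K$-perp is the identity on arbitrary $K$-submodules $W\subseteq M_K$: here $M_K$ has finite length, the restriction $(M_K)^*\to W^*$ is surjective ($K$ is self-injective), so $M_K/W^{\omega_K}\cong W^*$, and since $\Hom_K(\overline{(-)},K)$ is a length-preserving duality on finitely generated $K$-modules (Lemma \ref{lem:QF_mods}) one gets $\mathrm{length}(W^{\omega_K})=\mathrm{length}(M_K)-\mathrm{length}(W)$; applying this twice gives $\mathrm{length}((W^{\omega_K})^{\omega_K})=\mathrm{length}(W)$, and since $W\subseteq (W^{\omega_K})^{\omega_K}$ always, equality follows. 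Taking $W=N_K$ proves part 4. Part 5 is then immediate: $N^{\omega\omega}/N=\mathrm{sat}_M(N)/N$ is torsion by the definition of saturation.

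The main obstacle is organizing part 4: proving $N^{\omega\omega}=\mathrm{sat}_M(N)$ directly over $R$ is awkward, whereas splitting it into ``perp-submodules are saturated'' and ``saturated submodules are recovered from their $K$-span'' reduces everything to a finite-length double-perp count over the QF ring $K$. The delicate point in that reduction is that $(-)^\omega$ commutes with localization at the regular elements of $R$; this rests on $\Hom$ commuting with localization for finitely generated modules and on the torsion-cokernel statement of Proposition \ref{prop:quasi_symp}(3).
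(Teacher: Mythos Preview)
Your treatment of parts 1--3 and 5 matches the paper's (the paper simply says 1--3 are ``simple manipulations'' and derives 5 from 4), and your argument there is correct.

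For part 4 you take a genuinely different route. The paper proves the nontrivial inclusion $N^{\omega\omega}\subset \mathrm{sat}_M(N)$ by a direct witness construction: given $m\in M\setminus N$ (with $N$ saturated), it forms $L=N+Rm$, picks a nonzero functional in $(L/N)^*$ (using that $L/N$ is torsion-free, hence $(L/N)^*\neq 0$ by Lemma~\ref{lem:torsion_criteria}), lifts it to $\varphi'\in L^*$ with $\varphi'|_N=0$ and $\varphi'(m)\neq 0$, and then invokes Proposition~\ref{prop:quasi_symp}(3) to produce $z\in M$ with $\omega(\cdot,z)|_L=r\varphi'$ for some regular $r$, so $z\in N^\omega$ and $\omega(m,z)\neq 0$. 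You instead (i) observe that every $L^\omega$ is saturated because $M/L^\omega$ embeds in the torsion-free module $L^*$, (ii) reduce to showing $(N^{\omega\omega})_K=N_K$ inside $M_K$, and (iii) settle this by a length count over the Artinian QF ring $K$, using that $\flat_K$ is an isomorphism and that $\Hom_K(\overline{(-)},K)$ is an exact length-preserving duality. This is correct; the one step worth stating cleanly is that $(L^\omega)_K$ equals the $\omega_K$-perp of $L_K$, which you get exactly as you say from flatness of $K$ and the identification $(L^*)_K\cong (L_K)^*$ for finitely generated $L$.

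The paper's argument is shorter and stays entirely over $R$, needing only the single input Proposition~\ref{prop:quasi_symp}(3). Your argument is more structural: it isolates ``$\omega$-perps are saturated'' and ``over $K$ double perp is the identity'' as reusable facts, and makes explicit that the whole obstruction to $N^{\omega\omega}=N$ is torsion. Both ultimately rest on the same ingredient, namely that $\flat$ (and its relative version $M\to N^*$) has torsion cokernel.
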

\begin{proof}
Points 1.-3. are established with simple manipulations. 

4. Clearly $\mathrm{sat}_M(N) \subset N^{\omega \omega}$. For the reverse inclusion, it is sufficient to check that if $N$ is saturated then $N^{\omega \omega} \subset N$. Let $m \in M \setminus N$. We will construct $z \in N^\omega$ such that $\omega(m,z) \neq 0$, showing that $m \not \in N^{\omega \omega}$. 

Put $L=N +Rm$. As $N$ is saturated, $L/N$ is torsion-free. Hence by Lemma \ref{lem:torsion_criteria} we have $(L/N)^* \neq 0$. Choose a nonzero element $\varphi \in (L/N)^*$. Composing with the quotient map $L \to L/N$ we obtain $\varphi' \in L^*$ which annihilates $N$ and $\varphi'(m) \neq 0$. By Proposition \ref{prop:quasi_symp} there exists a regular element $r \in R$ and $z \in M$ such that $r \varphi' = \left. \omega( \cdot , z ) \right|_{L} $. The element $z$ is as desired.

5. follows immediately from 4. and the definition of $\mathrm{sat}_M(N)$.
\end{proof}

\begin{cor} \label{cor:zero_dim_omom}
Suppose that $\Lambda$ is finite and let $M$ be a quasi-symplectic $R$-module. Then for every submodule $N \subset M$ we have $N^{\omega \omega} = N$.
\end{cor}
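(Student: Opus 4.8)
The plan is to reduce the statement entirely to Proposition~\ref{prop:perp_properties}.4, which identifies $N^{\omega\omega}$ with the saturation $\mathrm{sat}_M(N)$. Thus it suffices to prove that every submodule $N\subset M$ is saturated, i.e.\ that $\mathrm{sat}_M(N)=N$; the inclusion $N\subset\mathrm{sat}_M(N)$ being automatic, the content is the reverse inclusion.

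First I would observe, exactly as in the proof of Corollary~\ref{cor:dim0_qsymp}, that finiteness of $\Lambda$ forces $R=\ZZ_n[\Lambda]$ to be a finite ring, hence Artinian, so every element of $R$ is either a zero-divisor or a unit. In particular, a regular element of $R$ (one that is not a zero-divisor) is invertible. Now take $m\in\mathrm{sat}_M(N)$: by definition there is a regular $r\in R$ with $rm\in N$, and since $r$ is a unit we get $m=r^{-1}(rm)\in N$. Hence $\mathrm{sat}_M(N)\subset N$, and combining with Proposition~\ref{prop:perp_properties}.4 gives $N^{\omega\omega}=\mathrm{sat}_M(N)=N$.

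I do not anticipate any genuine obstacle here; the only point requiring care is citing the right earlier facts — Proposition~\ref{prop:perp_properties}.4 for the identification with the saturation, and the Artinian structure of $R$ from (the proof of) Lemma~\ref{lem:RGor}. As an alternative route one could instead invoke Proposition~\ref{prop:perp_properties}.5, which says $N^{\omega\omega}/N$ is a torsion module, together with the fact that over a finite ring every torsion module vanishes; but the saturation argument above is the most direct, and mirrors the structure of Corollary~\ref{cor:dim0_qsymp}.
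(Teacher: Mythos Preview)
Your argument is correct and essentially the same as the paper's: the paper simply writes ``As in Corollary~\ref{cor:dim0_qsymp}'', i.e.\ $R$ is finite so regular elements are units and torsion modules vanish, whence $N^{\omega\omega}/N=0$ by Proposition~\ref{prop:perp_properties}.5. Your route via Proposition~\ref{prop:perp_properties}.4 and $\mathrm{sat}_M(N)=N$ is an equivalent rephrasing (indeed 5 is an immediate consequence of 4), and you even note this alternative yourself.
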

\begin{proof}
As in Corollary \ref{cor:dim0_qsymp}.
\end{proof}

\begin{prop}
Let $M$ be a quasi-symplectic module and $N \subset M$ an isotropic submodule.
\begin{enumerate}
    \item $N^{\omega \omega}/N$ is the torsion module of $N^\omega/N$.
    \item There exists an induced quasi-symplectic module structure on $N^\omega/N^{\omega \omega}$.
    \item There exists a canonical embedding $M/N^\omega \to N^*$ with torsion cokernel.
    \item There exists a canonical embedding $N^\omega \to (M/N)^*$ with torsion cokernel. If $M$ is symplectic, this embedding is an isomorphism.
\end{enumerate}
\end{prop}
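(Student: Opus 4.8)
The plan is to deduce all four parts from Propositions \ref{prop:quasi_symp} and \ref{prop:perp_properties} together with the skew-symmetry $\omega(m,m') = -\overline{\omega(m',m)}$, treating the map $\flat : M \to M^*$ as the central object.

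For part 1, I would recall from Proposition \ref{prop:perp_properties}(4) that $N^{\omega\omega} = \mathrm{sat}_M(N)$, and note that $\mathrm{sat}_M(N)/N$ is precisely the torsion submodule of $M/N$: an element $m+N$ is torsion iff $rm \in N$ for some regular $r$, which is the definition of $\mathrm{sat}_M(N)$. Since $N$ is isotropic, Proposition \ref{prop:perp_properties}(3) gives $N \subset N^{\omega\omega} \subset N^\omega$, so the torsion submodule of $N^\omega/N$ consists of those $m+N$ with $m \in \mathrm{sat}_M(N) \cap N^\omega = N^{\omega\omega}$; that is, it equals $N^{\omega\omega}/N$. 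This is immediate once the saturation description is in hand.

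For part 2, set $\overline M = N^\omega/N^{\omega\omega}$ (finitely generated since $R$ is Noetherian, being a localized quotient of a polynomial ring over $\ZZ_n$) and define $\omega(m+N^{\omega\omega}, m'+N^{\omega\omega}) := \omega(m,m')$ for $m,m' \in N^\omega$. The key point is well-definedness: if $m'' \in N^{\omega\omega}$ and $m' \in N^\omega$, then because $N^\omega = N^{\omega\omega\omega} = (N^{\omega\omega})^\omega$ by Proposition \ref{prop:perp_properties}(2) we get $\omega(m'',m') = 0$, and by skew-symmetry also $\omega(m',m'') = 0$; hence the formula is independent of representatives. Conditions 1 and 2 of Definition \ref{def:presymp} are inherited verbatim from $\omega$. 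For condition 3 (injectivity of $\flat$ for the induced form): if $\omega(m',m) = 0$ for all $m' \in N^\omega$, then $m \in (N^\omega)^\omega = N^{\omega\omega}$, so the class of $m$ in $\overline M$ vanishes. For part 3, the map $m + N^\omega \mapsto \omega(\cdot,m)|_N$ is well-defined and injective by the very definition of $N^\omega$, and $R$-linear since $\omega(\cdot,rm)|_N = r\,\omega(\cdot,m)|_N$; its cokernel, being the cokernel of $M \to N^*$, is torsion by Proposition \ref{prop:quasi_symp}(3).

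Part 4 is the most delicate. I would use the identification $N^\omega = \flat^{-1}\big((M/N)^*\big)$, where $(M/N)^* \hookrightarrow M^*$ is the injection dual to $M \twoheadrightarrow M/N$ (injectivity of this map is the left-exactness in $0 \to (M/N)^* \to M^* \to N^*$ obtained by applying $(-)^*$ to $0 \to N \to M \to M/N \to 0$). The map $\tau : N^\omega \to (M/N)^*$, $m \mapsto \big(m'+N \mapsto \omega(m',m)\big)$, is then literally the corestriction of $\flat|_{N^\omega}$, hence injective because $\flat$ is. Since $\flat$ is injective, $\tau(N^\omega) = \flat(N^\omega) = \flat(M) \cap (M/N)^*$ inside $M^*$, so $\mathrm{coker}(\tau) = (M/N)^*/\big(\flat(M) \cap (M/N)^*\big)$ embeds into $M^*/\flat(M) = M^*/M$, which is a torsion module by Proposition \ref{prop:quasi_symp}(3); hence $\mathrm{coker}(\tau)$ is torsion. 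If $M$ is symplectic then $\flat$ is an isomorphism, so it restricts to an isomorphism $N^\omega \xrightarrow{\ \sim\ } (M/N)^*$, i.e. $\tau$ is an isomorphism. The main obstacle is the bookkeeping: verifying that the dual injection $(M/N)^* \hookrightarrow M^*$ identifies $\flat^{-1}\big((M/N)^*\big)$ with $N^\omega$ and that $\tau$ is exactly $\flat$ followed by this injection; once that is settled, the torsion claim reduces to the one-line second-isomorphism-theorem observation about the quotient.
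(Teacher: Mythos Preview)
Your proof is correct and follows essentially the same route as the paper's own argument: part 1 via $N^{\omega\omega}=\mathrm{sat}_M(N)$ from Proposition \ref{prop:perp_properties}, part 2 by noting the kernel of $\omega|_{N^\omega}$ is $N^{\omega\omega}$, part 3 directly from Proposition \ref{prop:quasi_symp}(3), and part 4 by dualizing $0\to N\to M\to M/N\to 0$, identifying $\flat(N^\omega)=\flat(M)\cap (M/N)^*$, and embedding the cokernel into $M^*/M$. The only cosmetic issue is your reuse of the symbol $\overline M$ for $N^\omega/N^{\omega\omega}$, which clashes with the paper's established notation for the antipode module; otherwise your write-up simply spells out details the paper leaves implicit.
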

\begin{proof}
1. follows from Proposition \ref{prop:perp_properties}. The bilinear form $\omega$ on $M$ restricted to $N^\omega$ has kernel $N^{\omega \omega}$, which establishes 2. By Proposition \ref{prop:quasi_symp}, we have a map $M \to N^*$ with torsion cokernel. Its kernel is clearly $N^\omega$, proving 3. 

4. Dualizing the short exact sequence $0 \to N \to M \to M/N \to 0$ gives
\begin{equation}
    0 \to (M/N)^* \to M^* \to N^* ,
\end{equation}
so $(M/N)^*$ may be identified with the set of $\varphi \in M^*$ with trivial restriction to $N$. Next we note that $\flat(N^\omega) = \flat(M) \cap (M/N)^*$, so
\begin{equation}
    (M/N)^* / \flat(N^\omega) = (M/N)^* / \left( \flat(M) \cap (M/N)^* \right) \subset M^* / M.
\end{equation}
\end{proof}

\begin{defn} \label{def:qf}
Let $M$ be an $R$-module. We say that $M$ is quasi-free if there exists a $\ZZ_n$-module $M_0$ such that $M \cong M_0 \otimes_{\ZZ_n} R$. We will also interpret elements of $M_0 \otimes_{\ZZ_n} R$ as polynomials in $x^\lambda$ with coefficients in $M_0$, thus writing $M_0 \otimes_{\ZZ_n} R = M_0 [\Lambda]$. 
\end{defn}

\begin{rmk}
Let $M, M_0$ be as in Definition \ref{def:qf}. Then $M_0$ is determined by $M$ up to isomorphism. $M$ is finitely generated over $R$ if and only if $M_0$ is finitely generated over $\ZZ_n$. Moreover, an $R$-module $M$ is free if and only if it is quasi-free and free as a $\ZZ_n$-module.
\end{rmk}

\begin{prop} \label{prop:standard_symp}
Let $P_0$ be a finitely generated $\ZZ_n$-module equipped with a~bilinear form $\omega_0 : P_0 \times P_0 \to \ZZ_n$ which~is
\begin{itemize}
    \item alternating: $\omega_0(p_0,p_0)=0$ for every $p_0 \in P_0$,
    \item nondegenerate: $\omega_0(\cdot, p_0) =0$ implies $p_0 = 0$.
\end{itemize}
Let $P = P_0[\Lambda]$ and define a $\ZZ_n$-bilinear form $\omega : P \times P \to R$ by
\begin{equation}
    \omega(p_0 x^\lambda , p_0' x^{\mu}) = \omega_0(p_0,p_0') x^{\mu - \lambda}, \qquad \text{for }  p_0, p_0' \in P_0, \, \lambda, \mu \in \Lambda.
\end{equation}
Then $(P, \omega)$ is a symplectic module.
\end{prop}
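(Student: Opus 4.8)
The plan is to verify the three axioms in Definition~\ref{def:presymp} one at a time, plus the surjectivity of $\flat$, using the explicit formula for $\omega$ and the quasi-freeness of $P$. Write a general element of $P = P_0[\Lambda]$ as a finite sum $p = \sum_\lambda p_\lambda x^\lambda$ with $p_\lambda \in P_0$; by $\ZZ_n$-bilinearity the given formula extends uniquely to $\omega(p,p') = \sum_{\lambda,\mu} \omega_0(p_\lambda, p'_\mu)\, x^{\mu-\lambda}$.

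First I would check sesquilinearity (axiom 1). Linearity in the second argument over $R$ is immediate from the formula, since multiplying $p' x^\mu$ by $x^\nu$ shifts $\mu \mapsto \mu+\nu$ and hence multiplies $x^{\mu-\lambda}$ by $x^\nu$; antilinearity in the first argument follows the same way, with $x^{-\lambda}$ producing $\overline r$ when one tracks signs. Next, axiom 2: for $p = \sum_\lambda p_\lambda x^\lambda$ one computes $\omega(p,p)_0 = \sum_{\mu - \lambda = 0} \omega_0(p_\lambda, p'_\mu)\big|_{p'=p} = \sum_\lambda \omega_0(p_\lambda,p_\lambda)$, and each term vanishes because $\omega_0$ is alternating. (One should note that $\omega(p,p)$ itself is generally nonzero; only its scalar part vanishes, which is all the axiom demands — this matches the remark in the text that the scalar part is what carries the physical meaning.)

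The substantive point is axiom~3 together with the Lagrangian/symplectic conclusion, i.e.\ that $\flat : P \to P^*$ is an isomorphism. Using Lemma~\ref{lem:dual_description}, $P^* \cong \overline P^{\#}_\Lambda$ via the scalar part, so it suffices to show that $p \mapsto \omega(p,\cdot)_0 \in \overline P^{\#}_\Lambda$ is bijective. Unwinding definitions, $\omega(p, p' x^\mu)_0 = \sum_\lambda \omega_0(p_\lambda, p')\,[\mu = \lambda]$-type bookkeeping shows the scalar part functional is controlled entirely by the nondegenerate form $\omega_0$ on each graded piece $P_0$. Concretely, $P^*$ is quasi-free with $P_0$-part $P_0^{\#} = \Hom_{\ZZ_n}(P_0,\ZZ_n)$, and $\flat$ is the $R$-linear (more precisely antipode-sesquilinear) extension of the $\ZZ_n$-isomorphism $P_0 \to P_0^{\#}$, $p_0 \mapsto \omega_0(p_0,\cdot)$, which is an isomorphism precisely because $\omega_0$ is nondegenerate on the finite $\ZZ_n$-module $P_0$ (here one invokes that $\ZZ_n$ is QF, so a nondegenerate alternating form on a finitely generated module induces an isomorphism to the dual — cf.\ the argument in Lemma~\ref{lem:RGor} that $(rr')_0$ identifies $R$ with $R^{\#}$). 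Hence $\flat$ is an isomorphism, $P$ is torsion-free in particular, and $(P,\omega)$ is symplectic.

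I expect the main obstacle to be purely bookkeeping: keeping the exponent shifts and the antipode straight so that axiom~1 comes out as antipode-sesquilinear with the correct placement of $\overline{(-)}$, and making the identification $P^* \cong P_0^{\#}[\Lambda]$ precise enough that "nondegeneracy of $\omega_0$ implies $\flat$ is onto" is genuinely a statement about the finitely generated $\ZZ_n$-module $P_0$ rather than something requiring $P_0$ to be free. Since $\ZZ_n$ is self-injective this causes no trouble, but it is the step where one must be careful not to assume freeness. No deep input beyond Lemma~\ref{lem:dual_description} and the QF property of $\ZZ_n$ is needed.
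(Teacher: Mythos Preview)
Your proposal is correct and follows essentially the same route as the paper: verify axioms 1 and 2 by direct computation on the graded pieces, then use Lemma~\ref{lem:dual_description} to reduce bijectivity of $\flat$ to bijectivity of the map $P_0 \to P_0^{\#}$, $p_0 \mapsto \omega_0(\cdot,p_0)$. The only difference is that for this last step the paper does not invoke the QF property of $\ZZ_n$ at all but uses the more elementary observation that $P_0$, being a finitely generated $\ZZ_n$-module, is a \emph{finite} set, so $|P_0| = |P_0^{\#}|$ and the injective map given by nondegeneracy is automatically bijective.
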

\begin{proof}
First note that $P_0$ and $P_0^{\#}$ have the same number of elements. Thus the map $P_0 \ni p_0 \mapsto \omega_0(\cdot, p_0) \in P_0^{\#}$, being injective by definition, is bijective. 

A short calculation shows that conditions 1. and 2. in the Definition \ref{def:presymp} are satisfied. Using the description of $P^*$ in Lemma \ref{lem:dual_description}, it is easy to see that $\flat$ is an isomorphism.
\end{proof}

Physically, $P_0$ is the group generated by clock and shift matrices acting on qubits on a single lattice site, considered modulo phases.

\begin{defn} \label{def:codes}
A stabilizer code is a tuple $\mathfrak C = (n, \Lambda, L,P)$, where $P$ is symplectic module over $R = \ZZ_n[\Lambda]$ as constructed in Proposition \ref{prop:standard_symp} and $L \subset P$ is an isotropic submodule. We will also abbreviate $\mathfrak C = ( \Lambda,L,P)$ or $(L,P)$ when there is no danger of confusion. To $\mathfrak C$ we associate
\begin{itemize}
    \item integer dimension $D = \mathrm{rk}(\Lambda)$, the free rank of $\Lambda$,
    \item quasi-symplectic module $S(\mathfrak C) = L^\omega/ L^{\omega \omega}$,
    \item torsion module $Z(\mathfrak C) = L^{\omega \omega} /L$,
    \item torsion module $Q(\mathfrak C) = L^* / (P/L^\omega)$.
\end{itemize}
We say that $\mathfrak C$ is saturated if $L \subset P$ is saturated ($Z(\mathfrak C)=0$) and Lagrangian if $L \subset P$ is Lagrangian ($Z(\mathfrak C) = S(\mathfrak C) = 0$). An isomorphism of stabilizer codes $(L,P) \to (L',P')$ is a symplectic isomorphism $P \to P'$ taking $L$ to $L'$.
\end{defn}

Let us interpret physically objects defined above. Let $\mathcal H$ be a Hilbert space on which local Pauli operators act irreducibly and let $\mathcal H_0 \subset \mathcal H$ be the space of solutions of \eqref{eq:stabilizer} in $\mathcal H$. We assume that $\mathcal H_0 \neq 0$. One can show that operators in $L^{\omega}$ act irreducibly in $\mathcal H_0$. Since they commute with operators in $L^{\omega \omega}$, the latter act in $\mathcal H_0$ as scalars. This is a~trivial statement for operators in $L$, but for operators in $L^{\omega \omega} \setminus L$ the conclusion relies on the irreducibility of $\mathcal H$, through Schur's lemma. Values of the latter operators may be changed by acting on a state with a suitable automorphism of the local operator algebra (more precisely, a non-local Pauli operator) which preserves all operators in $L$. This gives a state which is not representable by an element of $\mathcal H$ (belongs to a~different superselection sector). Hence we have the following interpretations.
\begin{itemize}
    \item $Z(\mathfrak C)$ labels order parameters for spontaneously broken symmetries. 
    If $L^{\omega \omega}$ is Lagrangian, isomorphism classes of representations $\mathcal H$ with $\mathcal H_0 \neq 0$ are in bijection with $Z(\mathfrak C)^{\#}$ (and hence also with $Z(\mathfrak C)$ if $Z(\mathfrak C)$ is finite).
    \item Elements of $S(\mathfrak C)$ are Pauli operators acting in $\mathcal H_0$ (sometimes called logical operators) modulo operators which act in $\mathcal H_0$ as scalars. Hence $\dim(\mathcal H_0)$ is the square root of the number of elements\footnote{$S(\mathfrak C)$ is at most countably infinite. It is is not finite, $\dim(\mathcal H_0)$ in this statement has to be interpreted as the Hilbert dimension, not the algebraic dimension.} of $S(\mathfrak C)$. 
\end{itemize}
By the discussion around \eqref{excitation} and Lemma \ref{lem:dual_description}, module $L^*$ parametrizes local excitations. Therefore $Q(\mathfrak C)= L^* / (P/L^\omega)$ is the module of local excitations modulo excitations which can be created by acting with local operators. 

\section{Topological charges}

In this section we define a series of homological invariants $Q^i$, with $Q^0$ isomorphic to $Q$ in Definition \ref{def:codes}. Moreover, we show that $Q^0$ is isomorphic to the module of topological point excitations defined in \cite{haah2013commuting} and derive some general properties of $Q^i$. Firstly, we show that $Q^i(\mathfrak C)=0 $ for $i > D-1$ (and also for $i = D-1$ for saturated codes). Secondly, we obtain bounds on Krull dimensions of $Q^i(\mathfrak C)$. We expect $Q^i$ to describe $i$-dimensional excitations (or defects). This is shown in Section 6 for Lagrangian codes such that all $Q^i$ have Krull dimension zero. Computations of $Q^i$ for certain specific codes are presented in Section 7.

We remark that it follows immediately from our results that for saturated codes $\mathfrak C$ with $D=2$, the module $Q(\mathfrak C)$ either vanishes or has Krull dimension zero. Together with the discussion in Section 5 it implies that all point excitations are mobile, i.e.\ they can be transported around by suitable string operators. This result has previously been shown only for codes with qudits of prime dimension \cite{haah2013commuting}. Method adapted therein does not generalize to the case of composite qudit dimension due to the failure of Hilbert's syzygy theorem, a~crucial ingredient of the proof. 

\begin{lem} \label{lem:vanishing_Ext}
If $M$ is a quasi-free module and $N$ is free over $\ZZ_n$, then for $i> 0$ 
\begin{equation}
    \Ext^i_R(M,N) = 0 , \qquad \Tor_i^R(M, N) = 0.
\end{equation}
\end{lem}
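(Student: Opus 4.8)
The plan is to reduce the statement to the defining property of quasi-freeness, namely that $M \cong M_0 \otimes_{\ZZ_n} R$ for some finitely generated $\ZZ_n$-module $M_0$, and then exploit the good homological behavior of the base change $\ZZ_n \to R$. First I would observe that since $\ZZ_n$ is a QF ring (as already used in the proof of Lemma \ref{lem:RGor}), every $\ZZ_n$-module that is free over $\ZZ_n$ is automatically injective over $\ZZ_n$; moreover $\ZZ_n$ has the property that projective and injective modules coincide, and in particular $N$ being $\ZZ_n$-free is also $\ZZ_n$-flat. The key structural fact I would isolate is a change-of-rings statement: for a $\ZZ_n$-module $N_0$ and any $R$-module, one has natural isomorphisms $\Hom_R(M_0 \otimes_{\ZZ_n} R, \, N_0 \otimes_{\ZZ_n} R) \cong \Hom_{\ZZ_n}(M_0, N_0 \otimes_{\ZZ_n} R)$ (tensor-hom adjunction for the ring map $\ZZ_n \to R$) and similarly $ (M_0 \otimes_{\ZZ_n} R) \otimes_R (-) \cong M_0 \otimes_{\ZZ_n} (-)$.

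Next I would set up the derived-functor computation. Pick a resolution of $M_0$ by free $\ZZ_n$-modules (finitely generated since $M_0$ is); call it $F_\bullet \to M_0$. Tensoring over $\ZZ_n$ with $R$ is exact because $R = \ZZ_n[\Lambda]$ is a free, hence flat, $\ZZ_n$-module, so $F_\bullet \otimes_{\ZZ_n} R \to M_0 \otimes_{\ZZ_n} R = M$ is a resolution of $M$ by free $R$-modules. Therefore $\Ext^i_R(M, N)$ is computed as the cohomology of $\Hom_R(F_\bullet \otimes_{\ZZ_n} R, N)$. Using the adjunction isomorphism $\Hom_R(F_j \otimes_{\ZZ_n} R, N) \cong \Hom_{\ZZ_n}(F_j, N)$ — natural in $j$ — this complex is identified with $\Hom_{\ZZ_n}(F_\bullet, N)$, whose cohomology is $\Ext^i_{\ZZ_n}(M_0, N)$. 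Since $N$ is free, hence injective, over the QF ring $\ZZ_n$, we get $\Ext^i_{\ZZ_n}(M_0, N) = 0$ for $i > 0$, which gives the first claim. For $\Tor$, tensor the free $R$-resolution $F_\bullet \otimes_{\ZZ_n} R$ against $N$ over $R$: again $(F_j \otimes_{\ZZ_n} R) \otimes_R N \cong F_j \otimes_{\ZZ_n} N$ naturally, so the complex computing $\Tor^R_i(M,N)$ is $F_\bullet \otimes_{\ZZ_n} N$, whose homology is $\Tor^{\ZZ_n}_i(M_0, N)$; this vanishes for $i > 0$ since $N$ is $\ZZ_n$-flat (being free).

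The routine parts are the two adjunction/base-change isomorphisms and checking they are compatible with the differentials, which is standard. The one point that deserves genuine care — and which I expect to be the main (minor) obstacle — is the naturality: one must verify that the identifications $\Hom_R(F_\bullet \otimes_{\ZZ_n} R, N) \cong \Hom_{\ZZ_n}(F_\bullet, N)$ and $(F_\bullet \otimes_{\ZZ_n} R)\otimes_R N \cong F_\bullet \otimes_{\ZZ_n} N$ are isomorphisms of cochain/chain complexes, not merely degreewise isomorphisms, so that they induce the claimed isomorphisms on (co)homology. This follows from naturality of the tensor-hom adjunction in the first variable, applied to the maps in $F_\bullet$. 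A secondary subtlety is ensuring $F_\bullet$ can be taken degreewise finitely generated free over $\ZZ_n$; since $\ZZ_n$ is Noetherian and $M_0$ is finitely generated this is immediate, but it is cleanest to note it explicitly so that $F_j \otimes_{\ZZ_n} R$ is genuinely free (and finitely generated) over $R$.
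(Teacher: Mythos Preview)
Your proof is correct and takes a genuinely different route from the paper. The paper first decomposes $M_0$ as a direct sum of cyclic modules $\ZZ_k$ (possible over the principal ideal ring $\ZZ_n$), reduces to $M=\ZZ_k[\Lambda]$, writes down the explicit $2$-periodic free resolution
\[
\cdots \to R \xrightarrow{\,k\,} R \xrightarrow{\,n/k\,} R \xrightarrow{\,k\,} R \to M \to 0,
\]
and then checks directly that applying $\Hom_R(-,N)$ or $-\otimes_R N$ gives an exact complex in positive degrees because $N$ is $\ZZ_n$-free. Your approach is more conceptual: base-change a $\ZZ_n$-free resolution of $M_0$ along the flat map $\ZZ_n\to R$ and use adjunction to identify the resulting complexes with those computing $\Ext^\bullet_{\ZZ_n}(M_0,N)$ and $\Tor^{\ZZ_n}_\bullet(M_0,N)$, which vanish since $N$ is injective and flat over the QF ring $\ZZ_n$. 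This buys you a proof that avoids the structure theorem for $\ZZ_n$-modules and makes the change-of-rings mechanism transparent; the paper's version is more elementary and self-contained. One small correction: the paper's Definition~\ref{def:qf} does not require $M_0$ to be finitely generated, so your parenthetical claim ``finitely generated since $M_0$ is'' is unwarranted. Fortunately nothing in your argument actually uses this, as the adjunction and the injectivity/flatness of $N$ over $\ZZ_n$ hold without any finiteness hypothesis.
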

\begin{proof}
Every $\ZZ_n$-module is a direct sum of cyclic modules, so without loss of generality $M = \ZZ_k[\Lambda] $ with $k|n$. Let $l = \frac{n}{k}$. We have a~free resolution
\begin{equation}
 \dots \to R \xrightarrow{k} R \xrightarrow{l} R \xrightarrow{k} R \xrightarrow{\text{mod } k} M \to 0.
\end{equation}
Erasing $M$ and applying $\mathrm{Hom}_R(-, N)$ we obtain the sequence
\begin{equation}
0 \to N \xrightarrow{k} N \xrightarrow{l} N \to \dots,
\end{equation}
which is exact in every degree $i>0$. This establishes the claim for $\Ext$. The argument for $\Tor$ is analogous.
\end{proof}

\begin{prop} \label{prop:QExt}
Let $\mathfrak C=(L,P)$ be a stabilizer code. We have
\begin{equation}
    Q(\mathfrak C) \cong \Ext^1_R (\overline{P/L},R).
\end{equation}
\end{prop}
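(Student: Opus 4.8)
The plan is to extract the isomorphism $Q(\mathfrak C)\cong \Ext^1_R(\overline{P/L},R)$ from the long exact sequence obtained by dualizing a short exact sequence, using the fact that $P$ is symplectic so that $\flat:P\to P^*$ is an isomorphism. Recall $Q(\mathfrak C)=L^*/(P/L^\omega)$, where the map $P/L^\omega\to L^*$ is the one induced by $\flat$ (which has kernel exactly $L^\omega$ by Proposition 2.14 / the surrounding discussion, since $\flat$ is injective on $M=P$). So the first task is to identify $L^*$ and the image of $P/L^\omega$ inside it in terms of $\Hom_R(\overline{(-)},R)$ applied to the sequence $0\to L\to P\to P/L\to 0$.

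First I would apply the functor $(-)^*=\Hom_R(\overline{(-)},R)$ to $0\to L\to P\to P/L\to 0$. Since $\overline{(-)}$ is an exact functor (it is just a relabeling of the $R$-action, exact as a functor of abelian groups with $R$-module structure), this yields the long exact sequence
\begin{equation}
0\to (P/L)^*\to P^*\xrightarrow{\ \alpha\ } L^*\to \Ext^1_R(\overline{P/L},R)\to \Ext^1_R(\overline P,R).
\end{equation}
Now $P=P_0[\Lambda]$ is quasi-free and free over $\ZZ_n$ (since $P_0$ is free over $\ZZ_n$, being a symplectic $\ZZ_n$-module of the standard form — or at worst, one reduces to the quasi-free case), so by Lemma \ref{lem:vanishing_Ext} we get $\Ext^1_R(\overline P,R)=0$. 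Hence $\alpha:P^*\to L^*$ is surjective and $\coker\bigl((P/L)^*\hookrightarrow P^*\bigr)\to L^*$ is an isomorphism onto... wait, more directly: the sequence gives $L^*/\im(\alpha)\cong \Ext^1_R(\overline{P/L},R)$.

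Next I would identify $\im(\alpha)$ with the image of $P/L^\omega$ in $L^*$. Precompose $\alpha$ with the isomorphism $\flat:P\xrightarrow{\sim}P^*$: the composite $P\xrightarrow{\flat}P^*\xrightarrow{\alpha}L^*$ sends $p$ to the restriction $\omega(\cdot,p)|_L$, which is precisely the map whose kernel is $L^\omega$ by definition. Therefore $\im(\alpha)=\im\bigl(P\to L^*\bigr)$ is a quotient of $P/L^\omega$, and since $\flat$ is an isomorphism it is exactly the image of the canonical injection $P/L^\omega\hookrightarrow L^*$ appearing in the definition of $Q(\mathfrak C)$. Combining, $Q(\mathfrak C)=L^*/(P/L^\omega)=L^*/\im(\alpha)\cong\Ext^1_R(\overline{P/L},R)$, as claimed.

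The only genuinely delicate point is the vanishing $\Ext^1_R(\overline P,R)=0$: one must check that $\overline P$ falls under the hypotheses of Lemma \ref{lem:vanishing_Ext}, i.e.\ that $\overline P$ is quasi-free and $R$ is free over $\ZZ_n$. The latter is immediate. For the former, $\overline{P_0[\Lambda]}\cong P_0^{\#}[\Lambda]$ or just $\cong P_0'[\Lambda]$ for an appropriate $\ZZ_n$-module $P_0'$ (the antipode of a quasi-free module is quasi-free), so $\overline P$ is again quasi-free; and since $P_0$ is free over $\ZZ_n$ — which holds because $\omega_0$ nondegenerate alternating on a finite $\ZZ_n$-module forces $P_0\cong\ZZ_n^{2g}$, or one simply invokes that $P$ in a stabilizer code is of the form $\bigoplus_j\ZZ_{n_j}[\Lambda]^{\oplus 2}$ and argues component by component — Lemma \ref{lem:vanishing_Ext} applies. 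Everything else is formal: exactness of $\overline{(-)}$, the long exact sequence for $\Ext$, and matching up the connecting data with the definition of $Q(\mathfrak C)$ via $\flat$. I would expect the write-up to be short, with the bulk of the prose devoted to justifying that $\overline P$ satisfies the hypotheses of Lemma \ref{lem:vanishing_Ext}.
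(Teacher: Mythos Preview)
Your proposal is correct and follows essentially the same route as the paper: dualize the short exact sequence $0\to L\to P\to P/L\to 0$, use Lemma~\ref{lem:vanishing_Ext} to kill $\Ext^1_R(\overline P,R)$, and identify the resulting four-term exact sequence with the definition of $Q(\mathfrak C)$ via $\flat$ and $(P/L)^*\cong L^\omega$. One small correction: the assertion that a nondegenerate alternating $\omega_0$ forces $P_0\cong\ZZ_n^{2g}$ is false (e.g.\ $P_0=\ZZ_2^2$ over $\ZZ_4$ with $\omega_0(e_1,e_2)=2$), but this claim is unnecessary anyway---Lemma~\ref{lem:vanishing_Ext} only requires $\overline P$ quasi-free and $N=R$ free over $\ZZ_n$, both of which hold regardless of whether $P_0$ is free.
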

\begin{proof}
Consider the short exact sequence
\begin{equation}
    0 \to L \to P \to P/L \to 0.
    \label{eq:LPPL_seq}
\end{equation}
We apply $*$, use Lemma \ref{lem:vanishing_Ext} and identify $(P/L)^* = L^\omega$, $P^* = P$ to get
\begin{equation}
    0 \to L^\omega \to P \to L^* \to \Ext^1_R(\overline{P/L},R) \to 0,
\end{equation}
so $\Ext^1_R(\overline{P/L},R) \cong L^* /(P/L^\omega) = Q(\mathfrak C)$.
\end{proof}

Proposition \ref{prop:QExt} motivates the definition of generalized charge modules.

\begin{defn}
Generalized charge modules of a stabilizer code $\mathfrak C=(L,P)$ are defined as 
\begin{equation}
    Q^i(\mathfrak C) = \Ext^{i+1}_R(\overline{P/L},R), \qquad i \geq 0.
\end{equation}
\end{defn}

\begin{prop}
For $i>0$ we have a canonical isomorphism
\begin{equation}
    Q^i(\mathfrak C) \cong \Ext^{i}_R(\overline L,R).
\end{equation}
\end{prop}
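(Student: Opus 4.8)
The plan is to run the long exact sequence of $\Ext$ attached to the defining short exact sequence $0 \to L \to P \to P/L \to 0$ and to kill the $P$-terms using Lemma~\ref{lem:vanishing_Ext}. First I would apply the antipode functor $(-) \mapsto \overline{(-)}$ to \eqref{eq:LPPL_seq}. Since the antipode only relabels the $R$-action and leaves the underlying abelian group (and all $\ZZ_n$-linear maps) untouched, it is an exact functor, so we obtain a short exact sequence of $R$-modules
\begin{equation}
    0 \to \overline L \to \overline P \to \overline{P/L} \to 0.
\end{equation}

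Next I would apply the contravariant functor $\Hom_R(-,R)$ and write down the resulting long exact sequence. Around degree $i$ it reads
\begin{equation}
    \Ext^i_R(\overline P, R) \to \Ext^i_R(\overline L, R) \xrightarrow{\ \delta\ } \Ext^{i+1}_R(\overline{P/L}, R) \to \Ext^{i+1}_R(\overline P, R),
\end{equation}
where $\delta$ is the connecting homomorphism. The key observation is that $P = P_0[\Lambda]$ is quasi-free, hence so is $\overline P$, and that $R = \ZZ_n[\Lambda]$ is free as a $\ZZ_n$-module (with basis $\{x^\lambda\}_{\lambda \in \Lambda}$). Therefore Lemma~\ref{lem:vanishing_Ext}, applied with $M = \overline P$ and $N = R$, gives $\Ext^j_R(\overline P, R) = 0$ for all $j > 0$. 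In particular, for $i > 0$ both flanking terms $\Ext^i_R(\overline P, R)$ and $\Ext^{i+1}_R(\overline P, R)$ vanish, so $\delta$ is an isomorphism. Since $\Ext^{i+1}_R(\overline{P/L}, R) = Q^i(\mathfrak C)$ by definition, this yields the claimed canonical isomorphism $Q^i(\mathfrak C) \cong \Ext^i_R(\overline L, R)$ for $i > 0$, naturality being inherited from that of the connecting homomorphism.

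There is no real obstacle here; the only points that require a moment's care are that the antipode is genuinely an exact functor (so that it produces a short exact sequence to feed into the long exact sequence) and that $R$ itself — not merely a free $R$-module — is free over $\ZZ_n$, which is exactly what makes Lemma~\ref{lem:vanishing_Ext} applicable with $N = R$. One should also note in passing why the statement excludes $i = 0$: there the long exact sequence only gives the four-term exact sequence $0 \to L^\omega \to P \to L^* \to \Ext^1_R(\overline{P/L}, R) \to 0$ used in the proof of Proposition~\ref{prop:QExt}, and $\delta$ need not be injective since $\Hom_R(\overline P, R) = \overline P \to \Hom_R(\overline L, R) = L^*$ is typically not surjective.
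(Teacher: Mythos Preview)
Your proof is correct and is exactly the approach taken in the paper: apply $(-)^* = \Hom_R(\overline{(-)},R)$ to the short exact sequence \eqref{eq:LPPL_seq}, obtain the long exact sequence, and kill the $\Ext^j_R(\overline P,R)$ terms for $j>0$ via Lemma~\ref{lem:vanishing_Ext}. The paper records this in a single line (``Inspect the long exact sequence obtained by applying $(-)^*$ to \eqref{eq:LPPL_seq}''); you have simply spelled out the details.
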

\begin{proof}
Inspect the long exact sequence obtained by applying $(-)^*$ to \eqref{eq:LPPL_seq}.
\end{proof}

The next proposition shows that our definition of $Q(\mathfrak C)$ agrees with topological point excitations in \cite{haah2013commuting}. 

\begin{prop} \label{prop:Q_Haah}
Let $(L,P)$ be a stabilizer code and let $\sigma : F \to P$ be a homomorphism with $F$ quasi-free and $\im(\sigma) = L$. Let $T$ be the torsion submodule of the cokernel of $\sigma^* : P^* \to F^*$. Then $T \cong Q^0(\mathfrak C)$.
\end{prop}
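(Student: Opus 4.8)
The plan is to relate the free (or quasi-free) resolution data to the definition $Q^0(\mathfrak C) = \Ext^1_R(\overline{P/L}, R)$ and the characterization $Q^0 \cong \Ext^1_R(\overline L, R)$ from the previous proposition. Since $\sigma : F \to P$ has image $L$, I would first split off the kernel: set $K = \ker(\sigma)$, giving a short exact sequence $0 \to K \to F \to L \to 0$. Applying the functor $(-)^*$ and using that $P^* = P$ together with the identification $(P/L)^* = L^\omega$, one gets a long exact sequence involving $\Ext^i_R(\overline L, R)$, $\Ext^i_R(\overline F, R)$, and $\Ext^i_R(\overline K, R)$.

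The key point is that $F$ is quasi-free, so by Lemma \ref{lem:vanishing_Ext} we have $\Ext^i_R(\overline F, R) = 0$ for $i > 0$ (note $\overline F$ is again quasi-free, and $R$ is free over $\ZZ_n$ being a group ring — actually $R$ may not be $\ZZ_n$-free, but $R \cong \ZZ_n[\Lambda]$ is free as a $\ZZ_n$-module since $\Lambda$ has a basis-like structure as a set, so the hypothesis of the lemma applies). Hence the long exact sequence collapses: from $0 \to L^* \to F^* \xrightarrow{\sigma^*} K^* \to \Ext^1_R(\overline L, R) \to \Ext^1_R(\overline F, R) = 0$ we would read off that $\coker(\sigma^* : F^* \to K^*) \cong \Ext^1_R(\overline L, R) \cong Q^0(\mathfrak C)$. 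But the statement concerns the torsion submodule of $\coker(\sigma^* : P^* \to F^*)$, not $\coker(F^* \to K^*)$, so I need to convert between these. From the exact sequence $0 \to L^* \to F^* \xrightarrow{\pi} K^*$ — wait, more carefully: dualizing $0 \to K \to F \to L \to 0$ gives $0 \to L^* \to F^* \to K^* \to \Ext^1_R(\overline L, R) \to 0$, so $\im(F^* \to K^*)$ has cokernel $\Ext^1_R(\overline L, R)$, and $\ker(F^* \to K^*) = L^*$, i.e. $\im(F^* \to K^*) \cong F^*/L^*$. Separately, the map $\sigma^* : P^* = P \to F^*$ factors through $L^* = (P/L)^{*\perp}$... hmm, I should think of $\sigma$ as the composite $F \twoheadrightarrow L \hookrightarrow P$, so $\sigma^* : P^* \to F^*$ is the composite $P^* \to L^* \to F^*$, where $P^* \to L^*$ is restriction (with cokernel $\Ext^1_R(\overline{P/L}, R) = Q^0$ by Proposition \ref{prop:QExt}) and $L^* \to F^*$ is injective (it is the injection from the dualized sequence above). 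Therefore $\coker(\sigma^* : P^* \to F^*)$ fits in $0 \to Q^0(\mathfrak C) \to \coker(\sigma^*) \to F^*/L^* \to 0$, and $F^*/L^* \hookrightarrow K^*$ which embeds in a free module, hence $F^*/L^*$ is torsion-free. Since $Q^0(\mathfrak C) = \Ext^1_R(\overline{P/L}, R)$ is a torsion module (shown in the proof of Proposition \ref{prop:quasi_symp}, part 3, via commuting $\Ext$ with localization), the torsion submodule of $\coker(\sigma^*)$ is exactly $Q^0(\mathfrak C)$.

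So the skeleton is: (1) factor $\sigma$ as $F \twoheadrightarrow L \hookrightarrow P$; (2) dualize, using $(-)^*$ left-exactness and Proposition \ref{prop:QExt} for the cokernel of $P^* \to L^*$; (3) observe $L^* \to F^*$ is injective with cokernel embedding into $K^* = \ker(F^* \to \dots)$, hence torsion-free — actually I realize I can avoid introducing $K$ at all: $L^* \hookrightarrow F^*$ with $F^*/L^* \hookrightarrow$ (something torsion-free) is what I need, and $F^*$ itself embeds in a free module by Lemma \ref{lem:torsion_criteria} since $F^*$ is torsion-free (as $\Hom$ into $R$ of anything is torsion-free), so any submodule of $F^*$, in particular $F^*/L^*$... no, a quotient need not be a submodule. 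I do need that $F^*/L^* \cong \im(F^* \to K^*) \subseteq K^*$ is torsion-free, which requires the surjectivity of $F^* \to K^*$ onto its image being an isomorphism onto a submodule of the torsion-free module $K^*$ — that is automatic. (4) Combine the exact sequence $0 \to Q^0(\mathfrak C) \to \coker(\sigma^*) \to F^*/L^* \to 0$ with $Q^0$ torsion and $F^*/L^*$ torsion-free to conclude $T = Q^0(\mathfrak C)$.

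The main obstacle I anticipate is bookkeeping the various dual sequences correctly and making sure the identification of the cokernel of $P^* \to L^*$ with $Q^0(\mathfrak C)$ is genuinely the one coming from restriction along $L \hookrightarrow P$ (it is, by the proof of Proposition \ref{prop:QExt}), and that the connecting maps line up so that the extension $0 \to Q^0 \to \coker(\sigma^*) \to F^*/L^* \to 0$ is exact rather than merely having the right subquotients. A clean way to organize this is to use the snake lemma on the obvious map of short exact sequences relating $0 \to L \to P \to P/L \to 0$ and the presentation of $L$, or simply to chase the composite $P^* \to L^* \to F^*$ directly. I also want to double-check that $R$ is free as a $\ZZ_n$-module so that Lemma \ref{lem:vanishing_Ext} applies with $N = R$; this holds because $R = \ZZ_n[\Lambda] = \bigoplus_{\lambda \in \Lambda} \ZZ_n x^\lambda$ as a $\ZZ_n$-module. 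Everything else is routine homological algebra given the results already established.
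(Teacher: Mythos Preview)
Your argument is correct and essentially the same as the paper's: both identify $Q^0(\mathfrak C)$ with the submodule $L^*/\im(\sigma^*)$ of $\coker(\sigma^*)$ and then show the quotient $F^*/L^*$ is torsion-free by embedding it in a dual module. The only cosmetic difference is that the paper introduces a further quasi-free $F' \twoheadrightarrow \ker(\sigma)$ so as to phrase the computation as homology of a resolution (with $\ker(\iota^*) = L^*$ inside $F^*$), whereas you use $K = \ker(\sigma)$ directly and invoke Proposition~\ref{prop:QExt}; your version avoids the auxiliary $F'$ but is otherwise the same idea.
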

\begin{proof}
Choose a quasi-free module $F'$ and a homomorphism $\iota : F' \to F$ with image $\ker(\sigma)$. One may extend it to a quasi-free resolution of $P/L$:
\begin{equation}
    \dots \to F' \xrightarrow{\iota} F \xrightarrow{\sigma} P \to P/L \to 0.
\end{equation}
By Lemma \ref{lem:vanishing_Ext} this resolution may be used to compute $\Ext^\bullet(\overline{P/L},R)$. Thus we erase $P/L$ and apply $(-)^*$, yielding the complex
\begin{equation}
    0 \to P^* \xrightarrow{\sigma^*} F^* \xrightarrow{\iota^*} F'^* \to \dots
\end{equation}
whose homology $\ker(\iota^*) / \im(\sigma^*)$ in degree $1$ is $Q^0(\mathfrak C)$. This exhibits $Q^0(\mathfrak C)$ as a~submodule of $\coker(\sigma^*)$. It is contained in $T$ because $Q^0(\mathfrak C)$ is torsion. It only remains to show that every $\varphi \in F^*$ representing an element of $T$ is in $\ker(\iota^*)$. Indeed, let $r \varphi = \sigma^* (\psi)$ for some $r \in R$ not a zero-divisor and $\psi \in P^*$. Then $r \, \iota^*(\varphi) = 0$, so $\iota^*(\varphi )=0$ since $F'^*$ is torsion-free.
\end{proof}

Recall that the dimension $\dim(M)$ of an $R$-module $M$ is defined as the Krull dimension of the quotient ring $R / \ann(M)$, where $\ann(M)$ is the annihilator of $M$. A nonzero module has a nonnegative Krull dimensions. By convention, the zero module has Krull dimension $-\infty$.

\begin{prop} \label{prop:Q_dim}
Let $\mathfrak C$ be a stabilizer code.
\begin{enumerate}
    \item $Q^i(\mathfrak C) =0 $ for $i \geq D$.
    \item $Q^{D-1}(\mathfrak C) \cong \Ext^D_R(\overline{Z(\mathfrak C)},R)$. In particular $Q^{D-1}(\mathfrak C) =0$ if $\mathfrak C$ is saturated.
    \item $ \dim(Q^i(\mathfrak C)) \leq D-1-i$. In particular $Q^i(\mathfrak C)$ is a torsion module.
    \item If $\mathfrak C$ is saturated, then $\dim(Q^i(\mathfrak C)) \leq D-2-i$.
\end{enumerate}
In particular, saturated 1D codes have no topological charges.
\end{prop}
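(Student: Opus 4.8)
The plan is to prove the four assertions of Proposition \ref{prop:Q_dim} by systematically exploiting the Gorenstein property of $R$ (Lemma \ref{lem:RGor}) together with the standard duality theory it provides. Throughout, set $D = \mathrm{rk}(\Lambda)$, so that $R$ is Gorenstein of dimension $D$. The key external facts I would invoke are: (i) over a Gorenstein ring of dimension $D$, $\Ext^i_R(M,R) = 0$ for $i > D$ for any finitely generated $M$ (finite injective dimension of $R$); (ii) the grade/codimension inequality $\mathrm{grade}(\ann M) \leq i$ whenever $\Ext^i_R(M,R) \neq 0$, equivalently $\dim \Ext^i_R(M,R) \leq D - i$ by Gorenstein duality; and (iii) for a torsion module $T$ over such $R$, one in fact has $\dim \Ext^i_R(T,R) \le D - i$ with the sharper $\dim T \le D-1$ forcing vanishing of $\Ext^i$ for $i$ too small — more precisely the standard bound $\dim \Ext^i_R(M,R) \le D - i$ combined with $\mathrm{Ext}^i_R(M,R)=0$ for $i < \mathrm{grade}(\ann M)$. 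These are recalled (or should be recalled) in Appendix \ref{sec:Gorenstein}.

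First I would handle part 1. Since $Q^i(\mathfrak C) = \Ext^{i+1}_R(\overline{P/L}, R)$ and $\overline{P/L}$ is finitely generated, fact (i) gives $\Ext^{i+1}_R(\overline{P/L},R) = 0$ as soon as $i+1 > D$, i.e.\ $i \geq D$. For part 3, I would apply fact (ii) to $M = \overline{P/L}$: $\dim Q^i(\mathfrak C) = \dim \Ext^{i+1}_R(\overline{P/L},R) \leq D - (i+1) = D-1-i$. That $Q^i$ is torsion then follows because a module of dimension $\le D-1 < D = \dim R$ cannot have full support, hence is killed by a regular element (using that $R$ has no embedded components in the relevant sense, or more directly that a finitely generated module of dimension $< \dim R$ over our $R$ is torsion — this uses $K = \prod K_i$ with each $K_i$ Artinian, from the proof of Lemma \ref{lem:RGor}).

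For part 2, the point is to relate $Q^{D-1}(\mathfrak C) = \Ext^D_R(\overline{P/L},R)$ to $Z(\mathfrak C) = L^{\omega\omega}/L$. I would use the short exact sequences $0 \to L \to L^{\omega\omega} \to Z(\mathfrak C) \to 0$ and $0 \to L^{\omega\omega} \to P \to P/L^{\omega\omega} \to 0$. Applying $\Hom_R(\overline{(-)}, R)$ and chasing the long exact sequences, the top $\Ext$ in degree $D$ of $\overline{P/L}$ should collapse onto $\Ext^D_R(\overline{Z(\mathfrak C)},R)$: indeed $P/L^{\omega\omega}$ is torsion-free (saturation), so by part 2 of Lemma \ref{lem:torsion_criteria} it embeds in a free module, and combined with Proposition \ref{prop:quasi_symp} one controls its $\Ext$'s in top degree — in particular $\Ext^D_R(\overline{P/L^{\omega\omega}},R)$ should vanish because $P/L^{\omega\omega}$ has dimension $D$ but positive grade after saturation, or more carefully because it is a second syzygy. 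Then the long exact sequence from $0 \to Z(\mathfrak C) \to P/L \to P/L^{\omega\omega} \to 0$ (shifting bars appropriately) yields $\Ext^D_R(\overline{P/L},R) \cong \Ext^D_R(\overline{Z(\mathfrak C)},R)$, using $\Ext^{D+1} = 0$ from part 1. When $\mathfrak C$ is saturated, $Z(\mathfrak C) = 0$ and this is $0$. Part 4 then follows formally: for saturated $\mathfrak C$, running the same long exact sequence argument one degree lower shows $Q^i(\mathfrak C) = \Ext^{i+1}_R(\overline{P/L},R)$ injects into (or is controlled by) $\Ext^{i+1}_R(\overline{Z'},R)$ for an appropriate torsion quotient — but more simply, for saturated codes $P/L = P/L^{\omega\omega}$ is torsion-free hence a submodule of a free module, so it is a first syzygy, which improves the grade bound by one: $\Ext^{i+1}_R(\overline{P/L},R)$ then has dimension $\le D - (i+1) - 1 = D-2-i$ by the standard fact that $k$-th syzygies have $\Ext$-vanishing improved by $k$ in the relevant range.

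The main obstacle I anticipate is part 2, specifically verifying that the error term in the long exact sequence genuinely vanishes — i.e.\ that $\Ext^{D}_R(\overline{P/L^{\omega\omega}},R)=0$ and the connecting maps behave. This requires knowing that a torsion-free (equivalently saturated-quotient) finitely generated module over our Gorenstein $R$ has grade $\ge 1$ on its annihilator in a way that kills top $\Ext$, which should follow from Lemma \ref{lem:torsion_criteria}(2) (embed in $R^t$, hence a submodule of a free module, hence depth condition via the Auslander–Bridger / syzygy theory valid over Gorenstein rings). The bookkeeping with the antipode $\overline{(-)}$ is purely formal since $\overline{(-)}$ is an exact auto-equivalence commuting with $\Ext$, but I would state that explicitly to avoid sign/variance confusion. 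The concluding remark that saturated 1D codes have no topological charge is then immediate from part 3 (or 4) with $D=1$: $\dim Q^0(\mathfrak C) \le D-1-0 = 0$ from part 3, and $\le D - 2 = -1$ from part 4, i.e.\ $Q^0(\mathfrak C) = 0$.
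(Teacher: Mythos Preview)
Your proposal is correct and follows essentially the same route as the paper: part 1 from finite injective dimension of $R$, part 3 from Lemma \ref{lem:dim_Ext_bound}, part 4 by embedding the torsion-free $P/L$ in a free module so that $\Ext^{i+1}_R(\overline{P/L},R)\cong\Ext^{i+2}_R(\overline M,R)$, and part 2 from the long exact sequence of $0\to Z(\mathfrak C)\to P/L\to P/L^{\omega\omega}\to 0$ together with $\Ext^D_R(\overline{P/L^{\omega\omega}},R)=0$. One small quibble: $P/L^{\omega\omega}$ is a \emph{first} syzygy (submodule of a free module), not a second one; the paper makes exactly this embedding explicit rather than invoking grade or Auslander--Bridger language, which keeps the argument self-contained with Lemma \ref{lem:torsion_criteria}(2) and avoids the vagueness you flag in your ``main obstacle'' paragraph.
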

\begin{proof}
1. follows from the definition of a Gorenstein ring. 3. follows from Lemma \ref{lem:dim_Ext_bound}. Now suppose that $\mathfrak C$ is saturated. Then $P/L$ is torsion-free, so by Lemma \ref{lem:torsion_criteria} there exists a short exact sequence
\begin{equation}
    0 \to P/L \to F \to M \to 0
\end{equation}
with $F$ finite free. Applying $(-)^*$ gives a long exact sequence from which 
\begin{equation}
    \Ext^{i+1}_R(\overline {P/L},R) \cong \Ext^{i+2}_R(\overline M,R), \qquad i \geq 0.
\end{equation}
In particular $\Ext^D_R(\overline{P/L},R)=0$. Invoking Lemma \ref{lem:dim_Ext_bound} establishes 4.

2. We have a short exact sequence
\begin{equation}
    0 \to L^{\omega \omega} / L \to P/L \to P / L^{\omega \omega} \to 0.
\end{equation}
Apply $*$ and use $\Ext^D_R(\overline{P/L^{\omega \omega}},R) =0$, established in the proof of 4.
\end{proof}

\section{Operations on Pauli stabilizer codes}

One Pauli stabilizer code may give rise to various other codes. For example, one may ``compatify'' some (even all) spatial directions, i.e.\ replace $\Lambda$ by a~quotient group. Another possibility is stacking of infinitely many copies of a certain code to create a code with higher dimension. Finally, one has coarse-graining, which does not change the code, but forgets about some of its translation symmetry. In this section we discuss stacking and coarse-graining (in particular how they affect invariants of a code), but compactifications are postponed to future work. Moreover, we explain that the choice of $n$ (which has to be a common multiple of qubit dimensions) does not matter and that the whole theory reduces to the case when $n$ is a prime power.

\begin{defn}
Let $\mathfrak C = (n, \Lambda,L,P)$ be a stabilizer code and let $k$ be a~positive integer divisible by $n$. Then we may regard $L$ and $P$ as $\ZZ_k[\Lambda]$-modules, yielding a stabilizer code $\mathfrak C'=(k,\Lambda, L,P)$. We will not distinguish between $\mathfrak C$ and $\mathfrak C'$. Proposition below shows that this does not affect charge codes. Given data $(\Lambda, L,P)$ we choose $n$ (needed to define the ring $R$) as the smallest positive integer annihilating the abelian group $P$.
\end{defn}

\begin{prop}
Let $\mathfrak C$, $\mathfrak C'$ be as above. Then $S(\mathfrak C')$ coincides with $S(\mathfrak C)$ regarded as a~$\ZZ_k[\Lambda]$-module. Similarly, $Z(\mathfrak C) = Z(\mathfrak C')$ and $Q^i(\mathfrak C') = Q^i(\mathfrak C)$.
\end{prop}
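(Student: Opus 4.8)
The plan is to exploit the fact that $P$ (hence $L$, $P/L$, and everything built from them) is annihilated by $n$, so that its $R'=\ZZ_k[\Lambda]$-module structure factors through the quotient map $R'\to R$, and then to identify the relevant $\Hom$ and $\Ext$ groups over $R'$ with those over $R$ by means of the $R$-module isomorphism $\ann_{R'}(nR')\cong R$ given by $\bar r\mapsto \tfrac kn\,r$ (well defined since $n\mid k$). This isomorphism — reflecting that $R$ and $R'$ are both Gorenstein — is really the one new ingredient; once it is in place everything else is a naturality check.

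First I would handle $S$ and $Z$. Viewed over $R'$, the symplectic pairing on $P$ is the composition of $\omega\colon P\times P\to R$ with the inclusion $R\xrightarrow{\ \sim\ }\ann_{R'}(nR')\hookrightarrow R'$; one checks that with this pairing $(P,\omega)$ is again a symplectic module in the sense of Definition \ref{def:presymp}, the map $\flat$ over $R'$ being the one over $R$ transported along $\ann_{R'}(nR')\cong R$. In particular $\omega'(\cdot,p)|_L=0$ if and only if $\omega(\cdot,p)|_L=0$, so $L^\omega$, and then $L^{\omega\omega}$, are literally the same submodules of $P$ whether computed over $R$ or over $R'$, carrying the same (factored) module structure. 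Hence $Z(\mathfrak C)=L^{\omega\omega}/L$ and $S(\mathfrak C)=L^\omega/L^{\omega\omega}$ are unchanged.

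For $Q^i$ I would take a quasi-free resolution $F_\bullet\to\overline{P/L}$ over $R$, with $F_j=V_j\otimes_{\ZZ_n}R$ and $V_j$ finitely generated over $\ZZ_n$. Since $nV_j=0$ and $n\mid k$, the very same complex is a quasi-free resolution over $R'$, with $F_j\cong V_j\otimes_{\ZZ_k}R'$. By Lemma \ref{lem:vanishing_Ext}, applied over $R$ (target $R$, free over $\ZZ_n$) and over $R'$ (target $R'$, free over $\ZZ_k$), this single resolution computes both $\Ext^\bullet_R(\overline{P/L},R)$ and $\Ext^\bullet_{R'}(\overline{P/L},R')$. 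It then suffices to exhibit an isomorphism of cochain complexes $\Hom_{R'}(F_\bullet,R')\cong\Hom_R(F_\bullet,R)$, natural in $F_\bullet$: for a quasi-free $F=V\otimes_{\ZZ_k}R'$ with $nV=0$ one has $\Hom_{R'}(F,R')\cong\Hom_{\ZZ_k}(V,R')$ by extension of scalars; every such map lands in $\ann_{R'}(nR')$, and via $\ann_{R'}(nR')\cong R$ this equals $\Hom_{\ZZ_k}(V,R)=\Hom_{\ZZ_n}(V,R)\cong\Hom_R(V\otimes_{\ZZ_n}R,R)=\Hom_R(F,R)$. These identifications are functorial in $V$, hence commute with the differentials of $\Hom(F_\bullet,-)$, so passing to cohomology yields $Q^i(\mathfrak C')\cong Q^i(\mathfrak C)$.

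The main obstacle I anticipate is not a deep one but a bookkeeping one: making precise how the pairing $\omega$ and the functor $(-)^*=\Hom_R(\overline{(-)},R)$ transform as the base ring is enlarged from $R$ to $R'$, and verifying that the quasi-free resolutions and the acyclicity statements of Lemma \ref{lem:vanishing_Ext} transfer cleanly. Everything hinges on the single algebraic fact $\ann_{R'}(nR')\cong R$ as $R$-modules; with that granted, the remaining verifications are routine.
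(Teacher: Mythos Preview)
Your proposal is correct and follows essentially the same approach as the paper: show that $L^\omega$ (hence $L^{\omega\omega}$) is unchanged, then compute $\Ext$ via a single quasi-free resolution valid over both rings, reducing to the isomorphism $\Hom_{R'}(M,R')\cong\Hom_R(M,R)$ for $R$-modules $M$. The paper states this last isomorphism without proof, whereas you supply it explicitly via $\ann_{R'}(nR')\cong R$; that is the only difference.
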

\begin{proof}
If $M \subset P$ is a submodule, $M^{\omega}$ is the same over $\ZZ_n[\Lambda]$ and $\ZZ_k[\Lambda]$. This establishes the first two equalities. For the last one, note that $\Ext^\bullet_R( - , R)$ may be computed using quasi-free resolutions by Lemma \ref{lem:vanishing_Ext}, a~quasi-free $\ZZ_n[\Lambda]$-module is also quasi-free over $\ZZ_k[\Lambda]$, and for any $\ZZ_n[\Lambda]$-module $M$ we have
\begin{equation}
 \Hom_{\ZZ_k[\Lambda]}(M,\ZZ_k[\Lambda]) \cong \Hom_{\ZZ_n[\Lambda]}(M,\ZZ_n[\Lambda]).   
\end{equation}
\end{proof}

\begin{defn}
Direct sum of stabilizer codes is defined by
\begin{equation}
    (n,\Lambda, L, P) \oplus (m,\Lambda, L',P') = (\gcd(n,m),\Lambda, L,P). 
\end{equation}
\end{defn}

Clearly $S(\mathfrak C), Z(\mathfrak C)$ and $Q^i(\mathfrak C)$ are additive.

\begin{prop} \label{prop:CRT}
Let $\mathfrak C= (n,\Lambda, L ,P)$ be a stabilizer code and let $n=\prod \limits_{i=1}^r p_i^{n_i}$ be the prime decomposition of $n$. Then
\begin{equation}
    \mathfrak C = \bigoplus_{i=1}^r (p_i^{n_i},\Lambda, L_i , P_i),
\end{equation}
where $P_i = \{ m \in P \, | \, p_i^{n_i} m =0 \}$, $L_i = L \cap P_i$.
\end{prop}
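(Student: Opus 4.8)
The plan is to reduce the statement to the Chinese Remainder Theorem applied to the coefficient ring. Let $e_i \in \ZZ_n$ be the idempotent that is $1 \bmod p_i^{n_i}$ and $0 \bmod p_j^{n_j}$ for $j \neq i$; then $\sum_i e_i = 1$, $e_i e_j = \delta_{ij} e_i$, and $e_i \ZZ_n \cong \ZZ_{p_i^{n_i}}$. These are central idempotents of $R = \ZZ_n[\Lambda]$, so $R \cong \prod_i R_i$ with $R_i := e_i R = \ZZ_{p_i^{n_i}}[\Lambda]$, and every $R$-module $M$ splits canonically as $M = \bigoplus_i e_i M$ with $e_i M = \{\, m \in M : p_i^{n_i} m = 0 \,\}$ (for the inclusion $\supseteq$ one uses that $p_i^{n_i}$ divides $e_j$ in $\ZZ_n$ whenever $j \neq i$). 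Taking $M = P$ yields $P = \bigoplus_i P_i$, and taking $M = L$ yields $L = \bigoplus_i e_i L$; since $e_i L = L \cap e_i P = L \cap P_i = L_i$, this is $L = \bigoplus_i L_i$.

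Next I would put a symplectic structure on each $P_i$. Because the $e_i$ lie in $\ZZ_n$ they are fixed by the antipode, so sesquilinearity of $\omega$ gives $\omega(e_i p, e_j p') = \overline{e_i}\, e_j\, \omega(p, p') = \delta_{ij}\, e_i\, \omega(p, p')$. Hence the summands $P_i$ are pairwise $\omega$-orthogonal and $\omega$ restricts to a pairing $P_i \times P_i \to e_i R = R_i$. To check that $(P_i, \omega|_{P_i})$ is of the standard form of Proposition \ref{prop:standard_symp} over $R_i$, write $P = P_0[\Lambda]$ with the bilinear form $\omega_0$ on the finite $\ZZ_n$-module $P_0$; then $P_i = (e_i P_0)[\Lambda]$, its form is induced by $\omega_0|_{e_i P_0}$, which is alternating, $\ZZ_{p_i^{n_i}}$-valued, and nondegenerate (if $p_0 \in e_i P_0$ is orthogonal to all of $e_i P_0$ then, since $e_j P_0 \perp e_i P_0$ for $j \neq i$, it is orthogonal to all of $P_0$, hence zero). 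Since $e_i P_0$ is finite, $P_i$ is finitely generated over $R_i$, so each tuple $(p_i^{n_i}, \Lambda, L_i, P_i)$ makes sense as in Definition \ref{def:codes} once we know $L_i$ is isotropic.

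That last point is immediate: $\omega$ vanishes on $L \times L = \bigoplus_{i,j} L_i \times L_j$ iff it vanishes on each diagonal block $L_i \times L_i$ (the off-diagonal blocks vanish by orthogonality), so $L$ being isotropic in $P$ is equivalent to every $L_i$ being isotropic in $P_i$. Consequently $(P, \omega)$ together with $L$ is the orthogonal direct sum of the symplectic modules $(P_i, \omega|_{P_i})$ together with the isotropic submodules $L_i$, which is precisely the assertion $\mathfrak C = \bigoplus_{i=1}^r (p_i^{n_i}, \Lambda, L_i, P_i)$.

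I do not expect a genuine obstacle: the whole argument is the functoriality of the CRT (equivalently, primary) decomposition over $\ZZ_n$, lifted to $R$ through the central idempotents $e_i$. The only steps warranting care are that the $e_i$ are antipode-invariant — which is exactly what makes the decomposition $\omega$-orthogonal rather than merely a direct sum of modules — and the descent of nondegeneracy of $\omega_0$ to each primary component of $P_0$; both are one-line checks.
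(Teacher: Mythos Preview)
Your argument is correct and is precisely the Chinese Remainder Theorem spelled out via the central idempotents $e_i$, which is exactly what the paper invokes (its proof reads in full: ``Chinese remainder theorem''). The extra care you take—antipode-invariance of the $e_i$ to ensure $\omega$-orthogonality, and descent of nondegeneracy of $\omega_0$ to each primary piece—fills in details the paper leaves implicit but does not deviate from its approach.
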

\begin{proof}
Chinese remainder theorem.
\end{proof}

Note that Proposition \ref{prop:CRT} implies that the study of stabilizer code with general $n$ reduces to the case when $n$ is a prime power.

\begin{defn} \label{def:stacking}
Let $\Gamma$ be a finitely generated abelian group and let $\iota : \Lambda \to \Gamma$ be a homomorphic embedding. For any $\ZZ_n[\Lambda]$-module $M$ let $\iota_* M = M \otimes_{\ZZ_n[\Lambda]} \ZZ_n[\Gamma]$. Then $\iota_*$ is an exact functor because $\ZZ_n[\Gamma]$ is free over $\ZZ_n[\Lambda]$. In particular for a~stabilizer code $(\Lambda, L, P)$ we have $\iota_* L \subset \iota_* P$, allowing us to define
\begin{equation}
\iota_* (\Lambda, L, P) = (\Gamma, \iota_* L, \iota_* P).
\end{equation}
\end{defn}

The operation introduced in Definition \ref{def:stacking} may be thought of as stacking of $\Gamma/\Lambda$ layers of the system described by $(\Lambda,L,P)$. Let us note that
\begin{equation}
    S(\iota_* \mathfrak C) = \iota_* S(\mathfrak C), \qquad Z(\iota_* \mathfrak C) = \iota_* Z(\mathfrak C), \qquad Q^i(\iota_* \mathfrak C) = \iota_* Q^i(\mathfrak C).
    \label{eq:charge_stacking}
\end{equation}
Due to these simple formulas, structure of charge modules may be used to show that a certain system cannot be obtained from a lower dimensional system by stacking. Here we note only a simple criterion based on whether charge modules vanish.

\begin{prop}
Suppose that $\mathfrak C$ is a stabilizer code with $Q^i(\mathfrak C) \neq 0$. Then $\mathfrak C$ is not isomorphic to any $\iota_* (\Lambda, L , P)$ with $\mathrm{rk}(\Lambda) < i+1$. If $\mathfrak C$ is saturated, $\mathrm{rk}(\Lambda) = i+1$ is also excluded.
\end{prop}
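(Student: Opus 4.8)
The statement to prove is: if $Q^i(\mathfrak C) \neq 0$, then $\mathfrak C$ is not isomorphic to $\iota_*(\Lambda, L, P)$ for any embedding $\iota \colon \Lambda \hookrightarrow \Gamma$ with $\mathrm{rk}(\Lambda) < i+1$, and if $\mathfrak C$ is saturated then $\mathrm{rk}(\Lambda) = i+1$ is also excluded. The plan is to argue by contradiction using the stacking formula $Q^i(\iota_*\mathfrak C) = \iota_* Q^i(\mathfrak C)$ from \eqref{eq:charge_stacking} together with the dimension/vanishing bounds of Proposition \ref{prop:Q_dim}.

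First I would suppose $\mathfrak C \cong \iota_*(\Lambda, L, P)$ for some code $\mathfrak C_0 = (\Lambda, L, P)$ of dimension $D_0 = \mathrm{rk}(\Lambda)$. Since charge modules are invariants of a code (they are defined via $\Ext$ of $\overline{P/L}$, which is preserved by symplectic isomorphisms), we get $Q^i(\mathfrak C) \cong Q^i(\iota_*\mathfrak C_0) = \iota_* Q^i(\mathfrak C_0)$. Now the key point: $\iota_*$ is the flat (indeed free) base change $- \otimes_{\ZZ_n[\Lambda]} \ZZ_n[\Gamma]$, so $\iota_* Q^i(\mathfrak C_0) = 0$ if and only if $Q^i(\mathfrak C_0) = 0$ — one direction is immediate, and the other follows because a nonzero finitely generated module stays nonzero after faithfully flat (or just free, nonzero) base change. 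Hence $Q^i(\mathfrak C) \neq 0$ forces $Q^i(\mathfrak C_0) \neq 0$.

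Then I would invoke Proposition \ref{prop:Q_dim}: part 1 says $Q^i(\mathfrak C_0) = 0$ whenever $i \geq D_0$, so $Q^i(\mathfrak C_0) \neq 0$ forces $D_0 \geq i+1$, i.e.\ $\mathrm{rk}(\Lambda) \geq i+1$, which rules out $\mathrm{rk}(\Lambda) < i+1$. For the saturated case, I would first note that being saturated is also an invariant of the code (it is the condition $Z(\mathfrak C) = 0$, or equivalently $P/L$ torsion-free), and that $\iota_*$ preserves it: $\iota_* L \subset \iota_* P$ is saturated because $\iota_*$ is flat and $\iota_*(P/L) = \iota_* P / \iota_* L$ remains torsion-free over $\ZZ_n[\Gamma]$ (a finitely generated torsion-free module embeds in a free module by Lemma \ref{lem:torsion_criteria}, and $\iota_*$ of that embedding is still an embedding into a free module). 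So if $\mathfrak C$ is saturated then so is $\mathfrak C_0$, and part 2 of Proposition \ref{prop:Q_dim} (or part 4 with $i = D_0 - 1$) gives $Q^{D_0 - 1}(\mathfrak C_0) = 0$; combined with part 1 this yields $Q^i(\mathfrak C_0) = 0$ for all $i \geq D_0 - 1$, so $Q^i(\mathfrak C_0) \neq 0$ now forces $D_0 \geq i+2$, excluding $\mathrm{rk}(\Lambda) = i+1$ as well.

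The only slightly delicate point — the ``main obstacle,'' though it is minor — is the claim that $\iota_*$ kills no nonzero module and preserves torsion-freeness, i.e.\ that the base change $\ZZ_n[\Lambda] \to \ZZ_n[\Gamma]$ is well enough behaved. This is handled by the observation already recorded in Definition \ref{def:stacking}: $\ZZ_n[\Gamma]$ is free (in particular faithfully flat) as a $\ZZ_n[\Lambda]$-module, since a choice of coset representatives for $\Gamma/\Lambda$ gives a basis. Freeness gives exactness of $\iota_*$ (so it commutes with the $\Ext$ computation and preserves submodule/quotient structure), and faithful flatness gives $\iota_* N = 0 \iff N = 0$. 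Everything else is a bookkeeping assembly of Proposition \ref{prop:Q_dim} and \eqref{eq:charge_stacking}, so no genuinely hard step remains.
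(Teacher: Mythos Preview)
Your approach is exactly the paper's: combine the stacking formula \eqref{eq:charge_stacking} with the vanishing results of Proposition~\ref{prop:Q_dim}, using that $\iota_*$ is faithfully flat so it neither creates nor kills nonzero modules. The overall argument is correct.

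There is one small slip in the saturated case. You argue that $\iota_*$ \emph{preserves} saturation (if $P/L$ is torsion-free then $\iota_*(P/L)$ is torsion-free) and then conclude ``if $\mathfrak C$ is saturated then so is $\mathfrak C_0$''. That is the wrong implication: you need that $\iota_*$ \emph{reflects} saturation, i.e.\ $\iota_*\mathfrak C_0$ saturated $\Rightarrow$ $\mathfrak C_0$ saturated. The fix is immediate with the tools you already invoked: by \eqref{eq:charge_stacking} one has $Z(\iota_*\mathfrak C_0)=\iota_* Z(\mathfrak C_0)$, and faithful flatness of $\ZZ_n[\Gamma]$ over $\ZZ_n[\Lambda]$ gives $\iota_* Z(\mathfrak C_0)=0 \iff Z(\mathfrak C_0)=0$. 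So $\mathfrak C$ saturated forces $\mathfrak C_0$ saturated, and then Proposition~\ref{prop:Q_dim}(2) finishes the argument as you wrote.
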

\begin{proof}
Formula \eqref{eq:charge_stacking} and Proposition \ref{prop:Q_dim}.
\end{proof}

\begin{prop}
Suppose that $\mathfrak C$ is a stabilizer code which is not saturated. Then $\mathfrak C$ is not isomorphic to any $\iota_* (\Lambda,L,P)$ with $\mathrm{rk}(\Lambda)=0$.
\end{prop}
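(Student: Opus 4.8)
The plan is to argue by contradiction using the naturality of the torsion invariant $Z$ under stacking. Suppose $\mathfrak C \cong \iota_*(\Lambda_0, L_0, P_0)$ for some finitely generated abelian group $\Lambda_0$ with $\mathrm{rk}(\Lambda_0)=0$ — that is, $\Lambda_0$ finite — and some embedding $\iota : \Lambda_0 \to \Gamma$, where $\Gamma$ is the lattice of $\mathfrak C$. Write $\mathfrak C_0 = (\Lambda_0, L_0, P_0)$. I want to deduce that $\mathfrak C$ is then saturated, contradicting the hypothesis that it is not.

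First I would observe that the rank-zero code $\mathfrak C_0$ is automatically saturated. Since $\mathrm{rk}(\Lambda_0)=0$, the ring $\ZZ_n[\Lambda_0]$ is finite, so Corollary \ref{cor:zero_dim_omom} applies: every submodule $N$ of a quasi-symplectic module over $\ZZ_n[\Lambda_0]$ satisfies $N^{\omega\omega}=N$. Taking $N = L_0 \subset P_0$ gives $L_0^{\omega\omega} = L_0$, hence $Z(\mathfrak C_0) = L_0^{\omega\omega}/L_0 = 0$. (Equivalently, over a finite ring all torsion modules vanish, so $P_0/L_0$ is torsion-free, which is precisely saturation.)

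Next I would invoke the stacking formula \eqref{eq:charge_stacking}, whose middle identity reads $Z(\iota_* \mathfrak C_0) = \iota_* Z(\mathfrak C_0)$. Since $\iota_*$ is an additive (indeed exact) functor and $Z(\mathfrak C_0)=0$, this gives $Z(\iota_* \mathfrak C_0) = 0$. Finally, because $Z$ is manifestly an isomorphism invariant of stabilizer codes — it is constructed from $L$ and the form $\omega$, both of which a code isomorphism preserves — we get $Z(\mathfrak C) \cong Z(\iota_* \mathfrak C_0) = 0$, so $\mathfrak C$ is saturated. This contradicts the assumption, completing the argument.

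I do not expect any genuine obstacle here: the statement is essentially a repackaging of Corollary \ref{cor:zero_dim_omom} (finiteness of $\Lambda$ forces saturation) together with the naturality relation \eqref{eq:charge_stacking}. The only point worth a sentence of care is that $Z$ transports along code isomorphisms, which follows at once from its definition.
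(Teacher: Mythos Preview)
Your proposal is correct and follows essentially the same approach as the paper's own proof, which also invokes Corollary~\ref{cor:zero_dim_omom} to get $Z(\mathfrak C_0)=0$ for rank-zero systems and then applies the stacking identity~\eqref{eq:charge_stacking}. You have simply spelled out in more detail what the paper compresses into two sentences.
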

\begin{proof}
Zero-dimensional systems have $Z(\mathfrak C)=0$ by Corollary \ref{cor:zero_dim_omom}. The claim follows from \eqref{eq:charge_stacking}.
\end{proof}

\begin{defn}
Let $\iota : \Gamma \to \Lambda$ be a finite index embedding. If $M$ is a $\ZZ_n[\Lambda]$ module, we let $\iota^* M$ be $M$ treated as $\ZZ_n[\Gamma]$-module. We define
\begin{equation}
    \iota^* (\Lambda, L, P) = (\Lambda, \iota^* L, \iota^* P).
\end{equation}
This operation is called coarse graining.
\end{defn}

\begin{prop}
Coarse graining satisfies
\begin{equation}
 S(\iota^* \mathfrak C) = \iota^* S(\mathfrak C), \qquad  Z(\iota^* \mathfrak C) = \iota^* Z(\mathfrak C), \qquad Q^i(\iota^* \mathfrak C) = \iota^* Q^i(\mathfrak C).
 \label{eq:cg_charges}
\end{equation}
\end{prop}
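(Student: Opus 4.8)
The plan is to establish the three equalities in \eqref{eq:cg_charges} by unwinding the definitions and observing that coarse graining is, at the level of modules, an instance of restriction of scalars along a ring map which behaves well homologically. Write $R = \ZZ_n[\Lambda]$ and $R' = \ZZ_n[\Gamma]$, and note that since $\iota : \Gamma \to \Lambda$ has finite index, $R$ is a finitely generated free $R'$-module: choosing coset representatives $\lambda_1, \dots, \lambda_s$ of $\Gamma$ in $\Lambda$, the monomials $x^{\lambda_1}, \dots, x^{\lambda_s}$ form an $R'$-basis of $R$. This is the structural fact that drives everything.

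First I would treat $S$ and $Z$. For any submodule $M \subset P$, the orthogonal complement $M^\omega$ is defined by the condition $\omega(\cdot, m)|_M = 0$, and the bilinear form $\omega$ itself is unchanged by coarse graining (the formula in Proposition \ref{prop:standard_symp} defining $\omega$ on $P = P_0[\Lambda]$ does not refer to which subgroup of $\Lambda$ we regard as acting). More precisely, $\omega(p, p') = 0$ as an element of $R$ is equivalent to $\omega(p, p') = 0$ as an element of $R'$ after restriction of scalars, since the underlying abelian group and the element are literally the same. Hence $(\iota^* M)^\omega = \iota^*(M^\omega)$ for every $M$, and in particular $\iota^* L^{\omega\omega} = (\iota^* L)^{\omega\omega}$ and $\iota^* L^\omega = (\iota^* L)^\omega$. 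Taking quotients, which commute with restriction of scalars, gives $S(\iota^* \mathfrak C) = \iota^* S(\mathfrak C)$ and $Z(\iota^* \mathfrak C) = \iota^* Z(\mathfrak C)$ immediately. (One should also note $\iota^*$ of a symplectic module built from $(P_0, \omega_0)$ is again of that standard form, now over $\Gamma$, so $\iota^* \mathfrak C$ is genuinely a stabilizer code; this is routine.)

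For the charge modules the key point is the compatibility of $\Ext$ with restriction of scalars along the finite free extension $R' \to R$. Concretely, since $R$ is free of rank $s$ over $R'$, a quasi-free resolution of $\overline{P/L}$ over $R$ restricts to a quasi-free resolution over $R'$ (a quasi-free $R$-module $M_0 \otimes_{\ZZ_n} R$ is $M_0 \otimes_{\ZZ_n} R'^{\oplus s}$ as an $R'$-module, hence quasi-free over $R'$), and by Lemma \ref{lem:vanishing_Ext} such resolutions compute $\Ext^\bullet_R(\overline{P/L}, R)$ and $\Ext^\bullet_{R'}(\overline{P/L}, R')$ respectively. So it suffices to check that applying $\Hom_R(-, R)$ and then restricting scalars to $R'$ agrees with applying $\Hom_{R'}(-, R')$ directly — i.e.\ that for a quasi-free $R$-module $N$ one has a natural isomorphism $\iota^* \Hom_R(N, R) \cong \Hom_{R'}(\iota^* N, R')$, compatible with the antipode (which likewise is unaffected by coarse graining, as $-\lambda \in \Gamma$ whenever $\lambda \in \Gamma$). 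This is an instance of the standard adjunction $\Hom_R(N, R) = \Hom_R(N, \Hom_{R'}(R, R')) = \Hom_{R'}(N, R')$, valid because $R \cong \Hom_{R'}(R, R')$ as $R$-bimodules (the trace/coordinate pairing of the free basis, using that $R$ is free of finite rank over $R'$); alternatively one checks it directly on the explicit quasi-free resolution of Lemma \ref{lem:vanishing_Ext}, exactly as in the earlier compatibility arguments. Passing to cohomology gives $Q^i(\iota^* \mathfrak C) = \Ext^{i+1}_{R'}(\overline{\iota^* (P/L)}, R') = \iota^* \Ext^{i+1}_R(\overline{P/L}, R) = \iota^* Q^i(\mathfrak C)$.

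The main obstacle, such as it is, is the bookkeeping in the last step: one must make sure the natural isomorphism $\iota^*\Hom_R(N,R) \cong \Hom_{R'}(\iota^* N, R')$ is compatible with the differentials of the resolution (so that it descends to cohomology) and with the antipode twist $N \mapsto \overline N$ built into the definition of $(-)^*$ and hence of $Q^i$. Both are formal once one fixes the identification $R \cong \Hom_{R'}(R,R')$ carefully — the antipode compatibility uses only that $\Gamma$ is a subgroup closed under negation — but it is the one place where a slip would be easy. Everything else is a direct translation of definitions, in the same spirit as the earlier propositions on $\oplus$, on enlarging $n$, and on stacking $\iota_*$.
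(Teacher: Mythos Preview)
Your proposal is correct and follows essentially the same route as the paper: show $(\iota^* M)^\omega = \iota^*(M^\omega)$ to handle $S$ and $Z$, then use that $\iota^*$ is exact, preserves (quasi\nobreakdash-)free modules, and commutes with $(-)^*$ to handle $Q^i$. The paper verifies the last compatibility via Lemma~\ref{lem:dual_description} (the scalar-part description $M^* \cong \overline M^{\#}_\Lambda$, which is visibly unchanged when $\Lambda$ is replaced by a finite-index subgroup) rather than your Frobenius-extension adjunction $R \cong \Hom_{R'}(R,R')$, and it makes the $\omega$-orthogonal step crisper by observing that $p \in M^\omega$ is characterized by vanishing of the scalar part $\omega(m,p)_0$ for all $m \in M$ --- a condition literally unchanged under coarse graining --- which is tighter than your claim that ``$\omega$ itself is unchanged'' (strictly speaking the coarse-grained form is $R'$-valued, not $R$-valued).
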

\begin{proof}
Let $M \subset P$ be a submodule and $p \in P$. Then $p \in M^{\omega}$ if and only if the scalar part of $\omega(m,p)$ vanishes. The scalar part is unchanged by coarse graining, so $(\iota^* M)^\omega = \iota^* (M^{\omega})$. This establishes first two equalities in \eqref{eq:cg_charges}. For the last one, $\iota^*$ is an exact functor which takes free modules to free modules and commutes with $(-)^*$, as one verifies using Lemma \ref{lem:dual_description}.
\end{proof}
 
\section{Mobility theorem}

A local excitation is said to be mobile if there exist local Pauli operators which `move' it in all non-compact directions of the lattice. By `move', we mean destroying the excitation and creating its displaced copy, without creating additional excitations.

Recall that $Q = L^*/(P/L^\omega)$ describes all local excitations modulo those creatable by local Pauli operators. According to the previous paragraph, an~excitation $e\in L^*$ can be displaced by an element $\gamma \in \Lambda$ if and only if $(x^\gamma - 1)e\in  P/L^\omega$. In conclusion, mobility of all local excitations is equivalent to the existence of a subgroup $\Gamma\subset \Lambda$ of finite index such that $x^\gamma - 1$ annihilates $Q$ for each $\gamma\in \Gamma$. We now show that this condition is also equivalent to the vanishing of the Krull dimension of $R$-module $Q$.

\begin{lem} \label{lem:divisibility}
If $n,r$ are positive integers, let $L_n(r)$ be the largest integer such that $(x-1)^{L_n(r)}$ divides $x^{n^r} -1$ in $\ZZ_n[x]$. For example, $L_p(r)= p^r$ for any prime number $p$. One has $\limsup \limits_{r \to \infty} L_n(r) = \infty$.  
\end{lem}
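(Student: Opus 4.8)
The plan is to convert the divisibility question into one about binomial coefficients modulo $n$, and then bound those with an elementary $p$-adic estimate.

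\emph{Reduction to binomials.} Writing $x=(x-1)+1$ and expanding, one has $x^{N}-1=\sum_{j=1}^{N}\binom{N}{j}(x-1)^{j}$ in $\ZZ[x]$, hence in $\ZZ_n[x]$. Since $x-1$ is monic, the substitution $y=x-1$ (an automorphism of $\ZZ_n[x]$) shows that $(x-1)^{a}$ divides $\sum_j c_j(x-1)^j$ in $\ZZ_n[x]$ if and only if $c_j\equiv 0\pmod n$ for all $j<a$. Taking $N=n^{r}$ gives
\[
L_n(r)=\min\{\,j\ge 1 : \binom{n^{r}}{j}\not\equiv 0\pmod n\,\},
\]
and in particular $L_n(r)\le n^{r}$ since $\binom{n^{r}}{n^{r}}=1$. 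For $n=p$ prime this already recovers $L_p(r)=p^{r}$: indeed $p\mid\binom{p^{r}}{j}$ for $0<j<p^{r}$ (equivalently $x^{p^{r}}-1=(x-1)^{p^{r}}$ in $\ZZ_p[x]$), while $\binom{p^{r}}{p^{r}}=1$.

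\emph{Valuation bound.} Fix $J\ge 1$, and let $v_p$ denote the $p$-adic valuation. From the identity $j\binom{n^{r}}{j}=n^{r}\binom{n^{r}-1}{j-1}$ and $v_p\!\left(\binom{n^{r}-1}{j-1}\right)\ge 0$ we obtain, for every prime $p$ and every $1\le j\le n^{r}$,
\[
v_p\!\left(\binom{n^{r}}{j}\right)\ge r\,v_p(n)-v_p(j).
\]
Now take any integer $r\ge 1+\log_2 J$; then $n^{r}\ge 2J$, and for $1\le j\le J$ we have $v_p(j)\le\log_2 J\le r-1$, so $v_p\!\left(\binom{n^{r}}{j}\right)\ge r\,v_p(n)-(r-1)\ge v_p(n)$ for every prime $p\mid n$ (the last step using $v_p(n)\ge 1$). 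Hence $n\mid\binom{n^{r}}{j}$ for all $1\le j\le J$, so $L_n(r)\ge J+1$ by the formula above.

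Since $J$ was arbitrary, this shows $L_n(r)\to\infty$ as $r\to\infty$, which is stronger than the asserted $\limsup_{r\to\infty}L_n(r)=\infty$. I expect no serious obstacle: the one point requiring care is the first step, namely that over the non-domain $\ZZ_n[x]$ the divisibility of $x^{N}-1$ by a power of $x-1$ is controlled exactly by vanishing of binomial coefficients mod $n$ (monicity of $x-1$ being the crucial feature). If one wished, the valuation bound could be sharpened to the equality $v_p\!\left(\binom{n^{r}}{j}\right)=r\,v_p(n)-v_p(j)$ for $j<p^{\,r\,v_p(n)}$ via Kummer's carry theorem, yielding $L_n(r)=\min_{p\mid n}p^{(r-1)v_p(n)+1}$; but this refinement is not needed.
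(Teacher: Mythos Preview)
Your proof is correct and takes a genuinely different route from the paper's.

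The paper proceeds by a short recursive trick: it first observes the factorization $x^n-1=-(x-1)^2\sum_{j=0}^{n-1}(j+1)x^j$ over $\ZZ_n[x]$, giving $L_n(1)\ge 2$, and then substitutes $x\mapsto x^{n^r}$ to obtain $L_n(2r)\ge L_n(r)^2$. Iterating along the subsequence $r=2^k$ yields $L_n(2^k)\ge 2^{2^k}$, hence $\limsup L_n(r)=\infty$.

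Your argument instead expands $x^{n^r}-1=\sum_{j\ge 1}\binom{n^r}{j}(x-1)^j$, reduces divisibility by powers of the monic $x-1$ to vanishing of binomial coefficients modulo $n$, and then bounds $v_p\binom{n^r}{j}$ via the identity $j\binom{n^r}{j}=n^r\binom{n^r-1}{j-1}$. Each step checks out; in particular the translation $y=x-1$ works over the non-domain $\ZZ_n[x]$ precisely because $x-1$ is monic, as you note.

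What each approach buys: the paper's substitution argument is slicker and avoids any arithmetic of binomial coefficients, but only controls a geometric subsequence of $r$'s (hence merely $\limsup$). Your approach is more computational but yields the stronger conclusion $\lim_{r\to\infty}L_n(r)=\infty$ with an effective bound $L_n(r)>J$ for $r\ge 1+\log_2 J$, and (as you remark) can be pushed to an exact formula via Kummer's theorem.
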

\begin{proof}
Factorization $x^n-1 = - (x-1)^2 \sum \limits_{j=0}^{n-1} (j+1) x^j$ implies that $L_n(1)\geq 2$. We will show that $L_n(2r) \geq L_n(r)^2$. Write $x^{n^r}-1 = (x-1)^{n^r} f(x)$. Then
\begin{align}
x^{n^{2r}}-1 &= (x^{n^r})^{n^r}-1 = (x^{n^r}-1)^{L_n(r)} f(x^{n^r}) \\ &=  (x-1)^{L_n(r)^2} f(x)^{L_n(r)} f(x^{n^r}). \nonumber
\end{align}
\end{proof}

\begin{prop} \label{prop:mob}
If $\mathfrak a\subset R$ is an ideal, then $\dim(R/\mathfrak a) =0$ if and only if there exists a subgroup $\Gamma \subset \Lambda$ of finite index such that $x^\gamma-1 \in \mathfrak a$ for every $\gamma \in \Gamma$.
\end{prop}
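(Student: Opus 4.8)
The plan is to prove the two implications separately, using the structure theory of $R=\ZZ_n[\Lambda]$ developed earlier together with the divisibility estimate in Lemma \ref{lem:divisibility}.

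\textbf{The easy direction.} Suppose there exists a finite index subgroup $\Gamma\subset\Lambda$ with $x^\gamma-1\in\mathfrak a$ for all $\gamma\in\Gamma$. Then $R/\mathfrak a$ is a quotient of $R/(x^\gamma-1 : \gamma\in\Gamma)=\ZZ_n[\Lambda/\Gamma]$, which is a finite ring since $\Lambda/\Gamma$ is finite (as $\Gamma$ has finite index and $\Lambda$ is finitely generated). A finite ring is Artinian, hence has Krull dimension $0$, and therefore so does its quotient $R/\mathfrak a$. (If $\mathfrak a=R$ the dimension is $-\infty\le 0$, which is fine, or one can simply note $R/\mathfrak a=0$.)

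\textbf{The hard direction.} Assume $\dim(R/\mathfrak a)=0$. Write $\Lambda=\Lambda_1\oplus\Lambda_2$ with $\Lambda_1$ finite and $\Lambda_2\cong\ZZ^D$ free, so $R=A[\Lambda_2]=A[x_1^{\pm1},\dots,x_D^{\pm1}]$ where $A=\ZZ_n[\Lambda_1]$ is finite. I want to produce a finite index $\Gamma$. It suffices to handle each of a set of generators of $\Lambda$ and intersect the resulting subgroups (a finite intersection of finite index subgroups has finite index), so fix $\lambda\in\Lambda$ and let $t=x^\lambda$. The key point is that $\dim(R/\mathfrak a)=0$ forces $t$ to be "almost" a unit modulo $\mathfrak a$ in a strong sense; concretely, since $R/\mathfrak a$ is Noetherian of dimension $0$ it is Artinian, so the descending chain of ideals generated by powers of the image $\bar t$ of $t-1$... more precisely, I will argue that the image of $t$ in the Artinian ring $R/\mathfrak a$ has the property that $(t^N-1)\in\mathfrak a$ for some $N>0$: in an Artinian ring every element is either a unit or a zero divisor, but a cleaner route is that the unit group and idempotent structure of the finite-dimensional (over the Artinian base) situation forces $\bar t$ to have finite multiplicative order modulo the nilradical, and then to lift this. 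Here is where Lemma \ref{lem:divisibility} enters: given that $(t-1)^M\in\mathfrak a$ for suitable $M$ (nilpotence of $\bar t-1$ modulo something, obtained because $R/\mathfrak a$ is Artinian and $A$ is finite so the relevant residue fields are finite), I choose $r$ with $L_n(r)\ge M$; then $(x-1)^M \mid x^{n^r}-1$ in $\ZZ_n[x]$, hence substituting $x\mapsto t$ gives $t^{n^r}-1\in (t-1)^M R\subset\mathfrak a$. Thus $\gamma:=n^r\lambda$ satisfies $x^\gamma-1\in\mathfrak a$, and $\Gamma_\lambda:=n^r\ZZ\cdot\lambda + (\text{rest})$... — more carefully, doing this for each generator $\lambda_i$ of $\Lambda$ yields integers $N_i$ with $x^{N_i\lambda_i}-1\in\mathfrak a$, and then $\Gamma:=\bigoplus_i N_i\ZZ\lambda_i$ is a finite index subgroup of $\Lambda$ with $x^\gamma-1\in\mathfrak a$ for all $\gamma\in\Gamma$, using that $x^{a+b}-1 = x^a(x^b-1)+(x^a-1)$ to build up from generators.

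\textbf{Main obstacle.} The delicate step is establishing that, under $\dim(R/\mathfrak a)=0$, the image of $t-1$ is nilpotent modulo $\mathfrak a$ (equivalently $(t-1)^M\in\mathfrak a$ for some $M$), or at least that $t$ has finite order modulo $\mathfrak a$ up to replacing $\mathfrak a$ by a power trick. One must rule out, e.g., behaviour where $\bar t$ generates an infinite subring. The cleanest argument: $R/\mathfrak a$ is Artinian, hence a finite product of Artinian local rings, each with finite residue field (residue fields are quotients of $A$, which is finite); in each local factor $\bar t$ maps to a nonzero element of a finite field (it is a unit, else $\dim$ would grow along $\bar t=0$... actually one handles the non-unit case via idempotents), so $\bar t$ has finite multiplicative order there, say dividing some $e$; then $\bar t^e-1$ lies in each maximal ideal, hence in the nilradical, hence $(\bar t^e-1)^M=0$ for some $M$, and finally $(x^e-1)^M$ is a polynomial to which I apply the $L_n$ bound. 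Verifying the unit-versus-idempotent bookkeeping in the product-of-local-rings decomposition, and that all exponents can be chosen uniformly, is the part requiring care; everything after that is a direct application of Lemma \ref{lem:divisibility} and elementary manipulation of $\Lambda$.
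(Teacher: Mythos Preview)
Your strategy is essentially the same as the paper's: use finiteness of residue fields to get $x_i^e-1\in\sqrt{\mathfrak a}$, use Noetherianity to get $(x_i^e-1)^M\in\mathfrak a$, then invoke Lemma~\ref{lem:divisibility} to replace $(x_i^e-1)^M$ by $x_i^L-1$. The paper phrases this via maximal ideals of $R$ containing $\mathfrak a$ rather than the Artinian decomposition of $R/\mathfrak a$, but these are equivalent viewpoints.

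There is one genuine slip in your justification. You write that the residue fields of the Artinian local factors of $R/\mathfrak a$ are ``quotients of $A$, which is finite''. This is false: the residue fields are quotients of $R=A[\Lambda_2]$, not of $A$, and $R$ is infinite whenever $D>0$. The correct reason they are finite is that each $R/\mathfrak m$ is a field which is finitely generated as a $\ZZ_p$-algebra (for $p$ the residue characteristic), hence a finite extension of $\ZZ_p$ by Zariski's lemma. The paper simply asserts ``$R/\mathfrak m$ is a finite field'' without elaboration, so you are not missing a clever trick, just citing the right fact.

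A minor point: your worry about ``the non-unit case via idempotents'' is unnecessary. Since $x^\lambda$ is a unit in $R$ (with inverse $x^{-\lambda}$), its image in any quotient ring, in particular in each Artinian local factor and its residue field, is automatically a unit. There is no case analysis to do.
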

\begin{proof}
$\impliedby $ : $R/\mathfrak a$ is a finite ring, so $\dim(R/\mathfrak a)=0$.

$\implies$ : choose $\lambda_1, \dots, \lambda_D \in \Lambda$ which generate a subgroup of finite index and put $x_i = x^{\lambda_i}$. If $\mathfrak m \subset R$ is a maximal ideal, then $R / \mathfrak m$ is a finite field, so there exists a positive integer such that $x_i^r -1 \in \mathfrak m$. As $\dim(R/\mathfrak a) =0$, there exist finitely many maximal ideals $\mathfrak m \subset R$ containing $\mathfrak a$. Thus it is possible to choose $r$ such that
\begin{equation}
    (x_1^r - 1, \dots, x_D^r -1) \subset \bigcap_{\mathfrak{m}\supset \mathfrak a} \mathfrak{m}= \sqrt{\mathfrak a}.
\end{equation}
Since $R$ is Noetherian, $\sqrt{\mathfrak a}^N \subset \mathfrak a$ for large enough $N$. Lemma \ref{lem:divisibility} implies that there exist $N,L$ such that
\begin{equation}
 (x^L_1 - 1, \dots, x_D^L -1) \subset ((x_1^r - 1)^N, \dots, (x_D^r -1)^N)\subset \mathfrak a.  
\end{equation}
We may take $\Gamma$ to be the span of $L \lambda_1, \dots, L \lambda_D$.
\end{proof}

\begin{cor} \label{cor:2d_DIM}
All local excitations of a stabilizer code $\mathfrak C$ are mobile if and only if $\dim(Q(\mathfrak C))=0$. In particular this is true if $D = 1$ or $\mathfrak C$ is saturated and $D=2$.  
\end{cor}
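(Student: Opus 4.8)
The plan is to deduce Corollary \ref{cor:2d_DIM} by combining the purely algebraic equivalence of Proposition \ref{prop:mob} with the dimension bound on $Q(\mathfrak C)$ from Proposition \ref{prop:Q_dim}. First I would recall the physical reformulation given in the paragraph preceding Lemma \ref{lem:divisibility}: an excitation $e \in L^*$ can be displaced by $\gamma \in \Lambda$ precisely when $(x^\gamma - 1)e \in P/L^\omega$, so \emph{all} local excitations are mobile if and only if there is a finite-index subgroup $\Gamma \subset \Lambda$ with $(x^\gamma - 1)Q(\mathfrak C) = 0$ for every $\gamma \in \Gamma$. This last condition says exactly that each $x^\gamma - 1$ lies in $\ann(Q(\mathfrak C))$, i.e.\ in the ideal $\mathfrak a = \ann(Q(\mathfrak C))$.

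Next I would apply Proposition \ref{prop:mob} with this ideal $\mathfrak a$: the existence of such a finite-index $\Gamma$ is equivalent to $\dim(R/\ann(Q(\mathfrak C))) = 0$, which by definition is $\dim(Q(\mathfrak C)) = 0$ (or $Q(\mathfrak C) = 0$, in which case $\dim$ is $-\infty$ and mobility holds vacuously — here one should note that $\mathfrak a = R$, every $x^\gamma - 1$ trivially lies in $R$, and Proposition \ref{prop:mob} still applies since $R/R = 0$ has Krull dimension $-\infty \le 0$; alternatively just treat the zero module as a trivial special case). That gives the main equivalence: all local excitations are mobile $\iff \dim(Q(\mathfrak C)) = 0$.

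For the ``in particular'' clause I would invoke Proposition \ref{prop:Q_dim}. When $D = 1$, part 1 (or part 3 with $i=0$) gives $\dim(Q(\mathfrak C)) = \dim(Q^0(\mathfrak C)) \le D - 1 - 0 = 0$, so mobility holds. When $\mathfrak C$ is saturated and $D = 2$, part 4 with $i = 0$ gives $\dim(Q^0(\mathfrak C)) \le D - 2 - 0 = 0$, so again $\dim(Q(\mathfrak C)) \le 0$ and every local excitation is mobile.

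I do not expect a serious obstacle here: the corollary is essentially a bookkeeping combination of three earlier results, and the only point requiring a moment's care is the degenerate case $Q(\mathfrak C) = 0$, where one must check that the statements of Propositions \ref{prop:mob} and \ref{prop:Q_dim} are compatible with the convention $\dim(0) = -\infty$. The substantive content — the number-theoretic Lemma \ref{lem:divisibility} and the homological bound in Proposition \ref{prop:Q_dim} — has already been established, so the proof of the corollary itself is short.
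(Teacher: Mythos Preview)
Your argument is correct and matches the paper's approach exactly: the equivalence is immediate from Proposition~\ref{prop:mob} applied to $\mathfrak a = \ann(Q(\mathfrak C))$, and the ``in particular'' clause follows from the dimension bounds in Proposition~\ref{prop:Q_dim}. One small slip: for $D=1$ it is part~3 (not part~1) of Proposition~\ref{prop:Q_dim} that gives $\dim(Q^0)\le 0$, since part~1 only concerns $Q^i$ with $i\ge D$.
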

\begin{proof}
The second part of the statement follows from dimension bounds in Proposition \ref{prop:Q_dim}.
\end{proof}

Though logically equivalent, the condition $\dim(R/\mathfrak a)=0$ avoids mentioning a~finite index subgroup of $\Lambda$. It is also the easier condition to establish in a~proof, due to the large number of results in dimension theory. An example is given by Corollary \ref{cor:2d_DIM} above.

A direct characterization of mobility for $i$-dimensional topological charges in $Q^i, \ i>0$ may be possible, given an interpretation of charges in terms of extended excitations. We leave this to future efforts. Instead we make the conjectural definition that mobility for $Q^i$ is still equivalent to $\dim (Q^i)=0$. We sometimes call a code $\mathfrak C$ mobile if $\dim (Q^i(\mathfrak C))=0$ for all $i$. In the next section we will see that under this assumption elements of $Q^i(\mathfrak C)$ may indeed be interpreted as excitations, which are mobile in a suitable sense.

\section{Codes with only mobile excitations}

This section is devoted to analysis of topological charges for mobile codes. Mobility allows to describe topological charges in terms of \v Cech cocycles. Cup product for \v Cech cohomology fits a physical process commonly known as braiding. It furnishes an algebraic description of exchange relations for mobile excitations. A direct physical interpretation of \v Cech cocycles is also given.

\subsection{Mathematical preliminaries}

If $A$ is a ring and $M$ an $A$-module, let $\E_A(M)$ be the injective envelope of $M$. We refer to Appendix \ref{sec:loc_coh} for other definitions and facts used below.

\begin{prop} \label{prop:tors_Rhash}
Let $\mathfrak a \subset R$ be an ideal such that $\dim(R/\mathfrak a) =0$. Then
\begin{equation}
    \Gamma_{\mathfrak a}(R^{\#}) \cong \bigoplus_{\mathfrak m} \E_R(R/\mathfrak m),
\end{equation}
the sum being taken over maximal ideals of $R$ containing $\mathfrak a$.
\end{prop}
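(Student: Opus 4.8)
The statement to prove is that when $\dim(R/\mathfrak{a}) = 0$, the $\mathfrak{a}$-torsion submodule $\Gamma_{\mathfrak a}(R^{\#})$ decomposes as $\bigoplus_{\mathfrak m \supseteq \mathfrak a} \E_R(R/\mathfrak m)$. The key structural input is that $R^{\#}$ is an injective $R$-module (established in the proof of Lemma~\ref{lem:RGor}), so $\Gamma_{\mathfrak a}(R^{\#})$, being the submodule of elements annihilated by a power of $\mathfrak a$, is itself injective — this is the standard fact that the section functor $\Gamma_{\mathfrak a}$ preserves injectives (recorded in Appendix~\ref{sec:loc_coh}). An injective module over a Noetherian ring decomposes as a direct sum of indecomposable injectives $\E_R(R/\mathfrak p)$ indexed (with multiplicity) by primes $\mathfrak p$; so the whole problem reduces to (i) determining which primes occur and (ii) showing each occurs exactly once.

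**Key steps.** First I would invoke Matlis theory: $\Gamma_{\mathfrak a}(R^{\#}) \cong \bigoplus_{\mathfrak p} \E_R(R/\mathfrak p)^{(\mu_{\mathfrak p})}$ for some cardinals $\mu_{\mathfrak p}$. Since every element of an $\mathfrak a$-torsion module is annihilated by a power of $\mathfrak a$, and $\E_R(R/\mathfrak p)$ is $\mathfrak a$-torsion only when $\mathfrak a \subseteq \mathfrak p$, the only primes that can appear are those containing $\mathfrak a$; as $\dim(R/\mathfrak a) = 0$, these are finitely many maximal ideals $\mathfrak m$. Second, to pin down the multiplicity $\mu_{\mathfrak m}$, I would compute $\dim_{R/\mathfrak m}\Hom_R(R/\mathfrak m, \Gamma_{\mathfrak a}(R^{\#}))$, which counts the number of copies of $\E_R(R/\mathfrak m)$ (using that $\Hom_R(R/\mathfrak m, \E_R(R/\mathfrak m'))$ is $R/\mathfrak m$ if $\mathfrak m = \mathfrak m'$ and $0$ otherwise). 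But $\Hom_R(R/\mathfrak m, \Gamma_{\mathfrak a}(R^{\#})) = \Hom_R(R/\mathfrak m, R^{\#})$ since $R/\mathfrak m$ is already $\mathfrak a$-torsion, and $\Hom_R(R/\mathfrak m, R^{\#}) = (R/\mathfrak m)^{\#}$ by the adjunction $\Hom_R(-, R^{\#}) = (-)^{\#}$. Since $R/\mathfrak m$ is a finite field (a finite ring that is a field), $(R/\mathfrak m)^{\#} = \Hom_{\ZZ_n}(R/\mathfrak m, \ZZ_n)$ is a one-dimensional $R/\mathfrak m$-vector space — this is the one computational point to check, and it follows because $\ZZ_n$ is a QF ring so $(-)^{\#}$ preserves length, forcing $(R/\mathfrak m)^{\#}$ to have the same $R/\mathfrak m$-dimension as $R/\mathfrak m$, namely $1$. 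Hence $\mu_{\mathfrak m} = 1$ for every $\mathfrak m \supseteq \mathfrak a$, giving the claimed formula.

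**Main obstacle.** The genuinely delicate point is justifying that $\Gamma_{\mathfrak a}(R^{\#})$ is injective and then legitimately applying the Matlis structure theorem — this needs $R$ Noetherian (which holds: $R$ is a finitely generated algebra over the Artinian ring $\ZZ_n[\Lambda_1]$, or directly, $R = \ZZ_n[\Lambda]$ is Noetherian by Hilbert's basis theorem applied to the Laurent polynomial part) and needs the fact that $\Gamma_{\mathfrak a}$ sends injectives to injectives. I would make sure the latter is cited from Appendix~\ref{sec:loc_coh}. A secondary subtlety is being careful that $\mathfrak a$-torsion of $\E_R(R/\mathfrak m)$ for $\mathfrak m \supseteq \mathfrak a$ is all of $\E_R(R/\mathfrak m)$: this holds because $\E_R(R/\mathfrak m)$ is $\mathfrak m$-torsion (every element is killed by a power of $\mathfrak m$, as $R$ is Noetherian and $\mathfrak m$ is maximal), hence a fortiori $\mathfrak a$-torsion; so these summands are retained in full, and no summand $\E_R(R/\mathfrak p)$ with $\mathfrak a \not\subseteq \mathfrak p$ survives. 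Everything else is bookkeeping with Matlis duality.
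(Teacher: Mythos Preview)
Your argument is correct and follows essentially the paper's approach: both establish that $\Gamma_{\mathfrak a}(R^{\#})$ is injective and then pin it down via the computation $\dim_{R/\mathfrak m}(R/\mathfrak m)^{\#} = 1$, the only cosmetic difference being that the paper first reduces to a single $\mathfrak m$ via Lemma~\ref{lem:lc_codim_zero} and argues by essentiality of $k^{\#}$ rather than by Matlis multiplicities. One small correction: the fact that $\Gamma_{\mathfrak a}$ preserves injectives is not actually recorded in Appendix~\ref{sec:loc_coh}; the paper cites \cite[Proposition~3.88]{Lam} for it.
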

\begin{proof}
Lemma \ref{lem:lc_codim_zero} allows to reduce to the case of $\mathfrak a$ being itself a maximal ideal $\mathfrak m$. We put $k = R / \mathfrak m$. $R$-module $R^{\#}$ represents the exact cofunctor $(-)^{\#}$ on the category of $R$-modules, so it is injective. By \cite[Proposition~3.88]{Lam}, $\Gamma_{\mathfrak m}(R^{\#})$ is also injective. It is easy to see that $k^{\#} \cong \{ \varphi \in R^{\#} \, | \, \mathfrak m \varphi =0 \}$ is an essential submodule of $\Gamma_{\mathfrak m}(R^{\#})$, so $\Gamma_{\mathfrak m}(R^{\#})=\E_R(k^{\#})$. The proof will be completed by showing that $k^{\#} \cong k$ as an $R$-module. As $k^{\#}$ is annihilated by $\mathfrak m$, it is a $k$-vector space. We have to argue that its dimension over $k$ is $1$. Let $p$ be the characteristic of $k$. Every element of $k^{\#}$ factors through $\ZZ_p$, so
\begin{equation}
    \dim_{\ZZ_p}(k^{\#}) = \dim_{\ZZ_p}(\Hom_{\ZZ_p}(k,\ZZ_p)) = \dim_{\ZZ_p}(k)
\end{equation}
and hence $\dim_k(k^{\#}) = \frac{\dim_{\ZZ_p}(k^{\#})}{\dim_{\ZZ_p}(k)}=1$.
\end{proof}

\begin{lem} \label{lem:height}
Every maximal ideal of $R$ has height $D$.
\end{lem}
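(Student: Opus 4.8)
The plan is to prove that every maximal ideal $\mathfrak m \subset R$ has height exactly $D = \mathrm{rk}(\Lambda)$. First I would reduce to the case $\Lambda = \Lambda_1 \oplus \Lambda_2$ with $\Lambda_1$ finite and $\Lambda_2 \cong \ZZ^D$ free, as was done in the proof of Lemma \ref{lem:RGor}. Since $\ZZ_n[\Lambda_1]$ is Artinian, it is a finite product $\prod_{i=1}^s A_i$ of Artin local rings, hence $R = \prod_{i=1}^s A_i[\Lambda_2]$ and $\mathrm{Spec}(R)$ is the disjoint union of the $\mathrm{Spec}(A_i[\Lambda_2])$. So it suffices to treat a single factor $R_i = A_i[\Lambda_2]$, a Laurent polynomial ring in $D$ variables over an Artin local ring $A_i$; note $\dim R_i = D$ by standard dimension theory, and $R_i$ is Gorenstein (hence Cohen--Macaulay) by Lemmas \ref{lem:Gor_loc_dim}, \ref{lem:Gor_pol}.

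The key point is then that a Laurent polynomial ring over an Artin local ring is catenary and equidimensional, so all maximal ideals have the same height. Concretely: $R_i$ is a localization of the ordinary polynomial ring $A_i[y_1,\dots,y_D]$ (inverting the $y_j$), and $A_i[y_1,\dots,y_D]$ is Cohen--Macaulay; any Cohen--Macaulay ring is catenary, and for a Cohen--Macaulay local ring $\dim = \mathrm{depth}$ and all maximal chains of primes have the same length. For the global statement I would argue that $R_i$ is a finitely generated algebra over the Artin (in particular, Noetherian of dimension $0$) local ring $A_i$, so by the dimension formula for finitely generated algebras over such a ring, for every maximal ideal $\mathfrak m$ of $R_i$ one has $\mathrm{ht}(\mathfrak m) = \dim R_i = D$. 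Alternatively, and perhaps most cleanly: $(R_i)_{\mathfrak m}$ is a Cohen--Macaulay local ring, and since $R_i$ is a domain-like situation is not available, I would instead invoke that $R_i$ is a \emph{Jacobson} Cohen--Macaulay ring that is equidimensional, giving $\mathrm{ht}(\mathfrak m) = \dim R_i$ for all maximal $\mathfrak m$.

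In more detail, the cleanest route uses the going-down / dimension formula: for a Noetherian ring $B$ of finite dimension that is a finitely generated algebra over a field or an Artinian ring, and for a maximal ideal $\mathfrak m$, one has $\mathrm{ht}(\mathfrak m) + \dim(B/\mathfrak m) = \dim B$ provided $B$ is equidimensional and catenary; here $\dim(B/\mathfrak m) = 0$ since $\mathfrak m$ is maximal, and $B = R_i$ is equidimensional because it is Cohen--Macaulay with a single minimal prime (namely $\mathfrak m_i[\Lambda_2]$, shown to be the unique minimal prime in the proof of Lemma \ref{lem:RGor}) and catenary because it is Cohen--Macaulay. Hence $\mathrm{ht}(\mathfrak m) = \dim R_i = D$. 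Reassembling over the finitely many factors $A_i$, every maximal ideal of $R$ has height $D$.

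The main obstacle I anticipate is bookkeeping around the non-domain nature of $R$ (it has nilpotents coming from $A_i$ and is disconnected), which means one cannot just cite "Laurent polynomial ring over a field is equidimensional of finite type." The resolution is exactly the reduction above: pass to the connected components $A_i[\Lambda_2]$, use that each $A_i$ is Artin local so $\dim A_i = 0$ and $A_i[\Lambda_2]$ has a unique minimal prime, and then the Cohen--Macaulay (Gorenstein) property forces catenary plus equidimensional, which upgrades "$\dim R_i = D$" to "every maximal ideal has height $D$." Everything else is a routine appeal to dimension theory of finitely generated algebras over Artinian rings.
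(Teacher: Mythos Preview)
Your argument is correct and, at its core, uses the same mechanism as the paper: reduce to a finitely generated algebra over a field by killing a nilpotent ideal, then invoke the standard height result for such algebras. The paper's route is shorter: it first applies the Chinese remainder theorem to assume $n = p^t$, then observes that $\ZZ_{p^t}[\Lambda] \to \ZZ_p[\Lambda]$ has nilpotent kernel $(p)$, so the two rings have the same poset of primes; the result for $\ZZ_p[\Lambda]$ is then cited as standard (Eisenbud, Cor.~13.4). Your decomposition of $\ZZ_n[\Lambda_1]$ into Artin local factors $A_i$ and passage to $R_i/\mathfrak m_i[\Lambda_2] \cong k_i[\Lambda_2]$ achieves exactly the same reduction, just organized differently.

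The Cohen--Macaulay/catenary/equidimensional layer you add is not needed and is doing less work than you suggest: catenary plus equidimensional alone does \emph{not} force all maximal ideals to have the same height (e.g.\ $k[x]_{(x)}[y]$ is a Cohen--Macaulay domain with maximal ideals of heights $1$ and $2$). What actually carries your argument is the phrase ``finitely generated algebra over a field or an Artinian ring'' together with the unique minimal prime, which is precisely the paper's input. So your proof is valid, but the Gorenstein/CM machinery can be dropped entirely in favor of the two-line nilpotent-kernel reduction the paper uses.
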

\begin{proof}
$R$ is a product of rings $\ZZ_{p^t}[\Lambda]$ where $p$ is prime and $t \in \mathbb N$, so we may assume that $n=p^t$ with no loss of generality. Then $R$ is an extension of $S=\ZZ_{p}[\Lambda]$ by a nilpotent ideal, so its poset of prime ideals is isomorphic to that of $S$. The result for $S$ is standard, see e.g. \cite[Corollary 13.4]{Eisenbud}.
\end{proof}

\begin{prop} \label{prop:loc_ch_R}
Let $\mathfrak a \subset R$ be an ideal such that $\dim(R/\mathfrak a) =0$ and let $M$ be a quasi-free module. Then $\H^j_{\mathfrak a}(M)=0$ for $j \neq D$ and 
\begin{equation}
    \H^D_{\mathfrak a}(M) \cong  \left( \bigoplus_{\mathfrak m} \E_R(R/\mathfrak m) \right) \otimes M,
\end{equation}
the sum being taken over maximal ideals of $R$ containing $\mathfrak a$.
\end{prop}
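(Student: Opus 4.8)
The plan is to reduce the computation of $\H^j_{\mathfrak a}(M)$ for a general quasi-free $M$ to the already-known case $M = R^{\#}$ of Proposition \ref{prop:tors_Rhash}, via a change-of-ring/duality argument together with the vanishing of higher $\Ext$ and $\Tor$ against quasi-free modules from Lemma \ref{lem:vanishing_Ext}. The first step is to recall that $\H^j_{\mathfrak a}(-)$ may be computed by the \v Cech complex on a generating set of $\mathfrak a$, or equivalently as $\varinjlim_t \Ext^j_R(R/\mathfrak a^t, -)$; since $\dim(R/\mathfrak a)=0$ and each maximal ideal containing $\mathfrak a$ has height $D$ by Lemma \ref{lem:height}, this is a local-cohomology computation in top degree $D$ on a $D$-dimensional Gorenstein ring (Lemma \ref{lem:RGor}), which already suggests the concentration in degree $D$.

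The main step is to move $M$ across. Write $M = M_0 \otimes_{\ZZ_n} R$ with $M_0$ a finitely generated $\ZZ_n$-module. The key observation is that $M \otimes_R R^{\#} \cong M^{\#_R}$ is injective (it represents the exact cofunctor $(-)^{\#}$ restricted appropriately — more precisely $M_0^{\#_{\ZZ_n}} \otimes_{\ZZ_n} R^{\#} $ is the relevant injective), so one can build an injective resolution of $R$ whose terms remain quasi-free-flat enough that tensoring with $M$ stays exact, and then apply $\Gamma_{\mathfrak a}$. Concretely I would argue: local cohomology commutes with the flat base change $-\otimes_{\ZZ_n}\text{(something)}$ in a suitable sense, but the cleaner route is to use that $M$ is flat over $R$ when $M_0$ is $\ZZ_n$-free, handle that case first by the flat-base-change formula $\H^j_{\mathfrak a}(M) \cong \H^j_{\mathfrak a}(R)\otimes_R M$, and then bootstrap to general $M_0$ using the cyclic decomposition $M_0 = \bigoplus \ZZ_k$ and the long exact sequence of local cohomology attached to $0 \to R \xrightarrow{k} R \xrightarrow{l} \coker \to 0$ — noting that, because $\dim(R/\mathfrak a)=0$, multiplication by $k$ on $\H^D_{\mathfrak a}(R)$ and the shape of these sequences force all $\H^j$ for $j\neq D$ to vanish and identify $\H^D_{\mathfrak a}(M) \cong \H^D_{\mathfrak a}(R)\otimes_R M$. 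Finally, $\H^D_{\mathfrak a}(R)$ itself: since $R$ is Gorenstein of dimension $D$, $\H^D_{\mathfrak a}(R)$ is (essentially by definition of the canonical module, or by the case $M=R^{\#}$ combined with $R\cong$ a shift of $R^{\#}$ locally) the injective hull $\bigoplus_{\mathfrak m \supset \mathfrak a}\E_R(R/\mathfrak m)$; this is where Proposition \ref{prop:tors_Rhash} plugs in directly, because for the Gorenstein ring $R$ the modules $\H^D_{\mathfrak a}(R)$ and $\Gamma_{\mathfrak a}(R^{\#})$ agree.

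The hard part will be justifying the passage from $R^{\#}$ (where injectivity makes $\Gamma_{\mathfrak a}$ easy) to $R$ itself and then to a general quasi-free $M$ without circularity: one must make sure the base-change isomorphism $\H^D_{\mathfrak a}(R)\otimes_R M \cong \H^D_{\mathfrak a}(M)$ is legitimate for $M$ that is quasi-free but not free (hence flat over $R$ only when $M_0$ is $\ZZ_n$-free). The cleanest way around this is: (i) prove the vanishing $\H^j_{\mathfrak a}(M)=0$ for $j\neq D$ using the \v Cech complex $\Cech^\bullet(x_1^r-1,\dots,x_D^r-1;M)$ for suitable $r$ as in Proposition \ref{prop:mob}, whose cohomology is visibly supported at $j\le D$ and, by a depth/grade argument using that $x_1^r-1,\dots,x_D^r-1$ is a regular sequence on the quasi-free module $M$ (each $x_i^r-1$ is a non-zero-divisor on $R$ and on $M_0[\Lambda]$), vanishes below $D$ as well; (ii) for $j=D$, identify $\H^D$ as the cokernel of the last \v Cech map and use Lemma \ref{lem:vanishing_Ext} to commute this cokernel with $-\otimes_{\ZZ_n} R$ applied to $M_0$, reducing to $M=R$ and thence to Proposition \ref{prop:tors_Rhash}. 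I expect the regular-sequence/grade verification on the non-free quasi-free module $M$ to be the only genuinely delicate point; everything else is formal manipulation of \v Cech complexes and the long exact sequences already used in Lemma \ref{lem:vanishing_Ext}.
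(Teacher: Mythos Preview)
Your proposal is workable but takes a genuinely different route from the paper in the passage from $M=R$ to general quasi-free $M$. After handling $M=R$ (by localizing at a maximal ideal $\mathfrak m$ and invoking Lemma~\ref{lem:Gor_loc_ch} for the Gorenstein local ring $R_{\mathfrak m}$ --- exactly the ``since $R$ is Gorenstein'' step you mention in passing), the paper observes that the \v Cech complex $\Cech^\bullet(\mathbf t,R)$ is then a complex of flat $R$-modules with cohomology concentrated in degree $D$, hence a flat resolution of $E=\E_R(R/\mathfrak m)$ shifted by $D$. Since $\Cech^\bullet(\mathbf t,M)=\Cech^\bullet(\mathbf t,R)\otimes_R M$, this yields $\CH^p(\mathbf t,M)\cong\Tor^R_{D-p}(E,M)$ for \emph{every} $M$, and a single appeal to Lemma~\ref{lem:vanishing_Ext} disposes of both the vanishing and the identification at once. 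Your plan instead separates the two: a regular-sequence/depth argument on $M=M_0[\Lambda]$ for vanishing below degree $D$, and right-exactness of $-\otimes_{\ZZ_n}M_0$ (cokernels commute with tensor) for the top degree. That is correct and arguably more hands-on, but it is two arguments where the paper uses one, and your cyclic-decomposition/LES bootstrap is really just the $\Tor$ computation unpacked. Finally, the detour you propose through Proposition~\ref{prop:tors_Rhash} and $R^{\#}$ to identify $\H^D_{\mathfrak a}(R)$ is unnecessary and, as you yourself flag, delicate to make non-circular; the paper never touches $R^{\#}$ here, going straight to Lemma~\ref{lem:Gor_loc_ch} after localization.
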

\begin{proof}
Lemma \ref{lem:lc_codim_zero} allows to reduce to the case of $\mathfrak a$ being a maximal ideal $\mathfrak m$. First consider the case $M = R$. By maximality of $\mathfrak m$ and $\H^j_{\mathfrak m}(R)$ being $\mathfrak m$-torsion, every element of $R \setminus \mathfrak m$ acts as an invertible endomorphism of $\H^j_{\mathfrak m}(R)$. Thus we have $R$-module isomorphisms $\H^j_{\mathfrak m}(R) \cong \H^j_{\mathfrak m}(R)_{\mathfrak m} \cong \H^j_{ \mathfrak m' } (R_{\mathfrak m})$, where $\mathfrak m'$ is the extension of $\mathfrak m$ in $R_{\mathfrak m}$. The second isomorphism follows from Lemma \ref{lem:loc_ch_localization}. By Lemma \ref{lem:height}, $R_{\mathfrak m}$ is a Gorenstein ring of dimension $D$, so Lemma \ref{lem:Gor_loc_ch} gives $\H^j_{\mathfrak m}(R)=0$ for $j \neq D$ and $\H^D_{\mathfrak m}(R) \cong \E_{R_{\mathfrak m}}(R_{\mathfrak m}/\mathfrak m') \cong \E_R(R / \mathfrak m)$.

Local cohomology can be computed using the \v{C}ech complex, so the result for $M=R$ shows that $\Cech^\bullet(\mathbf t, R)$ is a flat resolution of $\E := \E_R(R / \mathfrak m)$, up to a~degree shift. Since $\Cech^\bullet(\mathbf t, M) \cong \Cech^\bullet(\mathbf t , R) \otimes_R M$, this implies that for any module $M$ we have $\CH^p(\mathbf t, M) \cong \Tor_{D - p}^R(E,M)$. Now specialize to the case of $M$ being quasi-free and invoke Lemma \ref{lem:vanishing_Ext}.
\end{proof}

Let $\Gamma \subset \Lambda$ be a subgroup such that $\Lambda / \Gamma$ is finite and let $\gamma_1 , \dots, \gamma_D$ be a~basis of $\Gamma$. We put 
\begin{equation}
 x_i = x^{\gamma_i}, \qquad t_i = 1-x^{\gamma_i}, \qquad \mathfrak a = (t_1, \dots ,t_D)
 \label{eq:Cech_setup}
\end{equation}
and consider the \v{C}ech complex $\Cech^\bullet(\mathbf t, R)$ (see Appendix \ref{sec:Cech}). Lemma \ref{lem:loc_ch_Cech} and Propositions \ref{prop:tors_Rhash}, \ref{prop:loc_ch_R} show that its only nonzero cohomology module $\CH^D(\mathbf t, R)$ is isomorphic to $\Gamma_{\mathfrak a}(R^{\#})$. Our next goal is to construct an explicit isomorphism.

\begin{defn} \label{def:formal_series}
Let $\ZZ_n[[\Lambda]]$ be the set of formal sums $\sum_{\lambda \in \Lambda} r_\lambda x^\lambda$. This is an abelian group, but in general not a ring: the product
\begin{equation}
    \left( \sum_{\lambda \in \Lambda} r_{\lambda} x^{\lambda}  \right) \left( \sum_{\mu \in \Lambda} r'_{\mu} x^{\mu} \right) = \sum_{\lambda \in \Lambda} \left( \sum_{\mu \in \Lambda} r_{\lambda - \mu} r'_{ \mu} \right) x^\lambda
\end{equation}
is well-defined only if for every $\lambda \in \Lambda$ there are only finitely many $\mu \in \Lambda$ such that both $r_{\lambda -\mu}$ and $r'_{ \mu}$ is nonzero. This condition is always satisfied if one of the two factors is in $R$, so $\ZZ_n[[\Lambda]]$ is an $R$-module. Using the pairing
\begin{equation}
    \ZZ_n[\Lambda] \times \ZZ_n[[\Lambda]] \ni (r,r') \mapsto (rr')_0 \in \ZZ_n,
\end{equation}
we identify $\ZZ_n[[\Lambda]]$ with $R^{\#}$.
\end{defn}

Recall that $\Cech^D(\mathbf t, R)=R_{t_1 \dots t_D}$ and that $\CH^D(\mathbf t, R)$ is the quotient of $R_{t_1 \dots t_D}$ by the sum of images of $R_{t_1 \dots t_{j-1} t_{j+1} \dots t_D}$ (module of coboundaries). 

\begin{defn} \label{def:Res}
We consider formal Laurent expansions of $\frac{1}{t_i}$ (regarded as elements of $R^{\#}$) into positive and negative powers of $x_i$:
\begin{equation}
    \left( \frac{1}{t_i} \right)_+ = \sum_{j=0}^\infty x_i^j, \qquad \left( \frac{1}{t_i} \right)_- = - \sum_{j=1}^\infty x_i^{-j},
\end{equation}
The residue homomorphism $\Res : R_{t_1 \dots t_D} \to R^{\#}$ is defined by
\begin{equation}
    \frac{r}{t_1^{k_1} \cdots t_D^{k_D}} \mapsto r \prod_{i=1}^D \left[ \left( \frac{1}{t_i} \right)^{k_i}_+ - \left( \frac{1}{t_i} \right)^{k_i}_- \right].
    \label{eq:Res_hom}
\end{equation}
This is well-defined because $t_i \left( \frac{1}{t_i} \right)_{\pm} = 1$. 
\end{defn}

\begin{prop}
$\ker(\Res)$ is the module of coboundaries and the image of $\Res$ is $\Gamma_{\mathfrak a}(R^{\#})$. Therefore $\Res$ induces an isomorphism $\CH^D(\mathbf t, R) \to \Gamma_{\mathfrak a}(R^{\#})$.
\end{prop}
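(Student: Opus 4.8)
The plan is to verify the two claims—that $\ker(\Res)$ equals the module of coboundaries and that $\im(\Res) = \Gamma_{\mathfrak a}(R^{\#})$—by direct computation with the explicit formula \eqref{eq:Res_hom}, exploiting the identification $R^{\#} = \ZZ_n[[\Lambda]]$ from Definition \ref{def:formal_series}.

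First I would analyze the image. A general element $\frac{r}{t_1^{k_1} \cdots t_D^{k_D}}$ with $r \in R$ is sent to $r \prod_i \left[ \left( \frac{1}{t_i}\right)^{k_i}_+ - \left(\frac{1}{t_i}\right)^{k_i}_-\right]$. Since $t_i \left(\frac{1}{t_i}\right)_{\pm} = 1$, each bracketed factor is annihilated by $t_i^{k_i}$ in $\ZZ_n[[\Lambda]]$ (one checks $t_i^{k_i}\!\left[\left(\frac{1}{t_i}\right)^{k_i}_+ - \left(\frac{1}{t_i}\right)^{k_i}_-\right] = 1 - 1 = 0$), so the whole product, and hence $\Res$ of any element, is annihilated by a power of $\mathfrak a$; thus $\im(\Res) \subset \Gamma_{\mathfrak a}(R^{\#})$. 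For the reverse inclusion I would take an arbitrary element $\xi \in R^{\#} = \ZZ_n[[\Lambda]]$ with $\mathfrak a^N \xi = 0$ for some $N$, and show it lies in the image: since $t_i^N \xi = 0$, finitely many coefficients of $\xi$ in each $x_i$-direction survive, so $\xi$ is supported on a set which is finite in the directions of the chosen basis $\gamma_1,\dots,\gamma_D$ of $\Gamma$ but possibly spread over the finite quotient $\Lambda/\Gamma$; writing $\xi$ as a finite $\ZZ_n$-combination of ``box-supported'' pieces and exhibiting each such piece as $\Res$ of a suitable fraction $\frac{r}{t_1^{k_1}\cdots t_D^{k_D}}$ (choosing $r$ a Laurent polynomial concentrating the support and $k_i$ large enough) finishes surjectivity onto $\Gamma_{\mathfrak a}(R^{\#})$.

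Next I would handle the kernel. Since we already know from Lemma \ref{lem:loc_ch_Cech} and Propositions \ref{prop:tors_Rhash}, \ref{prop:loc_ch_R} that $\CH^D(\mathbf t, R) \cong \Gamma_{\mathfrak a}(R^{\#})$, and the module of coboundaries $B$ is by definition the sum of images of the localizations $R_{t_1\cdots \widehat{t_j} \cdots t_D} \to R_{t_1\cdots t_D}$, it suffices to show $B \subseteq \ker(\Res)$ together with surjectivity of $\Res$; an equality of cardinalities/cokernels then forces $\ker(\Res) = B$. Concretely, $B \subseteq \ker(\Res)$ reduces to checking that any fraction $\frac{r}{\prod_{i\neq j} t_i^{k_i}}$ (with no $t_j$ in the denominator) maps to $0$: in the product formula the $j$-th slot then carries $k_j = 0$, i.e. the factor $\left[\left(\frac{1}{t_j}\right)^0_+ - \left(\frac{1}{t_j}\right)^0_-\right] = 1 - 1 = 0$, so the residue vanishes. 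For the reverse containment $\ker(\Res)\subseteq B$ I would combine surjectivity of $\Res$ with the known isomorphism: $\Res$ factors through $\CH^D(\mathbf t, R) = R_{t_1\cdots t_D}/B$ (since $B\subseteq\ker\Res$) as a surjection onto $\Gamma_{\mathfrak a}(R^{\#})$, and since the source $\CH^D(\mathbf t, R)$ is already abstractly isomorphic to the target, this induced surjection between isomorphic modules must be injective—invoking here that these are finitely generated (or at least well-behaved) enough that a surjective endomorphism-up-to-isomorphism is an isomorphism; alternatively, one checks injectivity directly by a clean-up argument on representatives.

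The main obstacle I anticipate is making the kernel computation fully rigorous without circularity: the slick argument leans on the abstract isomorphism $\CH^D(\mathbf t, R)\cong\Gamma_{\mathfrak a}(R^{\#})$ already in hand, and one must be careful that ``surjection between isomorphic modules is an isomorphism'' is actually justified in this setting (these modules are not finitely generated as $R$-modules—$\E_R(R/\mathfrak m)$ is Artinian but not Noetherian—so I would instead give the direct argument: any Laurent polynomial $r$ can be reduced modulo $B$ to a canonical ``principal part'' supported on the box $\{0 \le \text{exponent of } x_i < k_i\}$ times the finite quotient, by repeatedly using partial-fraction-type identities such as $\frac{x_i}{t_i} = \frac{1}{t_i} - 1$ to move excess monomials either into lower powers of $t_i$ or out of the $t_i$-denominator entirely—i.e.\ into $B$; then $\Res$ restricted to these canonical representatives is visibly injective). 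This partial-fractions normalization is the technical heart, and it is essentially the same bookkeeping that underlies the classical fact that the top \v Cech cohomology of projective space is spanned by a single class; I would present it as a short lemma on reducing representatives modulo coboundaries rather than grinding through all $2^D$ cases.
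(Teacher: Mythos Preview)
Your surjectivity argument contains a genuine error. The claim ``since $t_i^N \xi = 0$, finitely many coefficients of $\xi$ in each $x_i$-direction survive'' is false: for instance with $D=1$, the series $\sum_{j\in\ZZ} x^j \in \ZZ_n[[\Lambda]]$ is annihilated by $t=1-x$ yet has full support. More generally, $t_i^N\xi=0$ imposes a linear recurrence of depth $N$ on the coefficients in the $i$-th direction, so the coefficient sequence is determined by $N$ consecutive values but is typically infinite. Consequently your plan of ``writing $\xi$ as a finite $\ZZ_n$-combination of box-supported pieces'' cannot work as stated, and your explicit construction of preimages breaks down.

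The paper's proof avoids this by reversing the order of the two steps and using a structural argument rather than explicit computation. After checking (as you do) that coboundaries lie in $\ker(\Res)$ and that the image lands in $\Gamma_{\mathfrak a}(R^{\#})$, it proves \emph{injectivity} of the induced map on $\CH^D(\mathbf t,R)$ first: one computes $\Res$ of the single class $z=[1/(t_1\cdots t_D)]$, finds $\Res(z)=\sum_{\gamma\in\Gamma}x^\gamma$ with annihilator exactly $\mathfrak a$, hence $\ann(z)=\mathfrak a$ and the cyclic submodule $Rz$ meets $\ker(\Res)$ trivially. Since $Rz\cong R/\mathfrak a$ is essential in $\CH^D(\mathbf t,R)\cong\bigoplus_{\mathfrak m}\E_R(R/\mathfrak m)$, the kernel is forced to be zero. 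Surjectivity then follows because source and target are abstractly isomorphic Artinian modules and an injection between such is automatically surjective. Note this is the opposite of the Noetherian heuristic you invoked (surjection $\Rightarrow$ injection), which you rightly flagged as suspect; the correct direction here is injection $\Rightarrow$ surjection, via Artinianity.

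Your partial-fractions normalization idea for injectivity is in principle sound and would eventually work, but it is considerably more laborious than the essential-submodule trick, and you would still need a correct surjectivity argument afterwards---most cleanly the Artinian one just described.
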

\begin{proof}
A \v{C}ech coboundary is a sum of elements as on the left hand side of \eqref{eq:Res_hom} with at least one $k_i$ equal to zero, each of which is annihilated by $\Res$. Moreover, the right hand side of \eqref{eq:Res_hom} is annihilated by $t_i^{k_i}$, so it belongs to $\Gamma_{\mathfrak a}(R^{\#})$. We have obtained an induced homomorphism $ \CH^D(\mathbf t, R) \to \Gamma_{\mathfrak a}(R^{\#})$. From now on the symbol $\Res$ refers to this induced homomorphism. Let $z$ be the cohomology class of $\frac{1}{t_1 \dots t_D}$. Clearly $\mathfrak a \subset \ann(z)$. We evaluate
\begin{equation}
    \Res (z) = \sum_{\gamma \in \Lambda} x^\gamma.
\end{equation}
One checks that the annihilator of the right hand side is $\mathfrak a$, so $\ann(z) \subset \mathfrak a$. We deduce that the submodule $M$ of $\CH^D(\mathbf t, R)$ generated by $z$ intersects $\ker(\Res)$ trivially. Clearly $M$ is an essential submodule of $\CH^D(\mathbf t, R)$, so $\Res$ is injective. Propositions \ref{prop:tors_Rhash}, \ref{prop:loc_ch_R} imply that it is an isomorphism.
\end{proof}

\subsection{Physical interpretations of charges} \label{sec:operators}

For the rest of this section we assume that $\mathfrak C=(\Lambda, L,P)$ is a Lagrangian stabilizer code such that $\dim(Q^i(\mathfrak C))=0$ for every $i$. Proposition \ref{prop:mob} allows us to choose a subgroup $\Gamma \subset \Lambda$ of finite index such that $x^\gamma-1$ annihilates all $Q^i(\mathfrak C)$ for all $\gamma \in \Gamma$. With this $\Gamma$, we consider the \v{C}ech complex as discussed around \eqref{eq:Cech_setup}.
\subsubsection{Charges as \v Cech cocyles}
\begin{prop} \label{prop:charges_Cech}
We have $Q^i(\mathfrak C) \cong \CH^{i+1}(\mathbf t, P/L)$ for $0 \leq i \leq D-2$.
\end{prop}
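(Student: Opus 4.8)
The plan is to identify the \v{C}ech cohomology $\CH^\bullet(\mathbf t, P/L)$ with an $\Ext$-module via local cohomology, and then recognize that $\Ext$-module as $Q^i(\mathfrak C)$. The starting observation is that $\CH^j(\mathbf t, M) \cong \H^j_{\mathfrak a}(M)$ for any $R$-module $M$ (Lemma~\ref{lem:loc_ch_Cech}), so it suffices to compute $\H^j_{\mathfrak a}(P/L)$. Here the choice of $\Gamma$ is crucial: since $x^\gamma - 1$ annihilates every $Q^i(\mathfrak C) = \Ext^{i+1}_R(\overline{P/L},R)$ for $\gamma \in \Gamma$, and $Q^{D-1}(\mathfrak C)$ too (by Proposition~\ref{prop:Q_dim}, together with the trivial $Q^i=0$ for $i\geq D$), the ideal $\mathfrak a = (t_1,\dots,t_D)$ lies in the annihilator of all the higher $\Ext$-modules $\Ext^{\geq 1}_R(\overline{P/L},R)$. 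I would package this as the statement that these $\Ext$-modules already are $\mathfrak a$-torsion, which is what makes local duality applicable.

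The key computational step is a local-duality / spectral-sequence argument. Using a quasi-free resolution $F_\bullet \to P/L$ (legitimate for computing $\Ext_R(\overline{P/L},R)$ by Lemma~\ref{lem:vanishing_Ext}), one has $\CH^\bullet(\mathbf t, P/L) = \H^\bullet$ of the complex $\Cech^\bullet(\mathbf t, R)\otimes_R (P/L)$. By Proposition~\ref{prop:loc_ch_R}, $\Cech^\bullet(\mathbf t,R)$ is (up to the degree shift by $D$) a flat resolution of $E:=\bigoplus_{\mathfrak m \supset \mathfrak a}\E_R(R/\mathfrak m) = \Gamma_{\mathfrak a}(R^{\#})$, so $\CH^p(\mathbf t, P/L)\cong \Tor^R_{D-p}(E, P/L)$. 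Now I would run the hyper-Tor / change-of-rings computation: since $E$ is an injective $R$-module that is $\mathfrak a$-torsion and $R$ is Gorenstein of dimension $D$, Grothendieck local duality gives $\Tor^R_{D-p}(E,P/L)\cong \left(\Ext^{D-(D-p)}_R(P/L, R)\right)^\vee$-type expressions after localizing at the relevant maximal ideals — more precisely, using that $E \cong \H^D_{\mathfrak a}(R)$ and the local duality isomorphism $\H^{D-j}_{\mathfrak a}(N) \cong \Ext^{j}_R(N, R)^\#$ (Matlis-type dual over the zero-dimensional quotient), applied to $N = \overline{P/L}$. Concretely I expect $\CH^{i+1}(\mathbf t, P/L) \cong \H^{i+1}_{\mathfrak a}(P/L) \cong \left(\Ext^{D-i-1}_R(\overline{P/L},\omega_R)\right)^\#$ is \emph{not} the right pairing; rather, running it in the other order, $\H^{j}_{\mathfrak a}$ vanishes except where the corresponding $\Ext$ is $\mathfrak a$-torsion, and one reads off $\CH^{i+1}(\mathbf t, P/L)\cong \Ext^{i+1}_R(\overline{P/L},R) = Q^i(\mathfrak C)$ directly once one checks the duality functor $(-)^\#$ composed with itself is the identity on finite-length pieces (which is Baer/QF, already used in Lemma~\ref{lem:RGor}, and the $k^\#\cong k$ computation in Proposition~\ref{prop:tors_Rhash}).

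A cleaner route, which I would actually pursue to avoid sign/twist bookkeeping, is: compute $\H^\bullet_{\mathfrak a}(F_\bullet)$ two ways. Since each $F_k$ is quasi-free, Proposition~\ref{prop:loc_ch_R} gives $\H^j_{\mathfrak a}(F_k) = 0$ for $j\neq D$ and $\H^D_{\mathfrak a}(F_k) \cong E\otimes_R F_k \cong (F_k)^{\#\#}$-flavored object; feeding the resolution into the hypercohomology spectral sequence $\H^q_{\mathfrak a}(F_{-p}) \Rightarrow \H^{p+q}_{\mathfrak a}(P/L)$ collapses (only the row $q=D$ survives) to give $\H^{D-p}_{\mathfrak a}(P/L) \cong \H_p$ of the complex $E\otimes_R F_\bullet$, i.e. $\Tor^R_p(E,P/L)$ — exactly as in Proposition~\ref{prop:loc_ch_R}. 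Then I would dualize via $(-)^\#$ (exact, since $\ZZ_n$ is QF): because $E = \Gamma_{\mathfrak a}(R^\#)$ and for a quasi-free module $\Hom_R(F, E) \cong \Gamma_{\mathfrak a}(F^\#) \cong (\text{dual of }E\otimes F)$, one gets $(E\otimes_R F_\bullet)^\# \cong \Hom_R(F_\bullet^{?}, R^\#)$, and after identifying $F^\# $ with $\overline{F}^*$-type data via Lemma~\ref{lem:dual_description} the complex computing $\Tor^R_\bullet(E,P/L)^\#$ becomes the complex computing $\Ext^\bullet_R(\overline{P/L},R)$. Matching degrees: $\Tor^R_{D-i-1}(E,P/L)^\# \cong \Ext^{i+1}_R(\overline{P/L},R) = Q^i(\mathfrak C)$, and since everything in sight is a finite-length (hence $\#$-self-bidual) module, $\CH^{i+1}(\mathbf t,P/L)\cong \Tor^R_{D-i-1}(E,P/L) \cong Q^i(\mathfrak C)$. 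The range $0\leq i\leq D-2$ is exactly where this is nontrivial, since $\CH^j$ lives in degrees $0\leq j\leq D$, $\CH^0$ and $\CH^D$ being handled separately (the latter is $\Gamma_{\mathfrak a}$ of a torsion-free module, hence zero, and the former likewise). The main obstacle I anticipate is pinning down the precise duality isomorphism with the correct variance and the antipode twist: keeping track of which side the $\overline{(-)}$ lands on, and verifying that the self-duality $M \cong M^{\#\#}$ is available for all the modules that actually occur (they are finitely generated and $\mathfrak a$-torsion, hence of finite length over $R$, so this holds — but it must be stated). Once that is in place, the degree matching is mechanical.
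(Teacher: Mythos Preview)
Your plan correctly identifies $\CH^{i+1}(\mathbf t, P/L)\cong \Tor^R_{D-i-1}(E,P/L)$ with $E=\Gamma_{\mathfrak a}(R^\#)$; that much is essentially the content of Proposition~\ref{prop:loc_ch_R}. The gap is the next step, where you try to pass from $\Tor^R_{D-i-1}(E,P/L)$ to $\Ext^{i+1}_R(\overline{P/L},R)$ by applying $(-)^\#$ and ``identifying $F^\#$ with $\overline F^*$-type data''. That identification does not hold: Hom--tensor adjunction gives $(E\otimes_R F)^\#\cong \Hom_R(F,E^\#)$, and $E^\#$ is (after localizing at each $\mathfrak m\supset\mathfrak a$) the $\mathfrak m$-adic completion of $R$, not $R$ itself, so $\Hom_R(F,E^\#)$ is not $F^*$. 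Lemma~\ref{lem:dual_description} only gives $M^*\cong \overline M^\#_\Lambda$, and the ``finitely supported'' subscript is precisely what fails to appear when you dualize $E\otimes_R F$. Standard local duality would hand you $\Ext^{D-i-1}_R(P/L,R)$ (no antipode), not $\Ext^{i+1}_R(\overline{P/L},R)$, so there is no off-the-shelf fix either.

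The paper avoids this duality step altogether by choosing a different double complex. Rather than tensoring $\Cech^\bullet(\mathbf t,-)$ with a resolution $F_\bullet$ of $P/L$, it first \emph{dualizes} the resolution $P_\bullet\to P/L$ (with $P_0=P$) and replaces the bottom of the resulting cochain complex $0\to L\to P\to P_1^*\to\cdots$ by its cokernel, obtaining
\[
K^\bullet:\quad 0\to P/L\to P_1^*\to P_2^*\to\cdots .
\]
By construction $H^0(K^\bullet)=0$ and $H^j(K^\bullet)=\Ext^j_R(\overline{P/L},R)$ for $j\geq 1$, and all $K^q$ with $q>0$ are quasi-free. Running the two spectral sequences of $\Cech^\bullet(\mathbf t,K^\bullet)$ then exhibits $\Ext^{i+1}_R(\overline{P/L},R)$ and $\CH^{i+1}(\mathbf t,P/L)$ as the same total cohomology group for $0\leq i\leq D-2$: one filtration degenerates because the $\Ext$ groups are $\mathfrak a$-torsion (so $\Cech^{p>0}(\mathbf t,K^\bullet)$ is exact), and the other is confined to the row $q=0$ below total degree $D$ by Proposition~\ref{prop:loc_ch_R}. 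The point is that $K^\bullet$ carries $P/L$ in degree $0$ \emph{and} has the $\Ext$ groups as its cohomology, so both sides of the desired isomorphism appear on the edges of a single double complex with no Matlis-type dualization needed. Your route would be salvageable only if you could prove $\Tor^R_{D-j}(E,-)\cong \Ext^j_R(\overline{(-)},R)$ directly on $\mathfrak a$-torsion output, which is essentially re-deriving the proposition from a harder direction.
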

\begin{proof}
We can continue the quotient map $P \to P/L$ to a quasi-free resolution $P_\bullet \to P/L$ with $P_0 = P$. Applying $(-)^*$ yields a complex
\begin{equation}
    0 \to L \to P \to P_1^* \to P_2^* \to \dots,
\end{equation}
where we used isomorphisms $(P/L)^* \cong L$ and $P^* \cong P$. From this we have also a cochain complex $K^{\bullet}$ with $K^0 = P/L$, $K^i = P_i^*$ for $i>0$:
\begin{equation}
  K^{\bullet} \  : \   0 \to P/L \to P_1^* \to P_2^* \to \dots
\end{equation}
Its cohomology is trivial in degree zero and $\Ext^\bullet_R(\overline{P/L},R)$ elsewhere. Next, we form a double complex $\Cech^{\bullet}(\mathbf t,K^{\bullet})$, with the following properties:
\begin{itemize}
    \item $\Cech^0(\mathbf t, K^\bullet) \cong K^\bullet$ has cohomology described above. If $p>0$, the complex $\Cech^p(\mathbf t, K^\bullet)$ is exact because $\Cech^p(\mathbf t, -) = \Cech^p(\mathbf t, R) \otimes_R -$ is an exact functor annihilating the cohomology of $K^\bullet$.
    \item $\Cech^\bullet(\mathbf t, K^0)$ has cohomology $\CH^\bullet(\mathbf t,P/L)$. If $q>0$, the complex $\Cech^\bullet(\mathbf t, K^q)$ has nonzero cohomology only in degree $D$, by Proposition \ref{prop:loc_ch_R}.
\end{itemize}
The isomorphism is established either by a~diagram chase or using the double complex spectral sequence. The former approach is essentially elementary and we sketch it below.

We let $d$ be the differential induced from $K^{\bullet}$ and $\delta$ the \v{C}ech differential. Let $i \in \{ 1, \dots D-1 \}$ and consider $q \in \Ext^{i}_R(\overline{P/L},R)$ represented by an element $q^{(0)} \in K^i$ annihilated by $d$. Then also $\delta q^{(0)}$ is annihilated by $d$, so by exactness of $\Cech^1(\mathbf t , K^\bullet)$ there exists $q^{(1)} \in \Cech^{1}(\mathbf t, K^{i-1})$ such that $d q^{(1)} = \delta q^{(0)}$. Hence $\delta q^{(1)}$ is annihilated by $d$. If $i=1$, this implies that $\delta q^{(1)} = 0$ because $d : \Cech^2(\mathbf t, K^0) \to \Cech^2 (\mathbf t, K^1)$ is injective. If $i>1$, we conclude that there exists $q^{(2)} \in \Cech^2(\mathbf t, K^{i-2})$ such that $d q^{(2)} = \delta q^{(1)}$. Continuing like this inductively we obtain a sequence of elements $q^{(j)} \in \Cech^j(\mathbf t, K^{i-j})$, $0 \leq j \leq i$, such that
\begin{equation}
d q^{(0)}=0, \qquad d q^{(j)} = \delta q^{(j-1)} \quad \text{for } j \neq 0, \qquad \delta q^{(i)}=0.
\end{equation}
The \v{C}ech cohomology class of $q^{(i)}$ is declared to be the image of $q$ in $\CH^i(\mathbf t, P/L)$. With similar reasoning one checks that this cohomology class does not depend on arbitrary choices in the construction of $q^{(i)}$. Thus a well-defined homomorphism $h : \Ext^i_R(\overline{P/L},R) \to \CH^i(\mathbf t,P/L)$ is obtained. Performing the same steps reversed yields a homomorphism in the opposite direction, easily seen to be an inverse of $h$.
\end{proof}

\begin{rmk}
If we assume that $x^{\gamma}-1$ annihilates $Q^i(\mathfrak C)$ for every $i \leq d$ for some $0 \leq d \leq D-2$, we may still obtain $Q^i(\mathfrak C) \cong \CH^{i+1}(\mathbf t, P/L)$ for $0 \leq i \leq d$. The proof of Proposition \ref{prop:charges_Cech} goes through with essentially no modifications. Moreover, even with no restrictions on $\dim(Q^i(\mathfrak C))$ we may construct a homomorphism $\CH^{i}(\mathbf t, P/L) \to \Ext^{i}_R(\overline{P/L},R)$ for $1 \leq i \leq D-1$. If $\dim (\Ext^{i}_R(\overline{P/L},R)) \neq 0$, this homomorphism cannot be surjective.
\end{rmk}
\subsubsection{Charges as topological excitations}
Next we provide a concrete interpretation of our charge modules $Q^i(\mathfrak C)$ (reinterpreted as \v{C}ech cocycles by Proposition \ref{prop:charges_Cech}) in terms of operators and physical excitations.

\begin{defn}
We define $\widehat P = P \otimes_R R^{\#}$. Recall that $P \cong P_0[\Lambda]$ for some finite abelian group $\Lambda$, so $\widehat P \cong P_0 [[\Lambda]]$. We will sometimes multiply elements of $R^{\#}$ and $\widehat P$. Such product is well-defined under a condition analogous to the one discussed in Definition \ref{def:formal_series}. Symplectic form on $P$ extends to a pairing between $P$ and $\widehat P$ valued in $R^{\#}$. Under suitable conditions one may also pair two elements of $\widehat P$.
\end{defn}

Elements of $\widehat P$ describe products of Pauli operators (up to phase) with possibly infinite spatial support. Such expressions do not necessarily define bona fide operators on a Hilbert space, but they make sense as automorphisms of the algebra of local operators. Hence they may be applied to states, in general yielding a state in a different superselection sector. The extended symplectic forms captures their ``commutation rules'' with local Pauli operators. 

\begin{defn}
Let $s=(s_1, \dots, s_D)$ be a tuple of elements of the multiplicative group $\{ \pm \}$. We think of $s$ as a label of an orthant in $\Gamma \cong \ZZ^D$. For every $s$ we define an embedding of $P_{t_1 \dots t_D} $ (and hence also of every $P_{t_{i_0 \dots i_p}}$ for a sequence $1 \leq i_0 < \dots < i_p \leq D$, since $P$ is torsion-free) in $\widehat P$ as follows:
\begin{equation}
    \frac{p}{t_1^{k_1} \cdots t_D^{k_D}} \mapsto p \prod_{i=1}^D \left( \frac{1}{t_i} \right)_{s_i}^{k_i}.
\end{equation}
If $ \pi $ is an element of $P_{t_{1} \dots t_D}$, we denote the element of $\widehat P$ obtained this way by $\pi^s$, to emphasize dependence on $s$. 
\end{defn}

Consider a cocycle $\varphi \in \Cech^p(\mathbf t, P/L)$. We lift $\varphi$ to a cochain $\widetilde \varphi \in \Cech^p(\mathbf t, P)$. Then $\sigma = \delta \widetilde \varphi \in \Cech^{p+1}(\mathbf t, L)$ is a cocycle. Note that the map taking the cohomology class of $\varphi$ to the cohomology class of $\sigma$ is the connecting homomorphism in the long exact sequence of \v{C}ech cohomology. Consider images in $\widehat P$ of components of $\widetilde \varphi$ and $\sigma$. Two observations are in order. Firstly, $\widetilde \varphi_{i_1 \dots i_p}^s$ describes an infinite Pauli operator whose support is extended only in directions $i_1 \dots i_p$, and moreover is contained in a shifted orthant specified by $s$. Secondly, each $\sigma_{i_0 \dots i_p}^s$ is $\omega$-orthogonal to $L$. Hence we have an identity
\begin{equation}
\sum_{j=0}^p (-1)^j \omega \left. \left( \cdot , \widetilde \varphi_{i_0 \dots i_{j-1} i_{j+1} \dots i_p}^s \right) \right|_{L} = 0 \qquad \text{in } \overline L^{\#}.
\end{equation}
Let us rewrite this as
\begin{equation}
    \omega \left. \left( \cdot , \widetilde \varphi_{i_1 \dots i_p}^s \right) \right|_{L} = \sum_{j=1}^p (-1)^j \omega \left. \left( \cdot , \widetilde \varphi_{i_0 \dots i_{j-1} i_{j+1} \dots i_p}^s \right) \right|_{L}
\end{equation}
By comparing supports of the two sides of this equation we can see that action of $\widetilde \varphi_{i_1 \dots i_p}^s$ creates an excitation (violation of the stabilizer condition) which is supported on a thickened boundary of the support of $\widetilde \varphi^s_{i_1 \dots i_p}$. Hence $\widetilde \varphi^s_{i_1 \dots i_p}$ represents a~$p$-dimensional extended operator which creates an excitation on the $(p-1)$-dimensional boundary of its support. This excitation does not depend on the lift of the cocycle $\varphi$ to $\widetilde \varphi$.

Next, let us suppose that $\varphi$ represents the trivial cohomology class. That is, we have $\varphi = \delta \psi$ for some $\psi \in \Cech^{p-1}(\mathbf t, P/L)$. We lift $\psi$ to a cochain $\widetilde \psi$ valued in $P$ and choose $\widetilde \varphi = \delta \widetilde \psi$. Then
\begin{equation}
    \widetilde \varphi_{i_1 \dots i_p}^s = \sum_{j=1}^{p} (-1)^{j-1} \widetilde \psi_{i_1 \dots i_{j-1} i_{j+1} \dots i_p}^s,
\end{equation}
which shows that the $(p-1)$-dimensional excitation created by $\widetilde \varphi^s_{i_1 \dots i_p}$ can be created by operators $\widetilde \psi_{i_1 \dots i_{j-1} i_{j+1} \dots i_p}^s$, each of which is extended in only $p-1$ (rather than $p$) directions.

Note that even though an excitation corresponding to a $p$-cocycle $\varphi$ is created by an operator with $p$-dimensional support, it can be shifted by an element of $\Gamma$ by the action of a $(p-1)$-dimensional operator. Indeed, $x^{\gamma}-1$ annihilates cohomology, so $(x^\gamma-1) \varphi$ is a coboundary. The result follows from the discussion of the previous paragraph. 

Summarizing, an element of $Q^p(\mathfrak C) \cong \CH^{p+1}(\mathbf t, P/L)$ gives rise to an excitation extended in $p$ dimensions, determined modulo excitations created by $p$-dimensional operators.

\subsubsection{Charges as higher form symmetries}

Now let $\varphi \in \Cech^p(\mathbf t, P/L)$ be a cocycle. We consider the expression
\begin{equation}
    \widetilde \varphi^{\Res}_{i_1 \dots i_p} = \sum_{s_{i_1}, \dots, s_{i_p} \in \{\pm\}} s_{i_1} \cdots s_{i_p} \, \widetilde \varphi^s_{i_1 \dots i_p}.
    \label{eq:Res_operator}
\end{equation}
This makes sense because $\widetilde \varphi^s_{i_1 \dots i_p}$ does not depend on $s_j$ for $j \not \in \{ i_1 , \dots ,i_p \}$. $\widetilde \varphi^{\Res}_{i_1 \dots i_p}$ is a $p$-dimensional extended operator. By the earlier discussion, the excitation it creates is supported in the union of a finite collection of subsets infinitely extended in at most $p-1$ directions. On the other hand, there exists some $k$ such that each $t_{i_j}^k$ annihilates it. One checks that a nonzero element with such property must be infinitely extended in all $p$ directions. We obtain the conclusion that $\left. \omega (\cdot , \widetilde \varphi^{\Res}_{i_1 \dots i_p}) \right|_{L}=0$, i.e. $\widetilde \varphi^{\Res}_{i_1 \dots i_p}$ preserves the state defined by the stabilizer condition.

Since the cochain $\widetilde \varphi$ allows to construct a symmetry $\widetilde \varphi^{\Res}_{i_1 \dots i_p}$ of the ground state for every coordinate $p$-plane (labeled by $i_1 < \dots <i_p$), it defines a~$(D-p)$-form symmetry of $\mathfrak C$. Let us now investigate to what extent this $(D-p)$-form symmetry is uniquely determined by the cohomology class of $\varphi$.

Firstly, let us fix the cocycle $\varphi$ and ask for the dependence on the choice of the lift $\widetilde \varphi$. For two different lifts $\widetilde \varphi, \widetilde \varphi'$, the difference $\widetilde \varphi_{i_1 \dots i_p}'^{\Res}-\widetilde \varphi_{i_1 \dots i_p}^{\Res}$ is an infinite sum of elements of $L$, i.e. it represents a product of local operators separately preserving the ground state. A $p$-dimensional ($p \geq 1$) operator of this form should be regarded as a trivial $(D-p)$-form symmetry.

To understand the dependence on the cocycle $\varphi$ representing a given cohomology class, let us suppose that $\varphi = \delta \psi$. We lift $\psi$ and choose $\widetilde \varphi = \delta \widetilde \psi$. With this choice expression \eqref{eq:Res_operator} vanishes on the nose.

Summarizing, we have argued that the definition \eqref{eq:Res_operator} defines a $(D-p)$-form symmetry of $\mathfrak C$, which depends only on the cohomology class of $\varphi$. This means that we have an alternative interpretation of $Q^p(\mathfrak C)$ as a group of $(D-p-1)$-form symmetries of $\mathfrak C$ (possibly nontrivially acted upon by $\Lambda$).

\subsection{Braiding}

\begin{defn}
Let $\varphi \in \CH^{p}(\mathbf t, P/L)$, $\psi \in \CH^{q}(\mathbf t, P/L)$. The cup product defined in the Appendix \ref{sec:Cech} yields an element
\begin{equation}
    \overline \varphi \smile \delta \psi \in \CH^{p+q}(\mathbf t, \overline{P/L} \otimes_R L),
\end{equation}
where $\delta$ is the connecting homomorphism $\CH^{q}(\mathbf t, P/L) \to \CH^{q+1}(\mathbf t, L)$ in a long exact sequence. Using the map (with a slight abuse of notation) in \v{C}ech cohomology induced by the symplectic pairing  $\omega: \overline{P/L} \otimes_R L \to R$ we obtain a class 
\begin{equation}
    \omega(\overline \varphi \smile \delta \psi) \in  \CH^{p+q}(\mathbf t, R).
\end{equation}
This class is trivial if $p+q \neq D$, by Proposition \ref{prop:loc_ch_R}. Let us suppose that $p+q = D$. Then we may define
\begin{equation}
    \Omega (\varphi, \psi) = \Res  \left( \omega(\overline \varphi \smile \delta \psi) \right) \in \Gamma_{\mathfrak a}(R^{\#}).
\end{equation}
\end{defn}

\begin{prop}
Let $\varphi \in \CH^{p}(\mathbf t, P/L)$, $\psi \in \CH^{D-p}(\mathbf t, P/L)$. We have:
\begin{enumerate}
    \item Graded skew-symmetry: $\Omega(\varphi, \psi)  = - (-1)^{p(D-p)} \overline{\Omega(\psi,\varphi)}$.
    \item Translation covariance: $ \Omega(\varphi, r\psi)  =  \Omega(\overline r \varphi,\psi) = r \Omega(\varphi, \psi)$.
    \item Commutation rule of operators introduced in Subsection \ref{sec:operators}:
    \begin{equation}
    \Omega(\varphi, \psi) = \Res ( \omega(\widetilde \varphi_{1 \dots p}, \widetilde \psi_{p+1 \dots D})) =\omega( \widetilde \varphi_{1 \dots p}^{\Res}, \widetilde \psi_{p+1 \dots D}^{\Res}). 
    \label{eq:Omega_operators}
    \end{equation}
\end{enumerate}
\end{prop}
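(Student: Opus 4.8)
The plan is to verify the three items in turn, relying throughout on the explicit description of \v Cech cocycles and the residue map established earlier in the section. For item 2, translation covariance, I would simply observe that cup product in the \v Cech complex is $R$-bilinear, that the connecting homomorphism $\delta$ is $R$-linear, that the induced map in cohomology from the sesquilinear pairing $\omega$ satisfies $\omega(\overline{r\varphi}\smile\delta\psi)=\overline{\overline r}\,\omega(\overline\varphi\smile\delta\psi)=r\,\omega(\overline\varphi\smile\delta\psi)$ and similarly $\omega(\overline\varphi\smile\delta(r\psi))=r\,\omega(\overline\varphi\smile\delta\psi)$ since $\delta$ commutes with the $R$-action; and finally that $\Res$ is a homomorphism of $R$-modules (it is the inverse of the isomorphism $\CH^D(\mathbf t,R)\to\Gamma_{\mathfrak a}(R^{\#})$ from the previous proposition). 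Assembling these gives both equalities in item 2.

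For item 1, graded skew-symmetry, the input is the graded-commutativity of the \v Cech cup product, $\alpha\smile\beta=(-1)^{|\alpha||\beta|}\beta\smile\alpha$ (up to the usual sign conventions from Appendix C), combined with the antisymmetry $\omega(m,m')=-\overline{\omega(m',m)}$ from Proposition \ref{prop:quasi_symp}(2). The subtlety is that $\Omega$ is built from $\overline\varphi\smile\delta\psi$, not symmetrically in $\varphi$ and $\psi$; so I would first show that in cohomology $\omega(\overline\varphi\smile\delta\psi) = -\omega(\delta\varphi\smile\overline\psi)$ up to an appropriate sign, using that $\delta$ is a derivation with respect to $\smile$ together with the fact that the relevant classes $\omega(\overline\varphi\smile\overline\psi)$ live in $\CH^{D-1}$, which vanishes by Proposition \ref{prop:loc_ch_R}; this lets one move $\delta$ across the cup product at the cost of a sign. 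Then graded-commutativity of $\smile$ and antisymmetry of $\omega$, together with $R$-linearity of $\Res$ and the fact that $\Res\circ(\overline{\,\cdot\,}) = \overline{\,\cdot\,}\circ\Res$ (which follows from the explicit formula for $\Res$ and the antipode on $R^{\#}$, since $\bigl(\tfrac{1}{t_i}\bigr)_+ - \bigl(\tfrac{1}{t_i}\bigr)_-$ is antipode-symmetric), yield the stated identity $\Omega(\varphi,\psi) = -(-1)^{p(D-p)}\overline{\Omega(\psi,\varphi)}$.

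For item 3, I would unwind the explicit cochain-level formulas. Lifting $\varphi$ to $\widetilde\varphi\in\Cech^p(\mathbf t,P)$ and $\psi$ to $\widetilde\psi\in\Cech^{D-p}(\mathbf t,P)$, the cocycle $\delta\widetilde\psi$ has components in $L$, and the degree-$D$ component of $\overline{\widetilde\varphi}\smile\delta\widetilde\psi$ in the \v Cech complex is (up to sign) $\overline{\widetilde\varphi_{1\dots p}}\otimes(\delta\widetilde\psi)_{p\dots D}$, which after applying $\omega$ becomes $\omega(\widetilde\varphi_{1\dots p},(\delta\widetilde\psi)_{p\dots D})\in\Cech^D(\mathbf t,R)=R_{t_1\cdots t_D}$. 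Since $\delta$ only adds the leg $t_1$ (or $t_D$ depending on ordering) and the other legs are already inverted, one checks $\omega(\widetilde\varphi_{1\dots p},(\delta\widetilde\psi)_{p\dots D})$ represents the same class in $\CH^D$ as $\omega(\widetilde\varphi_{1\dots p},\widetilde\psi_{p+1\dots D})$ up to coboundaries (those with some $k_i=0$, killed by $\Res$); this gives the first equality in \eqref{eq:Omega_operators}. The second equality — passing from $\Res$ of a localized element to the pairing of the orthant-sum operators $\widetilde\varphi^{\Res}_{1\dots p}$ and $\widetilde\psi^{\Res}_{p+1\dots D}$ in $\widehat P$ — follows by expanding $\Res$ via its definition as $\prod_i[(\tfrac{1}{t_i})_+-(\tfrac{1}{t_i})_-]$ and matching term by term against the sum over signs $s$ defining $\pi^{\Res}$; here the key point is that for $\widetilde\varphi_{1\dots p}$ only the legs $1,\dots,p$ are inverted and for $\widetilde\psi_{p+1\dots D}$ only the legs $p+1,\dots,D$, so the two orthant-sums involve disjoint sets of signs and their product reconstitutes exactly the full residue over all $D$ legs. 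The main obstacle I anticipate is bookkeeping the signs in item 1 — reconciling the Koszul sign in graded-commutativity of the \v Cech cup product, the sign in $\omega(m,m')=-\overline{\omega(m',m)}$, and the sign picked up when commuting $\delta$ past $\smile$ — so that they collapse to precisely $-(-1)^{p(D-p)}$; the rest is a matter of carefully tracking the indexing conventions fixed in Appendix C.
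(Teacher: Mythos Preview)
Your plan is correct and follows essentially the same route as the paper. For item 1 the paper simply cites ``graded commutativity and graded Leibniz rule of the cup product and antipode skew-symmetry of $\omega$'', which is exactly your outline; for item 2 the paper says ``obvious'', which your unpacking of $R$-linearity of $\smile$, $\delta$, and $\Res$ makes explicit; and for item 3 the paper expands the single top-degree component of $\overline{\widetilde\varphi}\smile\delta\widetilde\psi$ as the sum $\sum_{j=0}^{D-p}(-1)^j\omega(\widetilde\varphi_{1\dots p},\widetilde\psi_{p\dots\widehat{p+j}\dots D})$ and observes that each $j>0$ term lies in a localization missing $t_{p+j}$, hence is a \v Cech coboundary killed by $\Res$ --- precisely your ``those with some $k_i=0$'' remark. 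One small correction: in item 1 you do not need the vanishing of $\CH^{D-1}(\mathbf t,R)$; the point is simply that $\delta\bigl(\omega(\overline{\widetilde\varphi}\smile\widetilde\psi)\bigr)$ is a coboundary in $\Cech^D(\mathbf t,R)$ and hence annihilated by $\Res$, which is automatic and needs no cohomological input. Your sketch for the second equality in \eqref{eq:Omega_operators} (disjoint sign-sets for the two orthant sums recombining into the full residue product) is a welcome addition, as the paper states this equality but does not argue it in the proof.
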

\begin{proof}
1. follows from the graded commutativity and graded Leibniz rule of the cup product and antipode skew-symmetry of $\omega$. 2. is obvious.

3. From the relevant definitions we have
\begin{equation}
    \Omega(\varphi, \psi) = \Res \sum_{j=0}^{D-p} (-1)^j \omega(\widetilde \varphi_{1 \dots p}, \widetilde \psi_{p \dots p+j-1, p+j+1 \dots D}).
    \label{eq:Omega_intermediate}
\end{equation}
Let $0 < j \leq D-p$. The $j$-th term on the right hand side of \eqref{eq:Omega_intermediate} is the residue of an element of $R_{t_1 \dots t_{p-j-1} t_{p-j+1} \dots D}$, so it vanishes. The $0$-th term is equal to the right hand side of \eqref{eq:Omega_operators}.
\end{proof}

We propose to interpret the scalar part of $\Omega$ as a higher dimensional version of braiding. Thus $\Omega(\varphi, \psi)$ encodes braiding of excitations described by $\varphi, \psi$ as well as their translates. We will see later that for $D=2$ our proposal reduces to known expressions, providing evidence for our interpretation. 

Recall that we have a decomposition $\CH^{p+1}(\mathbf t, P/L) = \bigoplus_{\mathfrak m} \Gamma_{\mathfrak m} \CH^{p+1}(\mathbf t, P/L)$, where $\mathfrak m$ are maximal ideals of $R$ containing $\mathfrak a$. Its summands are charges characterized by specific behavior under translations, so we interpret $\mathfrak m$ as momentum ``quantum numbers''. Note that for every $\mathfrak m$, the ideal $\overline {\mathfrak m}$ obtained by acting with the antipode also contains $\mathfrak a$, as $\overline{\mathfrak a} = \mathfrak a$. We think of $\overline{\mathfrak m}$ as momentum opposite to $\mathfrak m$. The following Proposition shows that two charges with fixed momentum may braid nontrivially only if their momenta are opposite.

\begin{prop} \label{prop:braiding_orthogonality}
Suppose that $\varphi \in \Gamma_{\mathfrak m} \CH^{p+1}(\mathbf t, P/L), \psi \in \Gamma_{\mathfrak m'} \CH^{D-p-1}(\mathbf t, P/L)$. If $\overline{\mathfrak m} \neq \mathfrak m'$, then $\Omega(\varphi , \psi) = 0$.
\end{prop}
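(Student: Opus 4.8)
The plan is to exploit the fact that $\Omega(\varphi,\psi)$ lands in $\Gamma_{\mathfrak a}(R^{\#}) = \bigoplus_{\mathfrak n} \E_R(R/\mathfrak n)$ and to track which summand it can occupy. First I would observe that the assignment $(\varphi,\psi) \mapsto \omega(\overline\varphi \smile \delta\psi)$ is $R$-bilinear in the appropriate sesquilinear sense (this is translation covariance, point 2 of the preceding Proposition): $\Omega(r\varphi,\psi) = \overline r\,\Omega(\varphi,\psi)$ and $\Omega(\varphi,r\psi) = r\,\Omega(\varphi,\psi)$. Now $\varphi \in \Gamma_{\mathfrak m}\CH^{p+1}$ means $\varphi$ is annihilated by a power $\mathfrak m^N$; applying antipode-sesquilinearity, $\overline\varphi$ (in whatever sense it enters the cup product) is annihilated by $\overline{\mathfrak m}^N$, so $\Omega(\varphi,\psi)$ is annihilated by $\overline{\mathfrak m}^N$. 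Similarly $\psi$ being annihilated by $\mathfrak m'^{\,N'}$ forces $\Omega(\varphi,\psi)$ to be annihilated by $\mathfrak m'^{\,N'}$.

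The key step is then purely module-theoretic: an element of $\Gamma_{\mathfrak a}(R^{\#}) = \bigoplus_{\mathfrak n} \E_R(R/\mathfrak n)$ which is killed by a power of $\overline{\mathfrak m}$ lies in the summand $\E_R(R/\overline{\mathfrak m})$, since $\E_R(R/\mathfrak n)$ is $\mathfrak n$-torsion and for $\mathfrak n \neq \overline{\mathfrak m}$ every element of $\overline{\mathfrak m}$ outside $\mathfrak n$ acts invertibly on $\E_R(R/\mathfrak n)$ (as the injective envelope of a module over the localization $R_{\mathfrak n}$). Hence if $\Omega(\varphi,\psi)$ is annihilated both by a power of $\overline{\mathfrak m}$ and by a power of $\mathfrak m'$, it lies in $\E_R(R/\overline{\mathfrak m}) \cap \E_R(R/\mathfrak m')$, which is zero when $\overline{\mathfrak m} \neq \mathfrak m'$.

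To make the antipode bookkeeping precise I would spell out the behavior of $\overline{(-)}$ on $\CH^\bullet$: the complex $\Cech^\bullet(\mathbf t, P/L)$ has $\overline{\Cech^\bullet(\mathbf t, P/L)}$ naturally identified with $\Cech^\bullet(\overline{\mathbf t}, \overline{P/L})$, and since $\overline{\mathbf t}$ generates the same ideal $\mathfrak a = \overline{\mathfrak a}$ up to units, this produces the canonical isomorphism under which $\Gamma_{\mathfrak m}$-components go to $\Gamma_{\overline{\mathfrak m}}$-components. The cup product $\overline\varphi \smile \delta\psi$ is built from $\overline\varphi$, so its $\mathfrak n$-support is contained in $\overline{\mathfrak m}$; pairing with $\omega$ and applying $\Res$ (both $R$-linear for the relevant actions) preserves this. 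The main obstacle, I expect, is not conceptual but the care needed to verify that $\Res$ and the $\omega$-induced map on \v Cech cohomology are genuinely compatible with these torsion-support decompositions — i.e. that no spurious components appear — and to confirm that "annihilated by a power of $\mathfrak m'$'' for $\psi$ really transfers to $\delta\psi$ and then through the cup product, which it does because $\delta$ is an $R$-module map and $\smile$ is $R$-bilinear.
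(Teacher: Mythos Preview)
Your proposal is correct and follows essentially the same line as the paper: both arguments use translation covariance (sesquilinearity) to deduce that $\Omega(\varphi,\psi)$ is annihilated by a power of $\overline{\mathfrak m}$ and by a power of $\mathfrak m'$. The only difference is in the final step: you route through the decomposition $\Gamma_{\mathfrak a}(R^{\#}) = \bigoplus_{\mathfrak n} \E_R(R/\mathfrak n)$ to place $\Omega(\varphi,\psi)$ in the intersection of two distinct summands, whereas the paper simply observes that distinct maximal ideals $\overline{\mathfrak m}$ and $\mathfrak m'$ are comaximal, so $\overline{\mathfrak m}^j + \mathfrak m'^j = R$ and hence $\Omega(\varphi,\psi)$ is annihilated by $1$. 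Your extended discussion of antipode bookkeeping on \v Cech complexes is also unnecessary, since the already-established identity $\Omega(r\varphi,\psi) = \overline r\,\Omega(\varphi,\psi)$ handles everything.
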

\begin{proof}
For some $j$, $\varphi$ is annihilated by $\mathfrak m^j$ and $\psi$ by $\mathfrak m'^j$. Therefore $\Omega(\varphi , \psi)$ is annihilated by $\overline{\mathfrak m}^j + \mathfrak m'^j$. If $\overline{\mathfrak m} \neq \mathfrak m'$, this sum is $R$.
\end{proof}

Decomposition of $\CH^{p+1}(\mathbf t, P/L)$ into $\mathfrak m$-torsion parts is not invariant to coarse-graining. In fact, after sufficient coarse-graining we can assure that $\mathfrak a$ contains all $x^\lambda-1$. Then, for $n$ being a prime power, $\mathfrak a$ is contained in only one maximal ideal. Decomposition into $\mathfrak m$-torsion parts (and more generally, the module structure on $Q^i(\mathfrak C)$) is an invariant protected by the translation symmetry and hence in principle can be used to distinguish SET phases with the same topological order. 

\subsection{Braiding and spin in 2D}

We will now specialize to 2D Lagrangian codes. The assumption $\dim(Q)=0$ is automatically satisfied, as stated in Corollary \ref{cor:2d_DIM}. Hence we have well-defined braiding. Expression \eqref{eq:Omega_operators} agrees with the standard braiding formula as a commutator of two orthogonal string operators. Let us explain this in more detail.

Consider a Lagrangian $\mathfrak C=(\ZZ^2, P, L)$. We have $R = \ZZ_n [x^\pm_1,x^\pm_2]$. There exists some $l>0$ such that $t_i = 1 - x_i^l \in \ann (Q(\mathfrak C))$. Therefore we have the following commutative diagram with exact rows
\begin{figure}[h]
\centering
\begin{tikzcd}[cells={nodes={minimum height=2em}}]
0\arrow[r] & P/L \arrow[r, "\iota_0"] \arrow[d] &  L^*  \arrow[d, "\delta"] \arrow[r] & Q(\mathfrak C) \arrow[r] & 0 \\
0 \arrow[r] & \Cech^1(\mathbf t,P/L) \arrow[r,"\iota_1"] &\Cech^1(\mathbf t,L^*) \arrow[r] & 0
\end{tikzcd}
\end{figure} \\
For any $e\in L^*$, we have $\delta e=(e,e)= \iota_1 (\frac{p_1}{x_1^l-1}, \frac{p_2}{x_2^l-1})$ with $p_i=(x_i^l-1) e\in P/L$. One may check that $e\longmapsto (\frac{p_1}{x_1^l-1}, \frac{p_2}{x_2^l-1})$ defines an isomorphism between $L^*$ and $1$-cocycles, with elements of $P/L$ mapped onto coboundaries. In particular this map induces an isomorphism $L^* / (P/L) \to \CH^1(\mathbf t, P/L)$. A lift of $p_i$ to $P$ represents a Pauli operator which moves the excitation $e$ by $l$ units in the $i$-th direction. For this reason, $p_i$ is sometimes called an $i$-mover.

Let $e_1,e_2 \in L^*$ be two excitations and let $p_i(e_j)$ be their movers. We can then form arbitrarily long string operators
\begin{equation}
    (x_i^{-cl} + \cdots + 1 + \cdots x_i^{cl}) p_i(e_j)
\end{equation}
which transport (displaced) excitations described by $e_j$ by $(2c+1)l$ units of length. Braiding may be related \cite{ haah2021classification, levin2003fermions} to the commutator phase
\begin{equation}
    \omega((x_1^{-cl}+\cdots+x_1^{cl})p_1(e_1), (x_2^{-cl}+\cdots  +x_2^{cl})p_2(e_2))_0
\end{equation}
with sufficiently large $c$. This expression is asymptotically independent of $c$ because the two strings operators cross at most along a finite set. Taking $c$ to infinity, this expression matches the scalar part of \eqref{eq:Omega_operators} with
\begin{equation}
    \varphi= \left( \frac{p_1(e_1)}{x_1^l-1}, \frac{p_2(e_1)}{x_2^l-1} \right), \qquad \psi=\left( \frac{p_1(e_2)}{x_1^l-1}, \frac{p_2(e_2)}{x_2^l-1} \right).
\end{equation}
We remark that it is also equal to the evaluation of the Laurent polynomial $\omega(p_1(e_1),p_2(e_2))$ at $x_1=x_2=1$.

One can also define the topological spin function
\begin{align}
        \theta(e)
    =&\omega((x_1^{-cl}+\cdots+x_1^{-l})p_1(e), (x_2^{-cl}+\cdots  +x_2^{-l})p_2(e))_0 \notag \\
    -& \omega( (x_2^{-cl}+\cdots  +x_2^{-l})p_2(e),(1+x_1^{l}+\cdots+x_1^{cl})p_1(e))_0 \notag\\
    -& \omega((1+x_1^{l}+\cdots+x_1^{cl})p_1(e), (x_1^{-cl}+\cdots+x_1^{-l})p_1(e))_0 \label{eq:top_spin}
\end{align}
with sufficiently large $c$ (the right hand side, as a function of $c$, is eventually constant). It is a quadratic refinement of the braiding pairing: 
\begin{equation}
  \Omega(e,f)_0 = \theta(e+f)-\theta(e)-\theta(f) \text{ and } \theta(ke)= k^2 \theta(e) \text{ for } k \in \mathbb Z_n.
\end{equation}
Formula \eqref{eq:top_spin} appeared first in \cite{haah2021classification}, where the case of prime-dimensional qudits was studied.

\section{Examples}

In this section we discuss examples with concrete codes. They serve several purposes. Firstly, they show that invariants we proposed are nontrivial, calculable and yield what is expected on physical grounds in models which are already well understood. Secondly, they support our physical interpretation of mathematical objects and the conjecture that braiding is non-degenerate. Finally, the last example illustrates certain technical complication that does not arise for codes with
prime-dimensional qudits.

In examples presented below we take $P$ to be a free module $R^{2t}$ with the symplectic form
\begin{equation}
    \omega \left( \begin{pmatrix} a \\ b \end{pmatrix}, \begin{pmatrix} a' \\ b' \end{pmatrix} \right) = \begin{pmatrix} a^\dagger & b^\dagger \end{pmatrix} \underbrace{\begin{pmatrix} 0 & -1 \\ 1 & 0 \end{pmatrix}}_{\text{denote } \lambda} \begin{pmatrix} a' \\ b' \end{pmatrix},
\end{equation}
where $a,a',b,b' \in R^t$ and $\dagger$ denotes transposition composed with antipode. Following \cite{haah2013commuting}, we represent $L$ as the image of a homomorphism $\sigma : R^s \to R^{2t}$, described by a~$2t \times s$ matrix with entries in $R$.

We will also work with cocycles in $\Cech^\bullet(\mathbf t, P/L)$. In calculations it is convenient to identify them with cochains in $\Cech^\bullet(\mathbf t,P)$ which are closed modulo $\Cech^\bullet(\mathbf t, L)$, with two cochains identified if they differ by a cochain in $\Cech^\bullet(\mathbf t, L)$. 

\subsection{3D \texorpdfstring{$\ZZ_n$}{Zn}-toric code}

We take $\Lambda = \ZZ^3$ and denote generators of $R$ corresponding to three basis vectors by $x,y,z$, so that $R$ is a Laurent polynomial ring in three variables $x,y,z$. 3D toric code is defined by $P = R^6$, $L = \im (\sigma)$ with
\begin{equation}
\sigma= \begin{pmatrix}
1-\bar{x} & 0 & 0 & 0\\
1-\bar{y} & 0 & 0 & 0\\
1-\bar{z} & 0 & 0 & 0\\
0 & 0 & z-1 & y-1\\
0 & z-1 & 0 & 1-x\\
0 & 1-y & 1-x & 0
\end{pmatrix}.
\end{equation}
We have the following free resolution of $P/L$
\begin{equation}
0\rightarrow R\xrightarrow{\tau} R^4 \xrightarrow{\sigma} R^{6} \rightarrow P/L \to 0, \qquad \tau=\begin{pmatrix}0\\
x-1\\ 1-y\\ z-1
\end{pmatrix}.
\end{equation}
Erasing $P/L$ and applying $(-)^*$ we obtain
\begin{equation}
0 \to P \xrightarrow{\sigma^\dagger \lambda}  R^4\xrightarrow{\tau^\dagger} R\rightarrow 0.
\end{equation}
Here matrix $\epsilon = \sigma^\dagger \lambda$ (rather than $\sigma^\dagger$) is present because the canonical isomorphism $P \to P^*$ is given by $\lambda$ if both $P$ and $P^*$ are identified with $R^6$. From this resolution we easily get
\begin{align}
    & \Ext^1_R(\overline{P/L},R) \cong \ZZ_n, \quad && \text{generated by the class of } \begin{pmatrix} 1 & 0 & 0 & 0 \end{pmatrix}^{\mathrm{T}} \in R^4, \nonumber \\
    & \Ext^2_R(\overline{P/L},R) \cong \ZZ_n, \quad && \text{generated by the class of } 1 \in R.
\end{align}
Both Ext modules are annihilated by $x-1,y-1,z-1$.

Let us show how \v{C}ech cochains can be obtained from classes found above. The procedure below follows from proof of Proposition 44. In the construction of the \v{C}ech complex we may take $(x_1,x_2,x_3)= (x,y,z)$. Recall that we defined $t_i = 1 - x_i$. Now consider $\begin{pmatrix} 1 & 0 & 0 & 0 \end{pmatrix}^{\mathrm T} \in R^4$. Applying the \v{C}ech differential gives
\begin{align}
 R_{t_1}^4 \oplus R_{t_2}^4 \oplus R_{t_3}^4 & \ni  \left( \begin{pmatrix} 1 \\ 0 \\ 0 \\ 0 \end{pmatrix}, \begin{pmatrix} 1 \\ 0 \\ 0 \\ 0 \end{pmatrix}, \begin{pmatrix} 1 \\ 0 \\ 0 \\ 0 \end{pmatrix} \right) \\
  & = \left( \epsilon \begin{pmatrix} 0 \\ 0 \\ 0 \\ -t_1^{-1} \\ 0 \\ 0 \end{pmatrix}, \epsilon \begin{pmatrix} 0 \\ 0 \\ 0 \\ 0 \\ -t_2^{-1} \\ 0 \end{pmatrix}, \epsilon \begin{pmatrix} 0 \\ 0 \\ 0 \\ 0 \\ 0 \\ -t_3^{-1} \end{pmatrix} \right). \nonumber
\end{align}
The final expression is the image through $\epsilon$ of a certain element of $\Cech^1(\mathbf t, P)$. Let us call this cochain $\varphi$. By construction, it is closed modulo $L$. Let us show how this can be checked by an explicit computation:
\begin{align}
& \Cech^2(\mathbf t, R^6) =R^6_{t_2 t_3} \oplus R^6_{t_1 t_3} \oplus R^6_{t_1 t_2}   \ni \delta \varphi \nonumber  \\
& = \left( \begin{pmatrix} 0 \\ 0 \\ 0 \\ 0 \\ t_2^{-1} \\ -t_3^{-1} \end{pmatrix}, \begin{pmatrix} 0 \\ 0 \\ 0 \\ t_1^{-1} \\ 0 \\ -t_3^{-1} \end{pmatrix}, \begin{pmatrix} 0 \\ 0 \\ 0 \\ t_1^{-1} \\ -t_2^{-1} \\ 0 \end{pmatrix}  \right)    \\
& = \left( \sigma \begin{pmatrix} 0 \\ - t_2^{-1} t_3^{-1} \\ 0 \\ 0 \end{pmatrix}, \sigma \begin{pmatrix} 0 \\ 0 \\ -t_1^{-1} t_3^{-1} \\ 0 \end{pmatrix}, \sigma \begin{pmatrix} 0 \\ 0 \\ 0 \\ - t_1^{-1} t_2^{-1} \end{pmatrix} \right). \nonumber
\end{align}

One can go through a similar procedure with the element generating $\Ext^2$. Let us record the final result:
\begin{equation}
\Cech^2(\mathbf t, R^6) \ni \psi = \left( \begin{pmatrix} \overline t_2^{-1} \overline t_3^{-1} \\ 0 \\0 \\ 0 \\ 0 \\ 0 \end{pmatrix}, \begin{pmatrix}  0 \\ - \overline t_1^{-1} \overline t_3^{-1} \\ 0 \\ 0 \\ 0 \\ 0  \end{pmatrix}, \begin{pmatrix}  0 \\ 0 \\ \overline t_1^{-1} \overline t_2^{-1} \\ 0 \\ 0 \\ 0 \end{pmatrix} \right).
\end{equation}
Having these formulas in hand we evaluate
\begin{equation}
    \Omega(\varphi, \psi )_0 = 1 \in \ZZ_n.
\end{equation}
Hence braiding is a non-degenerate pairing in this example.

It is well known that toric code is closely related to $\ZZ_n$ gauge theory. With this interpretation, line operators corresponding to $\varphi$ are Wilson lines. They create electric excitations at their endpoints. Cocycle $\psi$ corresponds to electric flux (surface) operators, which create magnetic field on the boundary. Braiding between the two excitations is an Aharonov-Bohm type phase. We remark also that the relation between generators of $L$, described by the map~$\tau$, corresponds to Bianchi identity. 

\subsection{4D \texorpdfstring{$\ZZ_n$}{Zn}-toric code}

In a 4D version of the $\ZZ_n$ toric code we have $P = R^{8}$. We let $x_1, \dots, x_4$ be four variables corresponding to generators of $\ZZ^4$ and denote basis vectors of $P$ by $e_1, \dots , e_4, a_1, \dots , a_4$. Consider the free module $R^7$ with basis $\{ g \} \cup \{ f_{ij} \}_{1 \leq i < j \leq 4}$. We define $L = \im(\sigma)$, where $\sigma : R^7 \to P$ is given by
\begin{equation}
    \sigma(g) = \sum_{i=1}^4 (1 - \overline x_i) e_i, \qquad \sigma(f_{ij}) = - (x_i-1) a_j + (x_j-1) a_i.
\end{equation}
Elements $\sigma(g), \sigma(f_{ij})$ generate $L$. To continue $\sigma$ to a resolution of $P/L$, we need to describe relations between generators. Consider the free module $R^4$ with basis $\{ b_{ijk} \}_{1 \leq i < j < k \leq 4}$. Define $\tau_1 : R^4 \to R^7$ by
\begin{equation}
    \tau_1(b_{ijk}) = (x_i-1) f_{jk} - (x_j-1) f_{ik} + (x_k-1) f_{ij}.
\end{equation}
Then $\im(\tau_1) = \ker(\sigma)$, but we still have to take care of relations between relations. Let $\tau_2 : R \to R^4$ be given by
\begin{equation}
    \tau_2 (1) = (x_1-1) b_{234} - (x_2-1) b_{134} + (x_3-1) b_{124} - (x_4-1) b_{123}.
\end{equation}
We have constructed a free resolution
\begin{equation}
    0 \to R \xrightarrow{\tau_2} R^4 \xrightarrow{\tau_1} R^7 \xrightarrow{\sigma} R^8 \to P/L \to 0
\end{equation}
Proceeding as in the 3D case we found
\begin{equation}
    Q^0 \cong \ZZ_n, \qquad Q^1 = 0, \qquad Q^2 \cong \ZZ_n,
\end{equation}
all annihilated by $x_i-1$. After some tedious calculations we found also the \v Cech cochains $\varphi \in \Cech^1(\mathbf t, P)$ and $\psi \in \Cech^3(\mathbf t, P)$ corresponding to generators of $Q^0$ and $Q^2$:
\begin{equation}
\varphi_i = \frac{a_i}{x_i-1}, \qquad \psi_{i^c} = \frac{(-1)^i e_i}{ \prod_{j \neq i} (\overline x_j -1)},
\end{equation}
where $i^c$ denotes the triple of indices complementary to $i$. Given these expressions it is easy to check that
\begin{equation}
    \Omega(\varphi, \psi)_0 = 1.
\end{equation}
Again, braiding is non-degenerate.

\subsection{4D \texorpdfstring{$\ZZ_n$}{Zn} 2-form toric code}

By a 2-form version of the toric code we mean a code in which degrees of freedom are assigned to lattice plaquettes. Starting from dimension $4$ such code is neither trivial nor equivalent to the standard (`1-form') toric code. Module $P \cong R^{12}$ has basis $\{ e_{ij}, a_{ij} \}_{1 \leq i < j \leq 4}$, with nontrivial symplectic pairings of the form $\omega(e_{ij},a_{ij})=1$. Consider the free module $R^8$ with basis $\{ g_i , f_{i^c} \}_{i=1}^4$. We define $L = \im(\sigma)$, where $\sigma : R^8 \to P$ is given by
\begin{align}
    \sigma(g_i) & = - \sum_{j<i } (1 - \overline x_j) e_{ji} + \sum_{j > i} (1 - \overline x_j) e_{ij}, \\
    \sigma(f_{ijk}) &= - (x_i-1) a_{jk} + (x_j-1) a_{ik} - (x_k-1) a_{ij}. \nonumber
\end{align}
$\ker(\sigma)$ coincides with the image of $\tau : R^2 \to R^8$ such that
\begin{equation}
    \tau \begin{pmatrix} 1 \\ 0 \end{pmatrix} = \sum_i (1 - \overline x_i) g_i, \qquad \tau \begin{pmatrix} 0 \\ 1 \end{pmatrix} = \sum_i (-1)^i (x_i-1) f_{i^c}.
\end{equation}
This defines a free resolution
\begin{equation}
    0 \to R^2 \xrightarrow{\tau} R^8 \xrightarrow{\sigma} R^{12} \to P/L \to 0,
\end{equation}
from which we derive
\begin{equation}
Q^0 = 0, \qquad Q^1 \cong \ZZ_n \oplus \ZZ_n, \qquad Q^2 =0,
\end{equation}
with $Q^1$ annihilated by all $x_i-1$. Two \v Cech cochains corresponding to generators of $Q^1$ take the form
\begin{equation}
    \varphi_{ij} = \frac{a_{ij}}{(x_i-1)(x_j-1)}, \qquad \psi_{ij} = \frac{(-1)^{i+j} e_{ij^c}}{(\overline{x_i}-1)(\overline{x_j}-1)},
\end{equation}
where $ij^c$ is the pair of indices complementary to $ij$. We find
\begin{equation}
    \Omega(\varphi, \psi)_0 = 1,
\end{equation}
so braiding is non-degenerate.

\subsection{\texorpdfstring{$\ZZ_n$}{Zn} Ising model}

For the Ising model in zero magnetic field we have $P = R^2$ and $L = \im(\sigma)$, where
\begin{equation}
    \sigma = \begin{pmatrix} x_1 -1 & \cdots & x_D-1 \\ 0 & \cdots & 0 \end{pmatrix},
\end{equation}
where $D \geq 1$ is arbitrary. We see that $\begin{pmatrix} 1 \\ 0 \end{pmatrix} \in L^{\omega \omega} \setminus L$ and $L^{\omega \omega} / L \cong \ZZ_n$, in accord with the interpretation of $L^{\omega \omega}/ L$ in terms of order parameters for spontaneously broken symmetries. Next, we note that $P/L \cong R \oplus R/\mathfrak a$, where $\mathfrak a=(x_1-1, \cdots, x_D-1) $. Hence for every $i>0$ we have
\begin{equation}
    \Ext^i_R(\overline{P/L},R) \cong \Ext^i_R(R/\mathfrak a,R).
\end{equation}
As elements $x_i-1$ form a regular sequence in $R$, this Ext vanishes for $i \neq D$ and $\Ext^D_R(\overline{P/L},R) \cong R / \mathfrak a$. Therefore the only nonzero $Q^i$ is $Q^{D-1} \cong \ZZ_n$. This is consistent with the interpretation of $Q^i$ in terms of $i$-dimensional excitations: the Ising model features domain walls, which are objects of spatial codimension~$1$. However, our formalism does not provide a systematic construction of this domain wall (Ising model is not a Lagrangian code). Let us also remark that we expect that there exists a generalization of braiding that allows to pair $Q^{D-1}$ with $L^{\omega \omega}/L$. Physically such pairing should describe how the value of order parameter changes as the domain wall is crossed.

\subsection{\texorpdfstring{$\ZZ_n$}{Zn} toric code on a cylinder}

Consider the 2D cylinder geometry $\Lambda = \ZZ_L \times \ZZ$. Thus $R = \ZZ_n[x,y^\pm]/(x^L-1)$. We let $P = R^4$ and $L = \im(\sigma)$, where 
\begin{equation}
    \sigma = 
    \begin{pmatrix} 
    1 - \overline x & 0 \\
    1 - \overline y & 0 \\ 
    0 & y-1 \\ 
    0 & x-1 
    \end{pmatrix}.
\end{equation}
Let us put $W_x = \sum_{j=0}^{L-1} x^{j-1} \in R$. Note that $(x-1) W_x = 0$, so
\begin{equation}
(y-1) \begin{pmatrix} 0 \\ 0 \\ W_x \\ 0 \end{pmatrix} =  W_x \begin{pmatrix} 0 \\ 0 \\ y-1 \\ x-1 \end{pmatrix} \in L.
\end{equation}
Since $y-1$ is a regular element, it follows that $\begin{pmatrix} 0 & 0 & W_x & 0 \end{pmatrix}^{\mathrm T} \in L^{\omega \omega}$. Similar calculation shows that $\begin{pmatrix} 0 & W_x & 0 & 0 \end{pmatrix}^{\mathrm T} \in L^{\omega \omega}$. Classes of these two elements generate $L^{\omega \omega} / L \cong \ZZ_n \times \ZZ_n$. One may check also that $L^\omega = L^{\omega \omega}$. Hence there exist $n^2$ superselection sectors containing a ground state and in each of these sectors the ground state is unique. This is different than for the toric code on a torus, for which there is only one superselection sector containing an $n^2$-dimensional space of ground states. This illustrates the difference in physical interpretations of modules $Z(\mathfrak C)$ and $S(\mathfrak C)$.

Let us also mention that in the present example $Q(\mathfrak C) \cong \ZZ_n \times \ZZ_n$, as on a~plane (but not on a torus). Even though the code is effectively one-dimensional (one direction being finite), this does not contradict Proposition \ref{prop:Q_dim} because $Z(\mathfrak C) \neq 0$.

\subsection{\texorpdfstring{$\ZZ_{p^t}$}{Zn} plaquette model}


Let $n = p^t$, where $p$ is a prime number and $t$ a positive integer. We consider a~$\ZZ_{p^t}$ version of Wen's plaquette model \cite{wen2003quantum} on a plane. Thus we take $P = R^2$ and let $L$ be the span of $s = \begin{pmatrix} 1 - xy & x - y \end{pmatrix}^{\mathrm T}$. $L$ is freely generated by $s$, so there exists an element $\varphi \in L^*$ such that $\varphi (s)=1$. Clearly $(x - y) \varphi$ and $(xy - 1) \varphi$ are representable by elements of $P$ and we have
\begin{equation}
    Q \cong R / (x - y, xy - 1).
    \label{eq:plaq_model_Q}
\end{equation}
There exists an abelian group isomorphism $Q \cong \ZZ_n \times \ZZ_n$ (as for the toric code), but in contrast to the case of toric code $Q$ is acted upon nontrivially by translations. Hence this model is in a different SET phase (with translational symmetry) than the toric code. On the other hand, these models are well-known to be equivalent if translational symmetry is ignored.

For a subgroup of $\Lambda$ acting trivially on $Q$, we can take the subgroup of index $4$ generated by $x^2, y^2$. With this choice, we found the following \v Cech cocycle $\varphi$ representing the generator of $Q$ (corresponding via the isomorphism \eqref{eq:plaq_model_Q} to the class of $1$):
\begin{equation}
    \varphi_1 = \frac{\begin{pmatrix} x & x^2 \end{pmatrix}^{\mathrm T}}{1-x^2}, \qquad \varphi_2 = \frac{\begin{pmatrix} -y & y^2 \end{pmatrix}^{\mathrm T}}{1-y^2}.
\end{equation}
Classes of cocycles $\varphi$ and $x \varphi$ form a $\ZZ_n$ basis of \v Cech cohomology.

We will find the decomposition of $Q$ into $\mathfrak m$-torsion parts. If $p \neq 2$, maximal ideals of $R$ containing the annihilator of $Q$ are of the form
\begin{equation}
    \mathfrak m^\pm = (p, x \mp 1, y \mp 1).
\end{equation}
The case $p=2$ is special because then $\mathfrak m^+ = \mathfrak m^-$. We assume that $p \neq 2$ from now on. $\mathfrak m^{\pm}$-torsion submodules of $Q$ correspond to cocycles $\varphi^\pm = (1 \pm x) \varphi$. They also form a $\ZZ_n$ basis of \v Cech cohomology. By Proposition \ref{prop:braiding_orthogonality}, $\varphi^+$ is $\Omega$-orthogonal to $\varphi^-$. Indeed, a calculation gives
\begin{equation}
    \Omega(\varphi, \varphi) = (x+y) \sum_{k,l \in \ZZ} x^{2k} y^{2l},
\end{equation}
and therefore
\begin{align}
    \Omega(\varphi^+ , \varphi^-) &= (1+ \overline x) (1-x) \Omega(\varphi, \varphi ) = 0, \\
    \Omega(\varphi^\pm , \varphi^\pm) &= (1 \pm \overline x) (1 \pm x) \Omega(\varphi, \varphi) = \pm 2 \sum_{k,l \in \ZZ} (\pm x)^k (\pm y)^l. \nonumber 
\end{align}

\begin{rmk}
Redefining $s$ to $\begin{pmatrix} 1 + xy & x+y \end{pmatrix}$ gives a second code, which is related to the one above by a local unitary transformation (which is $y^2$-invariant but not $y$-invariant). Simple calculation gives $Q \cong R / (x+y,xy+1)$, so this code is in a different SET phase than the previous one. 
\end{rmk}

\subsection{Haah's code and \texorpdfstring{$X$}{X}-cube model}

Haah's code and $X$-cube model (over $\ZZ_2$) are defined by
\begin{align}
\sigma_{\mathrm{Haah}} &= \begin{pmatrix}
1+xy+yz+zx & 0\\
1+x+y +z & 0\\
0 & 1+\bar x +\bar y +\bar z\\
0 & 1+\bar x \bar y + \bar y \bar z+\bar z \bar x
\end{pmatrix}, \\
\sigma_{X \text{-cube}} &= \begin{pmatrix}
1+\bar x +\bar y+\bar x \bar y & 0 & 0\\
1+ \bar y + \bar z + \bar y \bar z& 0 & 0\\
1+ \bar x + \bar z + \bar x \bar z& 0 & 0\\
0 & 1+z & 0\\
0 & 1+x & 1+x\\
0 & 0& 1+y
\end{pmatrix} \nonumber.
\end{align}
In both cases $\sigma$ is injective, so $L$ is free. This implies that $Q^i =0$ for $i>0$, so our approach confirms that corresponding phases of matter do not admit nontrivial spatially extended excitations. Computation of $Q^0$ of course agrees with what is known.

\subsection{\texorpdfstring{$\ZZ_2$}{Z4} toric code phase using \texorpdfstring{$\ZZ_4$}{Z4} coefficients}

We consider a code with composite-dimensional qudits which is nevertheless in the same phase as the $\mathbb Z_2$ toric code. Let $R = \ZZ_4[x^\pm,y^\pm]$, $P=R^4$ and $L = \im(\sigma)$, where
\begin{equation}
\sigma = 
\begin{pmatrix}
2+2\bar{x}& 0 & 0 & 0\\
2+2\bar{y}& 0 & 0 & 0\\
0 & 1-y & 2 & 0 \\
0 & 1-x & 0 & 2
\end{pmatrix}.
\end{equation}
Let us define a matrix
\begin{equation}
    \tau = 
    \begin{pmatrix} 
    2 & 0 & 0 & 0 \\
    0 & 2 & 0 & 0 \\
    0 & 1-y & 2 & 0 \\
    0 & 1-x & 0 & 2
    \end{pmatrix}.
\end{equation}
We have an infinite free resolution
\begin{equation}
    \dots \to R^4 \xrightarrow{\tau} R^4 \xrightarrow{\tau} R^4 \xrightarrow{\tau} R^4 \xrightarrow{\sigma} P \to P/L \to 0.
\end{equation}
Erasing $P/L$ and applying $(-)^*$ we obtain
\begin{equation}
0 \to P \xrightarrow{\sigma^\dagger \lambda}  R^4\xrightarrow{\tau^\dagger} R^4\xrightarrow{\tau^\dagger} R^4\rightarrow \cdots,
\end{equation}
from which one obtains
\begin{equation}
Q^0 \cong \ZZ_2 \oplus \ZZ_2, \qquad Q^i =0 \text{ for } i>0.
\end{equation}
$Q^0$ is generated by classes of $e = \begin{pmatrix} 2 & 0 & 0 & 0 \end{pmatrix}^{\mathrm{T}}$ and $m = \begin{pmatrix} 0 & 2 & 0 & 0 \end{pmatrix}^{\mathrm{T}}$, respectively. Following procedure outlined in 7.1, we find their corresponding cochain in $\Cech^1(\mathbf t, P)$: $$\varphi=\left( \begin{pmatrix} 0 \\ 0 \\ t_1^{-1} \\ 0 \end{pmatrix}, \begin{pmatrix}   0 \\ 0 \\ 0 \\ t_2^{-1}  \end{pmatrix} \right)$$ and $$\psi=\left( \begin{pmatrix} 0 \\ 2\overline{t}_1^{-1} \\ 0 \\ 0 \end{pmatrix}, \begin{pmatrix}   2\overline{t}_2^{-1} \\ 0 \\ 0 \\ 0  \end{pmatrix} \right).$$
Having these formulas in hand we evaluate the braiding
\begin{equation}
 \Omega(\varphi,\varphi)_0=\Omega(\psi,\psi)_0=0, \qquad   \Omega(\varphi, \psi )_0 = 2 \in \ZZ_4,
\end{equation}
and the topological spin
\begin{equation}
    \theta(\varphi)= \theta(\psi)=0, \qquad \theta(\varphi + \psi)=2.
\end{equation}

In spite of vanishing of higher $Q^i$, there exists no finite free resolution -- see characterization in Proposition \ref{prop:proj_Znfree} below. We remark that such phenomenon could only appear in models consisting of composite-dimensional qudits, and that its occurence is not an invariant of the topological phase. In fact such behavior is possible even for a model with a trivial (product state) ground state.

We also remark that $\Ext^i_R(L,R)=0$ for $i>0$. If $n$ was prime, we would be able to deduce from this that $L$ is a free module. In the present example, $L$ is not even quasi-free. Indeed, if $L$ was quasi-free, $L/2L$ would be a free module over $S=\ZZ_2[x^\pm,y^\pm]$. On the other hand, it is not difficult to check that $\Ext^1_S(L/2L,S) \neq 0$. This motivates the following result. 

\begin{prop} \label{prop:proj_Znfree}
Let $n=p^t$ for a prime number $p$. An $R$-module has finite projective dimension if and only if it is free over $\ZZ_n$. If this condition is satisfied, there exists a free resolution of length not exceeding $D$.
\end{prop}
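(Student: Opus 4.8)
The plan is to prove both implications separately, reducing throughout to the residue ring $S = \ZZ_p[\Lambda]$, which is a regular ring (a Laurent polynomial ring over a field, localized at its maximal ideals) of global dimension $D$. The key structural fact is that $R$ is obtained from $S$ by a filtration $0 = p^t R \subset p^{t-1} R \subset \dots \subset pR \subset R$ whose successive quotients $p^{k}R/p^{k+1}R$ are isomorphic to $S$ as $R$-modules (the $R$-action factoring through $R/pR \cong S$). More generally, any $\ZZ_n$-module $N$ that is \emph{not} free over $\ZZ_n$ contains a copy of $\ZZ_{p^s}$ for some $s < t$, and I will use this to manufacture infinite Ext.

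For the ``only if'' direction, suppose $M$ has finite projective dimension but is not free over $\ZZ_n$. First I would reduce to the case $M = \ZZ_{p^s}[\Lambda]$ with $s < t$: indeed, if $M_0$ denotes the underlying $\ZZ_n$-module of $M$ and $M_0$ is not free, it has a direct summand $\ZZ_{p^s}$ with $s<t$, and I claim $\ZZ_{p^s}[\Lambda] = \ZZ_{p^s} \otimes_{\ZZ_n} R$ is a subquotient or retract whose infinite projective dimension obstructs that of $M$ --- more carefully, I would argue that $\pd_R(M)<\infty$ forces $\pd_R(\ZZ_{p^s}[\Lambda]) < \infty$ by passing through a suitable short exact sequence and using that $\pd$ is finite for all terms. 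Then for $M = \ZZ_{p^s}[\Lambda]$ the explicit $2$-periodic free resolution from Lemma~\ref{lem:vanishing_Ext} (with $k = p^s$, $l = p^{t-s}$) shows $\Ext^i_R(M, R/pR) \neq 0$ for all $i > 0$: applying $\Hom_R(-, S)$ to $\dots \to R \xrightarrow{p^s} R \xrightarrow{p^{t-s}} R \xrightarrow{p^s} R \to M \to 0$ gives a complex $S \xrightarrow{0} S \xrightarrow{0} S \to \dots$ since multiplication by any power of $p$ kills $S$, so every cohomology module is $S \neq 0$. Hence $\pd_R(M) = \infty$, a contradiction.

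For the ``if'' direction, suppose $M$ is free over $\ZZ_n$, equivalently (by Baer's test / the fact that $\ZZ_n$ is QF) $\ZZ_n$-projective, equivalently $\ZZ_n$-flat. Then $M$ is flat over $\ZZ_n$, and since $R$ is flat over $\ZZ_n$, I want to conclude that a projective resolution of $M$ over $R$ of length $\le D$ exists. The cleanest route is: $M$ being $\ZZ_n$-free means $M$ is a direct summand (after base change, or directly) of a quasi-free module, so it suffices to resolve quasi-free modules --- but actually a quasi-free module $M_0[\Lambda]$ with $M_0$ a $\ZZ_n$-\emph{free} module is genuinely $R$-free, so has projective dimension $0$. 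The subtlety is that a general $\ZZ_n$-free $R$-module need not be quasi-free. Instead I would argue as follows: reduce mod $p$. The complex $M \xrightarrow{\ \cdot p\ } M$ is exact in the sense that $M/pM$ is $\ZZ_p = S/(\text{nothing})$-... rather, $M/pM$ is a module over $S = R/pR$, and since $S$ is regular of dimension $D$, $\pd_S(M/pM) \le D$. Now lift: using the change-of-rings spectral sequence $\Ext^a_S(M/pM, \Ext^b_R(S, R)) \Rightarrow \Ext^{a+b}_R(M, R)$ together with $M$ being $\ZZ_n$-free (so that $p, p^2, \dots$ are non-zero-divisors on $M$ and $M \cong \varprojlim M/p^k M$ in a controlled way), I would show $\Ext^i_R(M, -)$ vanishes for $i > D$. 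By the standard fact that finite global/projective dimension is detected by vanishing of $\Ext$ against all modules (or just against $R$, since $R$ is Gorenstein of dimension $D$), this gives $\pd_R(M) \le D$.

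The main obstacle I anticipate is the ``if'' direction --- specifically, promoting $\pd_S(M/pM) \le D$ back to $\pd_R(M) \le D$ when $M$ is only $\ZZ_n$-free rather than quasi-free. The cleanest fix is probably to induct on $t$: write $0 \to pM \to M \to M/pM \to 0$; here $pM \cong M/p^{t-1}M$ is a module over $\ZZ_{p^{t-1}}[\Lambda]$ which is $\ZZ_{p^{t-1}}$-free, so by induction has projective dimension $\le D$ over $\ZZ_{p^{t-1}}[\Lambda]$ and hence (change of rings along the surjection $R \to \ZZ_{p^{t-1}}[\Lambda]$, whose kernel $p^{t-1}R \cong S$ has $\pd_R \le D$... wait, this needs care) a bounded resolution over $R$; and $M/pM$ has $\pd_S \le D$, which transfers to $\pd_R(M/pM) \le D$ because $S = R/pR$ has $\pd_R(S) = 1$ (as $p$ is a non-zero-divisor on $R$) so $\pd_R$ of any $S$-module exceeds its $\pd_S$ by exactly $1$ --- giving $\pd_R(M/pM) \le D+1$, not $\le D$. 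Reconciling this off-by-one (it must work out, since the answer $\le D$ is forced by the Gorenstein dimension once finiteness is known) is the delicate point, and I would resolve it by being careful that $M$ $\ZZ_n$-free makes the connecting maps in the change-of-rings sequences degenerate in the right range, exactly as in the classical proof that $\pd_{\ZZ_n[\Lambda]}(\ZZ_n\text{-free module}) \le \gldim(\ZZ_p[\Lambda]) = D$.
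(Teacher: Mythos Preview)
Your ``only if'' direction has a gap in the reduction step. You want to pass from an arbitrary $R$-module $M$ that is not $\ZZ_n$-free to the quasi-free module $\ZZ_{p^s}[\Lambda]$, but a $\ZZ_n$-direct summand of $M$ need not be an $R$-submodule or $R$-quotient, so the claimed ``subquotient or retract'' does not follow. The fix is easy and you nearly have it: instead of reducing, compute $\Tor^R_i(M, R/pR)$ directly from the periodic resolution $\dots \to R \xrightarrow{p} R \xrightarrow{p^{t-1}} R \xrightarrow{p} R \to R/pR \to 0$; tensoring with $M$ shows these groups are eventually periodic and nonzero whenever $M$ is not $\ZZ_n$-free. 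The paper takes a slicker route: an $R$-projective is a summand of an $R$-free module, hence $\ZZ_n$-free; so a finite $R$-projective resolution of $M$ is already a finite $\ZZ_n$-free resolution, forcing $M$ to be $\ZZ_n$-free because finite projective dimension over the Artinian local ring $\ZZ_n$ means projective.

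Your ``if'' direction contains a genuine error: you assert that $p$ is a non-zero-divisor on $R$ and hence $\pd_R(R/pR) = 1$, but $p \cdot p^{t-1} = 0$ in $\ZZ_{p^t}$, so $p$ \emph{is} a zero-divisor and in fact $\pd_R(R/pR) = \infty$ for $t>1$. This kills the induction on $t$ and the change-of-rings bound, and is precisely the off-by-one you could not reconcile: $M/pM$ has infinite projective dimension over $R$, so the short exact sequence $0 \to pM \to M \to M/pM \to 0$ gives no control. The paper's argument avoids this entirely by a direct Nakayama lifting: take a free $S$-resolution $0 \to P_D \to \dots \to P_0 \to M/pM \to 0$ (Hilbert syzygy), lift generators of each $P_i$ to a free $R$-module $\widehat P_i$ mapping onto $M$ (resp.\ the successive kernels), and check via $\Tor_1^R(-, R/(p)) = 0$ that each kernel remains $\ZZ_n$-free so the process continues. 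Nakayama then forces the last kernel to vanish, yielding a length-$D$ free resolution over $R$ on the nose.
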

\begin{proof}
$\implies :$ A projective $R$-module $P$ is a summand of a free $R$-module, which is clearly free over $\ZZ_n$. Thus $P$ is also projective over $\ZZ_n$. Projective modules over $\ZZ_n$ are free. 

Now let $P_\bullet \to M$ be a finite projective resolution of a module $M$. By the paragraph above, this is also a free resolution of $M$ considered as a $\ZZ_n$ module. Thus $M$ has finite projective dimension over $\ZZ_n$. Such $\ZZ_n$-modules are free.

$\impliedby :$ Let $M$ be free over $\ZZ_n$. We choose a $\ZZ_p[\Lambda]$-free resolution 
\begin{equation}
    0 \to P_D \to \dots \to P_1 \xrightarrow{\partial_1} P_0 \xrightarrow{\partial_0} M/pM \to 0
\end{equation}
of length $D$. This is possible by Hilbert's syzygy theorem. We will lift the resolution of $M/pM$ to a resolution of $M$ of the same length. Let $K_i = \ker(\partial_i)$.

For the purpose of this proof it will be convenient to denote reduction of an element mod $p$ by an overline. We have
\begin{equation}
    \partial_0 \begin{pmatrix} \overline r_1 \\ \vdots \\\ \overline r_n \end{pmatrix} = \sum_{i=1}^n \overline{r_i m_i} 
\end{equation}
for some $m_1, \dots, m_n \in M$ such that $\overline m_i$ generate $M/pM$. Then by Nakayama, $m_i$ generate $M$. Define $\widehat P_0 = R^n$ and $\widehat \partial_0 : R^n \to M$ by
\begin{equation}
    \widehat \partial_0 \begin{pmatrix}  r_1 \\ \vdots \\\ r_n \end{pmatrix} = \sum_{i=1}^n r_i m_i.
\end{equation}
By construction, $\widehat \partial_0$ is surjective. Let $\widehat K_0 = \ker(\widehat{\partial}_0)$. Reducing the short sequence $0 \to \widehat K_0 \to R^n \xrightarrow{\widehat \partial_0} M \to 0$ mod $p$ yields
\begin{gather}
    0 \to \widehat K_0 / p \widehat K_0 \to P_0 \xrightarrow{\partial_0} M / pM \to 0, \qquad \Tor_1^R(\widehat K_0, R/(p)) = 0.
    \label{eq:lifting_step}
\end{gather}
Here we used the simple fact that an $R$-module $N$ is free over $\ZZ_n$ if and only if $\Tor_1^R(N,R/(p))=0$, which can be verified using the resolution
\begin{equation}
    \dots \to R \xrightarrow{p} R \xrightarrow{p^{t-1}} R \xrightarrow{p} R \to R/(p) \to 0.
\end{equation}
Results in \eqref{eq:lifting_step} imply that $\widehat K_0 / p \widehat K_0$ may be identified with $K_0$ and $\widehat K_0$ is free over $\ZZ_n$.

Now replace $M$ by $K_0$ and $P_0$ by $P_1$ and repeat. Proceeding like this inductively we find short exact sequences
\begin{gather*}
    0 \to \widehat K_D \to \widehat P_D \to \widehat K_{D-1} \to 0 , \\
    \dots , \\
    0 \to \widehat K_0 \to \widehat P_0 \to M \to 0
\end{gather*}
such that each $P_i$ is free and $\widehat K_i / p \widehat K_i  \cong K_i$. In particular $\widehat K_D =0$ by Nakayama. Short sequences compose into a free resolution of $M$ of length $D$:
\begin{equation}
    0 \to \widehat P_D \to \dots \to \widehat P_0 \to M \to 0 .
\end{equation}
\end{proof}

\subsection{Double semion from condensing \texorpdfstring{$\ZZ_4$}{Z4} toric code}

We consider a code with composite-dimensional qudits. Let $R = \ZZ_4[x^\pm,y^\pm]$, $P=R^4$ and $L = \im(\sigma)$, where
\begin{equation}
\sigma = 
\begin{pmatrix}
    \bar{x}-1 & 0 & 0 & 2 \\
    \bar{y}-1 & 0 & 2x& 0\\
    1-y &  2+2y & 2 & 0 \\
    x-1 & 2+2x & 0 & 2\bar{y}
\end{pmatrix}.
\end{equation}
We define a matrix
\begin{equation}
    \tau = \begin{pmatrix}
        2 & 0 & 0 & 0 \\
        1 + \overline x \overline y & 2 & 0 & 0 \\
        \overline x  + \overline x \overline y & 0 & 2 & 0 \\ 
        1 + \overline x & 0 & 0 & 2
    \end{pmatrix}.
\end{equation}
We have an infinite free resolution
\begin{equation}
    \dots \to R^4 \xrightarrow{\tau} R^4 \xrightarrow{\tau} R^4 \xrightarrow{\tau} R^4 \xrightarrow{\sigma} P \to P/L \to 0.
\end{equation}
We apply $(-)^*$ and get
\begin{equation}
    0 \to P \xrightarrow{\sigma^\dagger \lambda} R^4 \xrightarrow{\tau} R^4 \xrightarrow{\tau} R^4 \to \dots,
\end{equation}
with matrices explicitly given by
\begin{align}
\sigma^\dagger \lambda &= \begin{pmatrix}
    1 - \overline y & \overline x -1 & 1 - x & 1 - y \\ 2 +  2 \overline y & 2 + 2 \overline x & 0 & 0 \\ 2 & 0 & 0 & 2 \overline x \\ 0 & 2 y & 2 & 0 
\end{pmatrix}, \\
\tau^\dagger &= \begin{pmatrix}
    2 & 1 + xy & x + xy & 1 + x \\ 0 & 2 & 0 & 0 \\ 0 & 0 & 2 & 0 \\ 0 & 0 & 0 & 2
\end{pmatrix}. \nonumber
\end{align}
It is easy to check that $\ker(\tau^\dagger) = \im(\tau^\dagger)$, so $\Ext^i(\overline{P/L}, R)=0$ for $i \geq 2$ and 
\begin{equation}
    \Ext^1 (\overline{P/L},R) \cong \im(\tau^\dagger) / \im(\sigma^\dagger \lambda).
    \label{eq:ds_ext}
\end{equation}
To compute this quotient, first note that the first two components of any vector in the image of $\sigma^\dagger \lambda$ are in the ideal $(x-1,y-1)$. Hence the first column, the second column, and the sum of the first two columns of $\tau^\dagger$ each represent nontrivial elements in the quotient. We claim that these are the three nonzero elements of $ \im(\tau^\dagger) / \im(\sigma^\dagger \lambda) \cong \mathbb Z_2 \oplus \mathbb Z_2$. Here is the proof:
\begin{itemize}
    \item The first column of $\tau^\dagger$ is annihilated by $2$, and the first column multiplied by $x-1$ (resp.\ $y-1$) is $2x$ times the second (resp. $2y$ times the first) column of $\sigma^\dagger \lambda$. Similarly one can verify that the class of the second column is annihilated by $2, x-1$, and $y-1$. This establishes that the images of the first two columns of $\tau^\dagger$ in the quotient \eqref{eq:ds_ext} generate a~submodule isomorphic to $\mathbb Z_2 \oplus \mathbb Z_2$.
    \item Next, we prove that the quotient is spanned by the elements described above. Note that $\begin{pmatrix}
        x (1 -y ) & 0 & 2 & 0 
    \end{pmatrix}^{\mathrm T}$ is the fourth column of $\sigma^\dagger \lambda$ multiplied by $x$. Modulo $xy$ times the first column of $\tau^\dagger$, this element is the third column of $\tau^\dagger$. Similarly, the fourth column of $\tau^\dagger$ is the sum of the third column of $\sigma^\dagger \lambda$ and the first column of $\tau^\dagger$ multiplied by $x$.
\end{itemize}

\v{C}ech cocycles $\varphi$ corresponding to the first two columns of $\tau^\dagger$ can be deduced from the reasoning above:
\begin{align}
    \varphi = \left(  \begin{pmatrix}
        0 \\ \frac{2x}{x-1} \\ 0 \\ 0 
    \end{pmatrix}, \begin{pmatrix}
        \frac{2y}{y-1} \\ 0 \\ 0 \\ 0
    \end{pmatrix} \right), \qquad \psi = \left( \begin{pmatrix}
        0 \\ -\frac{x}{x-1} \\ -\frac{xy}{x-1} \\ 0 
    \end{pmatrix}, \begin{pmatrix}
        \frac{y}{y-1} \\ 0 \\ 0 \\ -\frac{xy}{y-1}
    \end{pmatrix} \right).
\end{align}
In order to recognize the double semion phase, it is convenient to denote
\begin{equation}
    \psi_+ = \psi + \varphi, \qquad \psi_- = \psi.
\end{equation}
Using the explicit formulas for cocycles we find the braiding
\begin{equation}
    \Omega(\psi_\pm, \psi_\pm)_0 = 2 , \qquad \Omega(\psi_+,\psi_-)_0=0,
\end{equation}
and the topological spin
\begin{equation}
\theta(\psi_\pm) = \pm 1, \qquad \theta(\psi_++\psi_-) = 0.
\end{equation}

\appendix

\section{Gorenstein rings} \label{sec:Gorenstein}

\begin{defn} \label{def:Gorenstein}
Noetherian ring $A$ is called a Gorenstein ring if its injective dimension (as a module over itself) is finite. If it is zero, i.e. $A$ is an injective $A$-module, then $A$ is called a QF ring\footnote{QF stands for quasi-Frobenius.}.
\end{defn}

\begin{lem} \label{lem:Gor_loc_dim}
Let $A$ be a Gorenstein ring. Every localization of $A$ is a Gorenstein ring. The injective dimension of $A$ equals the Krull dimension of $A$. 
\end{lem}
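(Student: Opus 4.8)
The plan is to reduce both statements to standard commutative algebra: the Matlis structure theory of injective modules over a Noetherian ring, the fact that localization carries injectives to injectives, and Bass's theorem on rings admitting a finitely generated module of finite injective dimension. Write $d = \operatorname{injdim}_A A$, which is finite by the definition of a Gorenstein ring.

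For the first assertion, fix a multiplicative set $S \subset A$; then $S^{-1}A$ is again Noetherian. Over a Noetherian ring every injective module is a direct sum of copies of the indecomposables $\E_A(A/\mathfrak{q})$, $\mathfrak{q}\in\operatorname{Spec}(A)$, and localization sends $\E_A(A/\mathfrak{q})$ to $\E_{S^{-1}A}(S^{-1}A/\mathfrak{q}S^{-1}A)$ when $\mathfrak{q}\cap S=\emptyset$ and to $0$ otherwise; in particular $S^{-1}(-)$ takes injective $A$-modules to injective $S^{-1}A$-modules. Since $S^{-1}(-)$ is exact, applying it to an injective resolution of $A$ of length $d$ yields an injective resolution of $S^{-1}A$ over $S^{-1}A$ of length $\le d$. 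Hence $\operatorname{injdim}_{S^{-1}A}(S^{-1}A)\le d<\infty$, i.e.\ $S^{-1}A$ is Gorenstein.

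For the equality $\operatorname{injdim}_A A=\dim A$, apply the previous paragraph with $S=A\setminus\mathfrak{p}$ to see that every localization $A_{\mathfrak{p}}$ is a Gorenstein \emph{local} ring. By Bass's theorem a Gorenstein local ring $B$ is Cohen--Macaulay, and then $\operatorname{injdim}_B B=\operatorname{depth} B=\dim B$; applied to $B=A_{\mathfrak{p}}$ this gives $\operatorname{injdim}_{A_{\mathfrak{p}}}(A_{\mathfrak{p}})=\dim(A_{\mathfrak{p}})=\operatorname{ht}(\mathfrak{p})$ for every prime $\mathfrak{p}$. On the other hand, the $i$-th term of a minimal injective resolution of $A$ is $\bigoplus_{\mathfrak{q}}\E_A(A/\mathfrak{q})^{\oplus\mu_i(\mathfrak{q})}$ with Bass numbers $\mu_i(\mathfrak{q})=\dim_{k(\mathfrak{q})}\Ext^i_{A_{\mathfrak{q}}}(k(\mathfrak{q}),A_{\mathfrak{q}})$; since $\Ext$ commutes with localization this resolution stays minimal upon localizing, so $\operatorname{injdim}_A A=\sup_{\mathfrak{q}}\operatorname{injdim}_{A_{\mathfrak{q}}}(A_{\mathfrak{q}})$. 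Combining,
\begin{equation}
\operatorname{injdim}_A A = \sup_{\mathfrak{q}\in\operatorname{Spec}(A)} \operatorname{injdim}_{A_{\mathfrak{q}}}(A_{\mathfrak{q}}) = \sup_{\mathfrak{q}\in\operatorname{Spec}(A)} \operatorname{ht}(\mathfrak{q}) = \dim A ,
\end{equation}
which in particular shows $\dim A<\infty$.

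I expect the main obstacle to be expository rather than mathematical: the argument leans on two standard but non-elementary results — the Matlis structure theorem for injective modules over Noetherian rings and Bass's theorem that a Gorenstein local ring is Cohen--Macaulay with injective dimension equal to its depth — which I would cite (e.g.\ \cite{Eisenbud}) rather than reprove. Given those, everything else is formal, the only point needing a little care being the verification that localizing a minimal injective resolution yields a minimal injective resolution over the localized ring (handled cleanly through the compatibility of Bass numbers with localization).
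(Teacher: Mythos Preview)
Your argument is correct; the paper itself simply cites Bass (Corollaries 1.3 and 5.6) for this lemma, and your sketch is essentially an unpacking of those results, relying on the same ingredients (Matlis structure of injectives over Noetherian rings, behavior of injective resolutions under localization, and Bass's theorem that a Gorenstein local ring is Cohen--Macaulay with $\operatorname{injdim}=\operatorname{depth}=\dim$). So the approach is the same, just more detailed than the paper's bare citation.
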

\begin{proof}
Corollaries 1.3 and 5.6 in \cite{Bass}. 
\end{proof}

\begin{rmk}
It is popular to define the Gorenstein property first for Noetherian local rings and then declare a general Noetherian ring to be Gorenstein if its localization on any prime ideal is Gorenstein. Such rings do not necessarily have finite dimension. This situation is not encountered in this paper, so it is more convenient to stick to the more restrictive Definition \ref{def:Gorenstein}. 
\end{rmk}

\begin{lem} \label{lem:Gor_pol}
If a ring $A$ is Gorenstein, so is the polynomial ring $A[x]$. 
\end{lem}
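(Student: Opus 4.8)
The plan is to bound the injective dimension directly: I will show $\operatorname{inj.dim}_{A[x]}A[x] \le \operatorname{inj.dim}_A A + 1$, which together with the Hilbert basis theorem (so that $A[x]$ is again Noetherian) proves the lemma by Definition \ref{def:Gorenstein}. Write $B = A[x]$ and $n = \operatorname{inj.dim}_A A < \infty$.

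First I would bound the injective dimension of $B$ \emph{as an $A$-module}. On the monomial basis $1,x,x^2,\dots$ one has $B \cong A^{(\mathbb N)}$ as an $A$-module, and since $A$ is Noetherian a direct sum of injective $A$-modules is again injective; applying $(-)^{(\mathbb N)}$ to an injective resolution of $A$ of length $\le n$ therefore gives one for $B$, so $\Ext^j_A(N,B)=0$ for every $A$-module $N$ and every $j>n$. Next, for an arbitrary $B$-module $M$ I would use the standard presentation by induced modules: writing $M|_A$ for the restriction of $M$ to $A$, there is a short exact sequence of $B$-modules
\begin{equation*}
0 \to B \otimes_A M|_A \xrightarrow{\ \alpha\ } B \otimes_A M|_A \xrightarrow{\ \beta\ } M \to 0, \qquad \beta(p\otimes m)=pm, \quad \alpha(p\otimes m)=xp\otimes m - p\otimes xm,
\end{equation*}
whose exactness follows from a grading/telescoping computation on $B\otimes_A M|_A=\bigoplus_{i\ge 0}x^i\otimes M|_A$ ($\alpha$ is injective because it raises degree with a lower-order correction, and $\beta$ identifies $\operatorname{coker}\alpha$ with $M$). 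Applying $\Hom_B(-,B)$ and using that $B$ is flat (indeed free) over $A$, change of rings gives $\Ext^j_B(B\otimes_A M|_A, B)\cong \Ext^j_A(M|_A,B)$, which vanishes for $j>n$ by the previous step; the long exact sequence then forces $\Ext^j_B(M,B)=0$ for all $j>n+1$. As $M$ was arbitrary, $\operatorname{inj.dim}_B B \le n+1<\infty$, so $B$ is Gorenstein.

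The one genuinely delicate point is that the restriction $M|_A$ of a finitely generated $B$-module need not be finitely generated over $A$ (already $B$ itself is not), so one cannot compute $\Ext^j_A(M|_A,B)$ by pulling the direct sum $B\cong A^{(\mathbb N)}$ out of the second argument; this is precisely why the first step must go through stability of injectivity under direct sums over a Noetherian ring rather than through a direct-sum formula for $\Ext$. An alternative, more classical route is to localize: reduce to $A$ local Gorenstein and $\mathfrak q$ a prime of $A[x]$ contracting to the maximal ideal $\mathfrak m$, note that then $\mathfrak q$ is either $\mathfrak m A[x]$ or $\mathfrak m A[x]+(g)$ for a monic lift $g$ of an irreducible polynomial, and in each case exhibit $A[x]_{\mathfrak q}$ as a field modulo a regular sequence, invoking the standard behavior of the Gorenstein property under regular sequences. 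I prefer the direct argument above since it relies on less external commutative algebra.
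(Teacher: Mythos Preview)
Your argument is correct. The paper does not give a proof at all---it simply cites the Stacks Project---so your route is necessarily different and, unlike the paper's, self-contained: you bound $\operatorname{inj.dim}_B B$ directly via the two-term $B$-resolution $0 \to B\otimes_A M|_A \to B\otimes_A M|_A \to M \to 0$ together with the Bass--Papp theorem (direct sums of injectives over a Noetherian ring are injective) to control $\operatorname{inj.dim}_A B$. This yields the explicit inequality $\operatorname{inj.dim}_B B \le \operatorname{inj.dim}_A A + 1$, matching what Lemma~\ref{lem:Gor_loc_dim} predicts. Your caution about $M|_A$ not being finitely generated over $A$ is well placed and is exactly why Bass--Papp, rather than a naive direct-sum identity for $\Ext$, is the right tool. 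One minor remark on your alternative local sketch: under Definition~\ref{def:Gorenstein} (finite \emph{global} injective dimension), showing each $A[x]_{\mathfrak q}$ is local Gorenstein would still need to be supplemented by $\dim A[x]<\infty$; this follows from $\dim A[x]=\dim A+1$ for Noetherian $A$, but your direct argument sidesteps the issue entirely.
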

\begin{proof}
Follows immediately from \cite[\href{https://stacks.math.columbia.edu/tag/0A6J}{Tag 0A6J}]{stacks-project}.
\end{proof}

\begin{lem} \label{lem:QF_mods}
Let $A$ be a QF ring. Every $A$-module $M$ embeds in a free module (of finite rank if $M$ is finitely generated). The natural module map $M \to \Hom_A(\Hom_A(M,A),A)$ is injective (an isomorphism if $M$ is finitely generated). In particular $M=0$ if and only if $\Hom_A(M,A) = 0$. 
\end{lem}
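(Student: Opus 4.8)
The plan is to lean on two facts about a QF ring $A$, writing $M^{*}=\Hom_A(M,A)$: that $A$ is injective as a module over itself, so that the contravariant functor $(-)^{*}$ is exact and hence $(-)^{**}$ is an exact covariant functor; and the classical structure theory of QF rings, namely that $A$ is a cogenerator satisfying the double-annihilator condition, and that every injective $A$-module is projective (see \cite{Lam}). Everything else is routine homological bookkeeping.

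First I would check that the evaluation map $\varepsilon_M \colon M \to M^{**}$, $\varepsilon_M(m)(\varphi)=\varphi(m)$, is injective for every $A$-module $M$. Given $0\neq m\in M$, put $I=\ann(m)$; since $m\neq 0$ this is a proper ideal, so by the double-annihilator property $\ann_A(I)\neq 0$ and we may pick $0\neq a\in\ann_A(I)$. The assignment $rm\mapsto ra$ is a well-defined nonzero homomorphism $Am\cong A/I\to A$, which extends to some $\varphi\in M^{*}$ by injectivity of $A$; since $\varphi(m)=a\neq 0$ we get $\varepsilon_M(m)\neq 0$. This simultaneously settles the last assertion of the lemma: if $M^{*}=0$ then $M^{**}=0$, and injectivity of $\varepsilon_M$ forces $M=0$, the converse being trivial.

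Next, for finitely generated $M$ I would show $\varepsilon_M$ is an isomorphism and produce the finite-rank embedding. Choosing a finite free presentation $A^{m}\to A^{n}\to M\to 0$, applying the exact functor $(-)^{**}$, and comparing with the original presentation via the natural transformation $\varepsilon$ (which is visibly an isomorphism on $A^{n}$ and $A^{m}$), a short diagram chase gives that $\varepsilon_M$ is surjective, hence bijective by the previous paragraph. For the embedding, note that $M^{*}$ embeds in $(A^{n})^{*}=A^{n}$, so it is finitely generated ($A$ being Noetherian), whence there is a surjection $A^{k}\to M^{*}$; applying $(-)^{*}$ yields $M\cong M^{**}=\Hom_A(M^{*},A)\hookrightarrow\Hom_A(A^{k},A)=A^{k}$. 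For an arbitrary module $M$, I would instead embed $M$ into its injective envelope $\E_A(M)$, which over a QF ring is injective hence projective hence a direct summand of some free module $F$; the composite $M\hookrightarrow\E_A(M)\hookrightarrow F$ is then the desired embedding.

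The homological steps are all straightforward; I expect the only real work to be in invoking the two structural inputs cleanly — that a QF ring is a cogenerator satisfying the double-annihilator property, and that injective modules over a QF ring are projective — both of which are classical and available in \cite{Lam}.
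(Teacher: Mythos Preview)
Your argument is correct. The paper does not actually give a proof of this lemma: it simply cites \cite[Theorem~15.11]{Lam} and moves on. So there is no ``paper's approach'' to compare against beyond the reference itself; what you have written is essentially a self-contained unpacking of that citation, leaning on the same structural facts about QF rings (self-injectivity, the double-annihilator condition, and that injectives are projective) that Lam's treatment provides. One small remark: in the finitely generated case your diagram chase in fact gives that $\varepsilon_M$ is an isomorphism directly (five lemma on the two right-exact rows), so the appeal back to injectivity is not strictly needed there, though it does no harm.
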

\begin{proof}
See \cite[Theorem 15.11]{Lam}.
\end{proof}

\begin{lem} \label{lem:dim_local}
Let $A$ be a commutative ring, $M$ an $A$-module and $r \geq 0$ an integer. If $\dim (M) \leq r$, then the localization $M_\mathfrak{p}$ vanishes for all prime ideals $\mathfrak p \subset A$ with $\dim (A/\mathfrak p) > r$. If $M$ is finitely generated, the converse is true. 
\end{lem}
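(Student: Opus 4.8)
The plan is to unwind the definition $\dim(M) = \dim(A/\ann(M))$ and argue each implication by a case distinction on whether a given prime contains $\ann(M)$.

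For the forward implication, suppose $\dim(M) \le r$ and let $\mathfrak p$ be a prime with $\dim(A/\mathfrak p) > r$. If $\mathfrak p \supseteq \ann(M)$, then $A/\mathfrak p$ is a quotient of $A/\ann(M)$, hence $\dim(A/\mathfrak p) \le \dim(A/\ann(M)) = \dim(M) \le r$, a contradiction; so $\mathfrak p \not\supseteq \ann(M)$. Choosing $a \in \ann(M) \setminus \mathfrak p$, the element $a$ becomes a unit in $A_{\mathfrak p}$ while annihilating $M$, so $M_{\mathfrak p} = 0$. I would note in passing that this direction needs no finiteness hypothesis on $M$.

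For the converse, assume $M$ is finitely generated and that $M_{\mathfrak p} = 0$ for every prime $\mathfrak p$ with $\dim(A/\mathfrak p) > r$; the goal is $\dim(A/\ann(M)) \le r$. Arguing by contradiction, if $\dim(A/\ann(M)) > r$ then there is a chain of primes of length exceeding $r$ in $A/\ann(M)$, and its least element is a prime $\mathfrak p \supseteq \ann(M)$ with $\dim(A/\mathfrak p) > r$. Since $M$ is finitely generated, $\operatorname{Supp}(M) = V(\ann(M))$, so $\mathfrak p \in \operatorname{Supp}(M)$, i.e. $M_{\mathfrak p} \ne 0$, contradicting the hypothesis. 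Hence $\dim(M) \le r$.

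The single nontrivial input is the standard identity $\operatorname{Supp}(M) = V(\ann(M))$ for finitely generated modules (equivalently, compatibility of $\ann(M)$ with localization), which is precisely where the finite generation assumption is used; the rest is bookkeeping with the definition of Krull dimension. I would therefore flag that identity as the only real step, modest though it is.
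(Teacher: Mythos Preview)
Your proof is correct and follows essentially the same approach as the paper: both reduce the claim to the equivalence (for finitely generated $M$) between $M_{\mathfrak p}=0$ and $\ann(M)\not\subset\mathfrak p$, combined with the observation that $\dim(A/\ann(M))\le r$ amounts to no prime $\mathfrak p\supseteq\ann(M)$ having $\dim(A/\mathfrak p)>r$. The only cosmetic difference is that you phrase the key input as $\operatorname{Supp}(M)=V(\ann(M))$ while the paper phrases it as ``$S^{-1}M=0$ iff $S\cap\ann(M)\neq\emptyset$''; these are the same fact.
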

\begin{proof}
Observe that $\dim (M) \leq r$ if and only if $\ann (M)$ is not contained in any prime ideal $\mathfrak p \subset A$ with $\dim (A/\mathfrak p) > r$. Suppose that this condition is satisfied and let $\mathfrak p$ be such that $\dim (R/\mathfrak p) > r$. Then $R \setminus \mathfrak p$ contains an element of $\ann (M)$, so~$M_\mathfrak p=0$. Next, let $M$ be finitely generated. Then $S^{-1} M =0$ for a multiplicative set $S \subset R$ if and only if $S \cap \ann (M) \neq \emptyset$. Thus $M_{\mathfrak p}=0$ for a~prime ideal $\mathfrak p$ if and only if $\ann (M)$ is not contained in $\mathfrak p$.
\end{proof}

\begin{lem} \label{lem:dim_Ext_bound}
Let $A$ be a Gorenstein ring of dimension $D$ and let $M$ be a~finitely generated $A$-module. Then $\dim (\Ext^i_A(M,A)) \leq D-i$.
\end{lem}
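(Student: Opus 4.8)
The plan is to reduce the global statement to a local one by localizing at primes and then to invoke the characterization of Gorenstein rings in terms of finite injective dimension. First I would observe that, since $A$ is Noetherian and $M$ is finitely generated, $\Ext^i_A(M,A)$ is again a finitely generated $A$-module (compute it from a resolution of $M$ by finitely generated free modules). By Lemma \ref{lem:dim_local}, it therefore suffices to prove that the localization $\left(\Ext^i_A(M,A)\right)_{\mathfrak p}$ vanishes for every prime $\mathfrak p \subset A$ with $\dim(A/\mathfrak p) > D-i$.

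Next I would use that $\Ext$ commutes with localization for finitely generated modules over a Noetherian ring, so $\left(\Ext^i_A(M,A)\right)_{\mathfrak p} \cong \Ext^i_{A_{\mathfrak p}}(M_{\mathfrak p}, A_{\mathfrak p})$. The ring $A_{\mathfrak p}$ is Noetherian and local, and by Lemma \ref{lem:Gor_loc_dim} it is Gorenstein with injective dimension over itself equal to its Krull dimension $\dim(A_{\mathfrak p}) = \operatorname{ht}(\mathfrak p)$. Picking an injective resolution of $A_{\mathfrak p}$ of length $\operatorname{ht}(\mathfrak p)$ and applying $\Hom_{A_{\mathfrak p}}(-,A_{\mathfrak p})$ shows that $\Ext^j_{A_{\mathfrak p}}(N,A_{\mathfrak p})=0$ for every $A_{\mathfrak p}$-module $N$ and every $j > \operatorname{ht}(\mathfrak p)$.

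It then remains to check that the hypothesis $\dim(A/\mathfrak p) > D-i$ forces $i > \operatorname{ht}(\mathfrak p)$, which follows from the elementary inequality $\operatorname{ht}(\mathfrak p) + \dim(A/\mathfrak p) \leq \dim(A) = D$, valid over any Noetherian ring: concatenating a maximal chain of primes contained in $\mathfrak p$ with a maximal chain of primes containing $\mathfrak p$ produces a chain in $A$ of length $\operatorname{ht}(\mathfrak p) + \dim(A/\mathfrak p)$. Hence $\operatorname{ht}(\mathfrak p) \leq D - \dim(A/\mathfrak p) < i$, so $\Ext^i_{A_{\mathfrak p}}(M_{\mathfrak p},A_{\mathfrak p}) = 0$, and the reduction in the first paragraph concludes the proof.

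I do not expect a genuine obstacle here: the argument is essentially routine once Lemma \ref{lem:Gor_loc_dim} is in hand. The only points that deserve a line of justification are the standard facts that $\Ext^i_A(M,A)$ is finitely generated and that $\Ext$ commutes with localization, together with the height/codimension inequality. The substance of the lemma is entirely carried by Lemma \ref{lem:Gor_loc_dim}, which both transfers the Gorenstein property to each localization $A_{\mathfrak p}$ and identifies its self-injective dimension with $\operatorname{ht}(\mathfrak p)$.
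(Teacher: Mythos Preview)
Your proposal is correct and follows essentially the same approach as the paper: localize at a prime $\mathfrak p$, use that $\Ext$ commutes with localization, invoke Lemma~\ref{lem:Gor_loc_dim} to bound the self-injective dimension of $A_{\mathfrak p}$ by $\operatorname{ht}(\mathfrak p)$, and finish with Lemma~\ref{lem:dim_local}. The paper's proof is simply a terser version of this argument; your write-up makes the finite-generation of $\Ext^i_A(M,A)$ and the height inequality $\operatorname{ht}(\mathfrak p)+\dim(A/\mathfrak p)\leq D$ explicit, but there is no substantive difference.
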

\begin{proof}
Let $\mathfrak p \subset A$ be a prime ideal with $\dim(A/ \mathfrak p) \geq D-i$. Then $A_{\mathfrak p}$ is a~Gorenstein ring with $A_{\mathfrak p} \leq D-i$, so $\Ext^{i+1}_A(M,A)_{\mathfrak p} \cong \Ext^{i+1}_{A_{\mathfrak p}}(M_{\mathfrak p},A_{\mathfrak p}) = 0$. Now invoke Lemma \ref{lem:dim_local}.
\end{proof}

\section{Local cohomology} \label{sec:loc_coh}

\begin{defn}
Let $A$ be a Noetherian commutative ring and $\mathfrak a \subset A$ an ideal. If $M$ is an $A$-module, $\Gamma_{\mathfrak a}(M) = \{ m \in M \, | \, \exists j \in \mathbb N \ \mathfrak a^j m=0 \}$ is called $\mathfrak a$-torsion submodule of $M$. Modules $M$ such that $M = \Gamma_{\mathfrak a}(M)$ are said to be $\mathfrak a$-torsion. $\Gamma_{\mathfrak a}$ is a~left exact functor. Its right derived functors $\H^j_{\mathfrak a}$ are called local cohomology functors. More explicitly, $\H^j_{\mathfrak a}(M)$ is defined as the $j$-th degree cohomology of the complex $\Gamma_a(I^\bullet)$, where $M \to I^{\bullet}$ is an injective resolution. 
\end{defn}

Note that by construction, every $\H^j_{\mathfrak a}(M)$ is a subquotient of an $\mathfrak a$-torsion module and hence is $\mathfrak a$-torsion. Moreover, $\H^0_{\mathfrak a}(M) \cong \Gamma_{\mathfrak a}(M)$.

\begin{lem} \label{lem:lc_rad_cop}
Let $M$ be an $A$-module.
\begin{enumerate}
    \item $\H^j_{\mathfrak a}(M) \cong \H^j_{\sqrt{\mathfrak a}}(M)$, where $\sqrt{\mathfrak a} = \{ a \in A \, | \, \exists j \in \mathbb N \ a^j \in \mathfrak a \}$.
    \item If $\mathfrak a_1, \dots ,\mathfrak a_t$ are coprime, then $\H^j_{\mathfrak {a}_1 \dots \mathfrak{a}_t}(M) \cong \bigoplus \limits_{i=1}^t \H^j_{\mathfrak a_i}(M) $.
\end{enumerate}
\end{lem}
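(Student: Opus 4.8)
The plan is to deduce both parts from identities of functors on the category of $A$-modules, after which the statements about the derived functors $\H^j$ follow formally: apply the functor identity to an injective resolution $M \to I^\bullet$ and pass to cohomology, using that finite direct sums of modules are exact and commute with the formation of cohomology.

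For part 1, I would prove $\Gamma_{\mathfrak a} = \Gamma_{\sqrt{\mathfrak a}}$ as subfunctors of the identity. The inclusion $\Gamma_{\sqrt{\mathfrak a}}(M) \subseteq \Gamma_{\mathfrak a}(M)$ is immediate from $\mathfrak a^j \subseteq (\sqrt{\mathfrak a})^j$. For the reverse inclusion I use that $A$ is Noetherian: writing $\sqrt{\mathfrak a} = (a_1, \dots, a_s)$ with $a_i^{k_i} \in \mathfrak a$, a pigeonhole argument on monomials shows $(\sqrt{\mathfrak a})^N \subseteq \mathfrak a$ for $N = s(\max_i k_i - 1) + 1$; hence $\mathfrak a^j m = 0$ forces $(\sqrt{\mathfrak a})^{Nj} m = 0$, so $m \in \Gamma_{\sqrt{\mathfrak a}}(M)$. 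Deriving both sides of $\Gamma_{\mathfrak a} = \Gamma_{\sqrt{\mathfrak a}}$ gives $\H^j_{\mathfrak a}(M) \cong \H^j_{\sqrt{\mathfrak a}}(M)$.

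For part 2, by induction on $t$ it suffices to treat $t = 2$, since if $\mathfrak a_1, \dots, \mathfrak a_t$ are pairwise coprime then $\mathfrak a_1$ is coprime to the product $\mathfrak a_2 \cdots \mathfrak a_t$, while $\mathfrak a_2, \dots, \mathfrak a_t$ remain pairwise coprime. So let $\mathfrak a, \mathfrak b$ be coprime; I claim $\Gamma_{\mathfrak a \mathfrak b}(M) = \Gamma_{\mathfrak a}(M) \oplus \Gamma_{\mathfrak b}(M)$ as submodules of $M$, naturally in $M$. Both summands lie in $\Gamma_{\mathfrak a \mathfrak b}(M)$ because $(\mathfrak a\mathfrak b)^j$ is contained in both $\mathfrak a^j$ and $\mathfrak b^j$. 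The sum is direct: if $\mathfrak a^j m = 0$ and $\mathfrak b^k m = 0$, then, using that powers of coprime ideals are coprime, we may write $1 = a + b$ with $a \in \mathfrak a^j$ and $b \in \mathfrak b^k$, whence $m = am + bm = 0$. Conversely, if $(\mathfrak a \mathfrak b)^N m = \mathfrak a^N \mathfrak b^N m = 0$, choose $1 = a + b$ with $a \in \mathfrak a^N$ and $b \in \mathfrak b^N$; then $\mathfrak b^N(am) \subseteq \mathfrak a^N \mathfrak b^N m = 0$, so $am \in \Gamma_{\mathfrak b}(M)$, and symmetrically $bm \in \Gamma_{\mathfrak a}(M)$, while $m = am + bm$. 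Applying this decomposition of $\Gamma$ to an injective resolution of $M$ and taking cohomology yields $\H^j_{\mathfrak a \mathfrak b}(M) \cong \H^j_{\mathfrak a}(M) \oplus \H^j_{\mathfrak b}(M)$, and the induction finishes the general case.

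I expect no substantial obstacle here; the statement is elementary. The only points that require attention are the use of the Noetherian hypothesis in part 1 to replace $\mathfrak a$ by its radical, and, in part 2, the bookkeeping that coprimality is inherited by powers of ideals and by products, which is exactly what makes the Chinese-remainder-type splittings $1 = a + b$ available at each step. The passage from the functor identities to the statements about $\H^j$ is purely formal.
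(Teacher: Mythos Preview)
Your proof is correct and follows essentially the same approach as the paper: both parts are reduced to identities of the underlying $\Gamma$-functors (using the Noetherian hypothesis for part 1 and a Chinese-remainder-type splitting for part 2), after which the result for $\H^j$ follows by deriving. The only organizational difference is that the paper handles all $t$ ideals at once by writing $\Gamma_{\mathfrak a_1 \cdots \mathfrak a_t}(M)$ as a union of kernels $K(\mathfrak a_1^k \cdots \mathfrak a_t^k)$ and invoking CRT for each $k$, whereas you reduce to $t=2$ by induction and give the splitting $m = am + bm$ explicitly; this is a stylistic rather than substantive distinction.
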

\begin{proof}
1. As $A$ is Noetherian, $ (\sqrt{\mathfrak a})^N \subset \mathfrak a$ for some $N$, so $\Gamma_a = \Gamma_{\sqrt{a}}$.

2. By induction, for any $i \neq j$ and $k \in \mathbb N$ ideals $\mathfrak a^k_i, \mathfrak a^k_j$ are coprime. Letting $K(I) = \{ m \in M \, | \, I m=0 \}$ for an ideal $I$, Chinese remainder theorem gives $K(\mathfrak a_1^k \dots \mathfrak a_t^k) = \bigoplus_{i=1}^t K(\mathfrak a_i^k)$. Next use
$\Gamma_{\mathfrak a_1 \dots \mathfrak a_t}(M) = \bigcup_{k=0}^\infty K (\mathfrak a_1^k \dots \mathfrak a_t^k)$.
\end{proof}

\begin{lem} \label{lem:lc_codim_zero}
Let $\mathfrak a$ be an ideal such that $\dim(A/\mathfrak a)=0$. Then for any $A$-module $M$ and any $j$ we have a natural isomorphism
\begin{equation}
    \H^j_{\mathfrak a}(M) \cong \bigoplus_{\mathfrak m} \H^j_{\mathfrak m}(M),
\end{equation}
where the sum is over maximal ideals $\mathfrak m \subset A$ containing $\mathfrak a$.
\end{lem}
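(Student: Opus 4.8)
The plan is to reduce the statement to the two parts of Lemma \ref{lem:lc_rad_cop} using the structure theory of zero-dimensional Noetherian rings. First I would observe that $A/\mathfrak{a}$ is a Noetherian ring of Krull dimension zero, hence Artinian; consequently it has only finitely many prime ideals, all of them maximal, and these correspond under the quotient map to exactly the maximal ideals $\mathfrak{m}_1, \dots, \mathfrak{m}_s$ of $A$ containing $\mathfrak{a}$. Since the nilradical of $A/\mathfrak{a}$ is the intersection of its (finitely many, maximal) primes, lifting back to $A$ gives $\sqrt{\mathfrak{a}} = \mathfrak{m}_1 \cap \dots \cap \mathfrak{m}_s$.

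Next I would record that distinct maximal ideals are pairwise coprime, so by the Chinese remainder theorem (or a short induction) the intersection $\mathfrak{m}_1 \cap \dots \cap \mathfrak{m}_s$ coincides with the product $\mathfrak{m}_1 \cdots \mathfrak{m}_s$. Thus $\sqrt{\mathfrak{a}} = \mathfrak{m}_1 \cdots \mathfrak{m}_s$ with the factors pairwise coprime, putting us in position to apply Lemma \ref{lem:lc_rad_cop}.

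Finally I would chain the two parts of that lemma: part 1 gives a natural isomorphism $\H^j_{\mathfrak{a}}(M) \cong \H^j_{\sqrt{\mathfrak{a}}}(M) = \H^j_{\mathfrak{m}_1 \cdots \mathfrak{m}_s}(M)$, and part 2 gives $\H^j_{\mathfrak{m}_1 \cdots \mathfrak{m}_s}(M) \cong \bigoplus_{i=1}^s \H^j_{\mathfrak{m}_i}(M)$. Composing yields the claimed decomposition, with the sum running precisely over the maximal ideals containing $\mathfrak{a}$. Naturality in $M$ is inherited from the naturality of the isomorphisms in Lemma \ref{lem:lc_rad_cop}, which themselves come from the identification $\Gamma_{\mathfrak{a}} = \Gamma_{\sqrt{\mathfrak{a}}}$ of functors together with the direct-sum splitting of $\Gamma_{\mathfrak{m}_1 \cdots \mathfrak{m}_s}$; passing to derived functors preserves these identifications.

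There is no genuine obstacle here: the argument is an assembly of standard facts. The only points deserving a word of care are the implication ``Noetherian plus Krull dimension zero $\Rightarrow$ Artinian $\Rightarrow$ finitely many maximal ideals'' and the identity ``intersection of pairwise coprime ideals equals their product''. If one wants a fully self-contained treatment of naturality, the cleanest route is to have already phrased the isomorphisms of Lemma \ref{lem:lc_rad_cop} as isomorphisms of (universal) $\delta$-functors, since they are induced by isomorphisms of the left-exact functors $\Gamma_{(-)}$; the composite is then automatically a natural isomorphism of cohomological functors in $M$.
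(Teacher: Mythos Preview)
Your proposal is correct and follows essentially the same route as the paper's own proof: identify $\sqrt{\mathfrak a}$ with the intersection of the finitely many maximal ideals containing $\mathfrak a$, note that pairwise coprimality turns the intersection into a product, and then invoke both parts of Lemma~\ref{lem:lc_rad_cop}. Your write-up is simply more detailed (explaining the Artinian step and the naturality) than the paper's terse version.
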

\begin{proof}
We have $\sqrt{\mathfrak a} = \bigcap_{\mathfrak m} \mathfrak m$. Moreover, there exists finitely many maximal ideals containing $\mathfrak a$ and they are pairwise coprime. In particular their intersection coincides with the product. We invoke Lemma \ref{lem:lc_rad_cop}. 
\end{proof}

\begin{lem} \label{lem:Gor_loc_ch}
Let $A$ be a local Gorenstein ring with maximal ideal $\mathfrak m$ and residue field $k$ and let $D$ be the dimension of $A$. Then $\H^j_{\mathfrak m}(A) =0$ for $j \neq D$ and $\H^D_{\mathfrak m}(A)$ is an injective envelope of $k$.
\end{lem}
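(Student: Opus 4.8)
The plan is to apply the $\mathfrak m$-torsion functor $\Gamma_{\mathfrak m}$ to a minimal injective resolution of $A$ and read off the cohomology. By Lemma~\ref{lem:Gor_loc_dim} the injective dimension of $A$ equals $\dim(A) = D$, so there is a minimal injective resolution $0 \to A \to I^0 \to \dots \to I^D \to 0$, and $\H^j_{\mathfrak m}(A)$ is by definition the $j$-th cohomology of the complex $\Gamma_{\mathfrak m}(I^\bullet)$. By the structure theory of injective modules over a Noetherian ring (Bass, \cite{Bass}), each $I^j$ is a direct sum of modules $\E_A(A/\mathfrak p)$, the summand $\E_A(A/\mathfrak p)$ occurring with multiplicity the Bass number $\mu^j(\mathfrak p) = \dim_{k(\mathfrak p)} \Ext^j_{A_{\mathfrak p}}(k(\mathfrak p), A_{\mathfrak p})$.

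First I would compute $\Gamma_{\mathfrak m}$ of each indecomposable summand. Suppose $\mathfrak p \neq \mathfrak m$; since $A$ is local, $\mathfrak p \subset \mathfrak m$, so I may choose $a \in \mathfrak m \setminus \mathfrak p$. The module $\E_A(A/\mathfrak p)$ is an essential extension of the domain $A/\mathfrak p$, on which $a$ acts injectively; it follows that multiplication by $a$ is injective on $\E_A(A/\mathfrak p)$ as well (if $am = 0$ with $m \neq 0$, essentiality produces a nonzero multiple of $m$ lying in $A/\mathfrak p$ and annihilated by $a$, a contradiction), hence no nonzero element is killed by a power of $\mathfrak m$ and $\Gamma_{\mathfrak m}(\E_A(A/\mathfrak p)) = 0$. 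On the other hand, $\Gamma_{\mathfrak m}(\E_A(k))$ is injective by \cite[Proposition~3.88]{Lam} and contains the submodule $k$, which is essential in $\E_A(k)$; therefore $\E_A(k)$ is an essential extension of the injective module $\Gamma_{\mathfrak m}(\E_A(k))$, which forces $\Gamma_{\mathfrak m}(\E_A(k)) = \E_A(k)$. Combining these, $\Gamma_{\mathfrak m}(I^j) \cong \E_A(k)^{\mu^j(\mathfrak m)}$ for every $j$.

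It remains to pin down $\mu^j(\mathfrak m) = \dim_k \Ext^j_A(k,A)$, and here I would invoke Bass's cohomological characterization of Gorenstein local rings \cite{Bass}: for a Gorenstein local ring of dimension $D$ one has $\Ext^j_A(k,A) = 0$ for $j \neq D$ and $\Ext^D_A(k,A) \cong k$ (that is, $A$ has type one), so $\mu^j(\mathfrak m) = \delta_{jD}$. Consequently $\Gamma_{\mathfrak m}(I^\bullet)$ is the complex with $\E_A(k)$ in degree $D$ and $0$ in all other degrees; all of its differentials vanish, giving $\H^j_{\mathfrak m}(A) = 0$ for $j \neq D$ and $\H^D_{\mathfrak m}(A) \cong \E_A(k)$.

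The two torsion computations and the final assembly are routine. The genuine external inputs are the Matlis--Bass description of minimal injective resolutions and Bass's characterization of the Gorenstein condition, both contained in \cite{Bass}, which the paper already cites in the proof of Lemma~\ref{lem:Gor_loc_dim}. The main point requiring care is that no completeness hypothesis on $A$ is needed: I would deliberately avoid Matlis duality and use only the decomposition of injective modules into envelopes $\E_A(A/\mathfrak p)$ and the Bass number formula, both of which hold over an arbitrary Noetherian ring.
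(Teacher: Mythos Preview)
Your argument is correct. The paper itself does not prove this lemma but simply cites \cite[Theorem~11.26]{iyengartwenty}; you have supplied a self-contained proof via the minimal injective resolution and Bass numbers, which is in fact the standard route (and essentially the argument behind the cited theorem). The two nontrivial inputs you isolate---the Matlis--Bass structure theory of injectives over a Noetherian ring and the characterization of local Gorenstein rings by $\Ext^j_A(k,A)=k^{\delta_{jD}}$---are exactly what is needed, and your torsion computations for $\E_A(A/\mathfrak p)$ are clean. Your remark that completeness of $A$ is unnecessary is well placed: nothing in the argument invokes Matlis duality.
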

\begin{proof}
See \cite[Theorem 11.26]{iyengartwenty}.
\end{proof}

\section{\v{C}ech complex} \label{sec:Cech}

Now let $\mathbf t = (t_1,\dots ,t_r)$ be a sequence of elements of $A$ and let $M$ be an $A$-module. We define in terms of its localizations
\begin{equation}
\Cech^0(\mathbf t, M) = M, \qquad \Cech^p(\mathbf t, M) = \bigoplus_{1 \leq i_1 < \dots < i_p  \leq r} M_{t_{i_1}\dots t_{i_p}}.
\end{equation}
If $\varphi \in \Cech^p(\mathbf t, M)$, we let $\varphi_{i_1 \dots i_p}$ be its component in $M_{t_{i_1} \dots t_{i_p}}$ for every sequence $1 \leq i_1 < \dots < i_p \leq r$. A differential $\delta : \Cech^p(\mathbf t , M) \to \Cech^{p+1}(\mathbf t , M)$ is defined by
\begin{equation}
    (\delta \varphi)_{i_0 \dots i_p} = \sum_{j=0}^p (-1)^j \varphi_{i_0 \dots i_{j-1} i_{j+1} \dots i_p},
\end{equation}
in which $\varphi_{i_0 \dots i_{j-1} i_{j+1} \dots i_p}$ is implicitly mapped from $M_{t_{i_0} \dots t_{i_{j-1}} t_{i_{j+1}} \dots t_{i_p}}$ to $M_{t_{i_0} \dots t_{i_p}}$ by the localization homomorphism. This makes $\Cech^\bullet(\mathbf t, M)$ a cochain complex. Its cohomology is denoted by $\CH^\bullet(\mathbf t, M)$ and called \v{C}ech cohomology.

\begin{lem} \label{lem:loc_ch_Cech}
One has $\CH^\bullet(\mathbf t,M) \cong \H^\bullet_{\mathfrak a}(M)$, where $\mathfrak a = (t_1, \dots, t_r)$.
\end{lem}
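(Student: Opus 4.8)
The plan is to identify the functors $M \mapsto \CH^p(\mathbf t, M)$ with the right derived functors of $\Gamma_{\mathfrak a}$, using the characterization of the latter as a universal cohomological $\delta$-functor. Concretely, I would verify three things: that $\Cech^\bullet(\mathbf t, -)$ is an exact functor in $M$ (so that $\CH^\bullet(\mathbf t, -)$ carries a $\delta$-functor structure with natural connecting maps), that $\CH^0(\mathbf t, M) \cong \Gamma_{\mathfrak a}(M)$ naturally, and that $\CH^i(\mathbf t, I) = 0$ for every $i > 0$ and every injective $A$-module $I$. Granting these, the comparison of universal $\delta$-functors forces $\CH^\bullet(\mathbf t, M) \cong \H^\bullet_{\mathfrak a}(M)$.

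Exactness of $\Cech^\bullet(\mathbf t, -)$ is immediate, since each $\Cech^p(\mathbf t, -)$ is a finite direct sum of the localization functors $M \mapsto M_{t_{i_1}\cdots t_{i_p}}$ and the \v Cech differential is natural; thus a short exact sequence of modules gives a short exact sequence of complexes and hence a long exact cohomology sequence. For the degree-zero identification, $\CH^0(\mathbf t, M)$ is the kernel of $M \to \bigoplus_i M_{t_i}$, i.e. the set of $m$ with $t_i^{k_i}m = 0$ for suitable $k_i$ and all $i$; setting $k = \max_i k_i$, a pigeonhole count shows any monomial of degree $rk$ in the $t_i$ contains some $t_i$ to power $\geq k$, so $\mathfrak a^{rk}m = 0$, and conversely $\mathfrak a^N m = 0$ forces $t_i^N m = 0$. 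Hence this kernel is exactly $\Gamma_{\mathfrak a}(M)$, naturally in $M$.

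The crux is the vanishing on injectives. The key lemma I would prove is that for an injective module $I$ and $t \in A$, the map $I \to I_t$ is surjective with kernel $\Gamma_{(t)}(I)$; since $\Gamma_{(t)}(I)$ is again injective (\cite[Proposition~3.88]{Lam}), the sequence $0 \to \Gamma_{(t)}(I) \to I \to I_t \to 0$ splits and $I_t$ is injective. The only genuine computation here is surjectivity: given $m/t^n \in I_t$, the chain $(0 :_A t) \subseteq (0 :_A t^2) \subseteq \cdots$ stabilizes past some $N$ ($A$ is Noetherian), the assignment $t^{n+N}a \mapsto t^N a m$ defines an $A$-linear map $t^{n+N}A \to I$, and injectivity of $I$ extends it to $A \to I$ whose value at $1$ is a preimage of $m/t^n$. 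With this lemma I would finish by induction on $r$: $\Cech^\bullet((t_1,\dots,t_r), M)$ is the total complex of $\Cech^\bullet((t_1,\dots,t_{r-1}), M) \to \Cech^\bullet((t_1,\dots,t_{r-1}), M_{t_r})$, and for $M = I$ injective the target row uses the still-injective module $I_{t_r}$, so by the inductive hypothesis both rows have cohomology concentrated in degree $0$. The long exact sequence of the mapping cone then kills $\CH^i(\mathbf t, I)$ for $i \geq 2$ outright and identifies $\CH^1(\mathbf t, I)$ with the cokernel of $\Gamma_{(t_1,\dots,t_{r-1})}(I) \to \Gamma_{(t_1,\dots,t_{r-1})}(I_{t_r})$; this cokernel vanishes because the split exact sequence $0 \to \Gamma_{(t_r)}(I) \to I \to I_{t_r} \to 0$ stays exact after applying the additive functor $\Gamma_{(t_1,\dots,t_{r-1})}(-)$, and the base case $r = 1$ is exactly surjectivity of $I \to I_t$. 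I expect the surjectivity lemma, together with getting the double-complex bookkeeping in the induction right, to be the main obstacle; the rest is formal $\delta$-functor nonsense.
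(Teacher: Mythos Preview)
Your argument is correct and complete. The paper itself does not prove this lemma but merely cites \cite[Theorem~7.13]{iyengartwenty}, so you have supplied a self-contained proof where the paper defers to the literature. Your route---verifying that $(\CH^p(\mathbf t,-))_{p\geq 0}$ is a $\delta$-functor agreeing with $\Gamma_{\mathfrak a}$ in degree zero and effaceable on injectives---is the standard one, and each step checks out: the surjectivity of $I \to I_t$ via the Noetherian stabilization of annihilators is the genuine input, the injectivity of $I_t$ follows cleanly from the split sequence, and your inductive bookkeeping with the cone decomposition $\Cech^\bullet(\mathbf t,M) \cong \mathrm{Tot}\bigl(\Cech^\bullet(\mathbf t',M) \to \Cech^\bullet(\mathbf t',M_{t_r})\bigr)$ is handled correctly, including the identification of $\CH^1$ with the cokernel and its vanishing via the split sequence. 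The only thing the citation buys over your argument is brevity; your version is more informative.
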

\begin{proof}
See \cite[Theorem 7.13]{iyengartwenty}.
\end{proof}

\begin{lem} \label{lem:loc_ch_localization}
Let $U$ be a multiplicatively closed subset of $A$, $A'= U^{-1} A$ and let $\mathfrak a'$ be the extension of $\mathfrak a$ in $A'$. Then for any $A$-module $M$ we have $\H^j_{\mathfrak a'}(U^{-1} M) \cong U^{-1} \H^j_{\mathfrak a}(M)$. 
\end{lem}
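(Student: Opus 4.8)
The plan is to compute both sides using the \v{C}ech complex and to exploit flatness (exactness) of localization. Since $A$ is Noetherian, choose generators $t_1, \dots, t_r$ of $\mathfrak a$ and set $\mathbf t = (t_1, \dots, t_r)$; by Lemma \ref{lem:loc_ch_Cech} we have $\H^j_{\mathfrak a}(M) \cong \CH^j(\mathbf t, M)$. The images $t_i/1$ generate the extended ideal $\mathfrak a'$, and $A' = U^{-1}A$ is again Noetherian (a localization of a Noetherian ring), so Lemma \ref{lem:loc_ch_Cech} applied over $A'$ gives $\H^j_{\mathfrak a'}(U^{-1}M) \cong \CH^j(\mathbf t/1, U^{-1}M)$, where $\mathbf t/1 = (t_1/1, \dots, t_r/1)$.

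Next I would identify the two \v{C}ech complexes after applying $U^{-1}$. For a single element $t \in A$, localizing first at $U$ and then at $t/1$ is the same as localizing at the multiplicative set generated by $U$ and $t$, so there is a natural isomorphism $U^{-1}(M_{t_{i_1}\cdots t_{i_p}}) \cong (U^{-1}M)_{(t_{i_1}/1)\cdots (t_{i_p}/1)}$. Taking the (finite) direct sum over all index sequences yields $U^{-1}\Cech^p(\mathbf t, M) \cong \Cech^p(\mathbf t/1, U^{-1}M)$, and these isomorphisms are compatible with the \v{C}ech differentials, since those differentials are alternating sums of localization maps and are left unchanged by applying the exact functor $U^{-1}$. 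Hence one obtains an isomorphism of cochain complexes $U^{-1}\Cech^\bullet(\mathbf t, M) \cong \Cech^\bullet(\mathbf t/1, U^{-1}M)$.

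Finally, since $U^{-1}(-)$ is exact it commutes with passage to cohomology, so
\begin{equation}
\CH^j(\mathbf t/1, U^{-1}M) \cong \H^j\bigl(U^{-1}\Cech^\bullet(\mathbf t, M)\bigr) \cong U^{-1}\CH^j(\mathbf t, M).
\end{equation}
Chaining this with the two identifications of the first paragraph gives $\H^j_{\mathfrak a'}(U^{-1}M) \cong U^{-1}\H^j_{\mathfrak a}(M)$, as claimed. There is no real obstacle here: the only point that deserves a line of care is checking that the elementary isomorphism $U^{-1}(M_t)\cong (U^{-1}M)_{t/1}$ is natural in $M$, so that it genuinely assembles into a morphism of complexes; the naturality also shows the resulting isomorphism is functorial, though for the statement of the lemma mere existence of an isomorphism suffices.
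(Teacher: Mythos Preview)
Your proof is correct and follows exactly the approach the paper takes: the paper's own proof is the single sentence ``Follows from Lemma \ref{lem:loc_ch_Cech} because the corresponding property of \v{C}ech cohomology is easy to verify,'' and you have simply spelled out that easy verification in full.
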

\begin{proof}
Follows from Lemma \ref{lem:loc_ch_Cech} because the corresponding property of \v{C}ech cohomology is easy to verify.
\end{proof}

Clearly we have $\Cech^\bullet(\mathbf t, M) = \Cech^{\bullet}(\mathbf t, A) \otimes_A M$. Since modules $\Cech^p(\mathbf t , A)$ are flat, this implies that $\Cech^\bullet(\mathbf t, -)$ takes short exact sequences of modules to short exact sequences of complexes. Hence every short exact sequence of modules induces a long exact sequence in \v{C}ech cohomology.

Let $\varphi \in \Cech^{p+1}(\mathbf t,M)$, $\psi \in \Cech^{q+1}(\mathbf t, N)$ with $p,q \geq 0$. We define the cup product $\varphi \smile \psi \in \Cech^{p+q+1}(\mathbf t, M \otimes_A N)$ by
\begin{equation}
(    \varphi \smile \psi )_{i_0 \dots i_{p+q}} = \varphi_{i_0 \dots i_p} \otimes \psi_{i_p \dots i_{p+q}}.
\end{equation}
It is associative and satisfies the graded Leibniz rule
\begin{equation}
    \delta(\varphi \smile \psi) = \delta \varphi \smile \psi + (-1)^p \varphi \smile \delta \psi,
\end{equation}
hence induces a product $\CH^{p+1}(\mathbf t, M) \otimes_A \CH^{q+1}(\mathbf t, N) \to \CH^{p+q+1}(\mathbf t, M \otimes_A N)$.

Let $\tau : N \otimes_A M \to M \otimes_A N$ be the standard isomorphism. For brevity we denote induced maps of \v{C}ech complexes and in \v{C}ech cohomology with the same symbol. Mimicking formulas in \cite{Steenrod} we define products
\begin{gather}
    \smile_1 \, : \, \Cech^{p+1}(\mathbf t, M) \otimes_A \Cech^{q+1}(\mathbf t,N) \to \Cech^{p+q}(\mathbf t, M \otimes_A N), \\
    (\varphi \smile_1 \psi)_{i_0 \dots i_{p+q-1}} = \sum_{j=0}^{p-1} (-1)^{(p-j)(q+1)} \varphi_{i_0 \dots i_j i_{j+q} \dots i_{p+q-1}} \otimes \psi_{i_j \dots i_{j+q}}  .\nonumber
\end{gather}
They satisfy the following identity:
\begin{align}
    &\varphi \smile \psi - (-1)^{pq} \tau (\psi \smile \varphi) \label{eq:cup_cup1_rel} \\
     = & (-1)^{p+q+1} \left[ \delta(\varphi \smile_1 \psi) - \delta \varphi \smile_1 \psi - (-1)^p \varphi \smile_1 \delta \psi \right]. \nonumber
\end{align}
If $\varphi, \psi$ are cocycles and $[\varphi], [\psi]$ are their cohomology classes, this gives
\begin{equation}
    [\varphi] \smile [\psi ] = (-1)^{pq} \tau \left( [\psi] \smile [\varphi] \right).
\end{equation}
In this sense the cup product is graded commutative. 

\begin{rmk}
The \v{C}ech complex and the cup product depend on the ordering of elements $t_i$. Howeover, cohomology (and the cup product in cohomology) do not. We refer for example to \cite[\href{https://stacks.math.columbia.edu/tag/01FG}{Tag 01FG}]{stacks-project} and discussion in \cite{Steenrod}.
\end{rmk}

\section*{Acknowledgments}
We would like to thank Anton Kapustin for his generous support and guidance. B.Y. is grateful to David Eisenbud and Hai Long Dao for discussions on Gorenstein and Cohen-Macaulay rings, and to David Cox, Anton Kapustin and Hal Schenck, whose advice led us to consider local cohomology. We thank Adam Artymowicz, Arpit Dua, Bailey Gu, Jeongwan Haah, Leszek Hadasz, Tamir Hemo, Xiuqi Ma, Sunghyuk Park and Nikita Sopenko for discussions. Work of B.R. on this project was initiated during his visit in California Institute of Technology. B.R. is grateful for hospitality. The visit was funded by the Kosciuszko Foundation. The American Centre of Polish Culture. Research of B.R. was also supported by the MNS donation for PhD students and young scientists N17/MNS/000040. B.Y. was supported in part by the Simons Foundation and the U.S. Department of Energy, Office of Science, Office of High Energy Physics, under Award Number DE-SC0011632. During final revisions of the paper B.R. was employed at the University of Copenhagen. \\
Data sharing not applicable to this article as no datasets were generated or analysed during the current study.

\printbibliography

\end{document}